\newtheorem{theorem}{Theorem}
\newtheorem{lemma}[theorem]{Lemma}
\newtheorem{corollary}[theorem]{Corollary}
\theoremstyle{definition}
\newtheorem{definition}[theorem]{Definition}
\newtheorem{remark}[theorem]{Remark}
\title{Fluid dynamics on logarithmic lattices}
\author{Ciro S. Campolina\thanks{Instituto de Matem\'{a}tica Pura e Aplicada -- IMPA, 22460-320 Rio de Janeiro, Brazil. E-mails: \texttt{sobrinho@impa.br, alexei@impa.br.}} \and Alexei A. Mailybaev\footnotemark[1]}
\date{}
\begin{document}
	
	\maketitle
	
	\begin{abstract}
		Open problems in fluid dynamics, such as the existence of finite-time singularities (blowup), explanation of intermittency in developed turbulence, etc., are related to multi-scale structure and symmetries of underlying equations of motion.
		Significantly simplified equations of motion, called toy-models, are traditionally employed in the analysis of such complex systems. In such models, equations are modified preserving just a part of the structure believed to be important. Here we propose a different approach for constructing simplified models, in which instead of simplifying equations one introduces a simplified configuration space: velocity fields are defined on  multi-dimensional logarithmic lattices with proper algebraic operations and calculus. Then, the equations of motion retain their \textit{exact original form} and, therefore, naturally maintain most scaling properties, symmetries and invariants of the original systems. Classification of such models reveals a fascinating relation with renowned mathematical constants such as the golden mean and the plastic number. Using both rigorous and numerical analysis, we describe various properties of solutions in these models, from the basic concepts of existence and uniqueness to the blowup development and turbulent dynamics. In particular, we observe strong robustness of the chaotic blowup scenario in the three-dimensional incompressible Euler equations, as well as the Fourier mode statistics of developed turbulence resembling the full three-dimensional Navier-Stokes system.
	\end{abstract}
	
	\section{Introduction}
	
	The theory of multi-scale nonlinear flows and, in particular, the phenomenon of hydrodynamic turbulence comprise a multitude of yet unresolved problems: the global regularity~\cite{fefferman2006existence} and existence of finite-time singularities~\cite{gibbon2008three2}, explanation of intermittency~\cite{frisch1999turbulence} and dissipation anomaly~\cite{eyink2006onsager}, to name a few.
	Many of these problems determine the state-of-the-art in nonlinear science and open new areas in mathematics and physics.
	In these studies, toy-models employed as caricatures of complex phenomena have been proved to be indispensable as the testing ground for new ideas and theories.
	Such models retain some basic features believed to be important, while  the remaining content is simplified as much as possible.
	The conventional simplifications are related to reducing the spatial dimension, e.g., the one-dimensional Burgers equation~\cite{burgers1948mathematical} or the Constantin-Lax-Majda model~\cite{constantin1985simple} with further generalizations~\cite{okamoto2008generalization,choi2017finite}.
	The number of degrees of freedom can be drastically decreased by exploring the cascade ideas in the so-called shell models of turbulence~\cite{biferale2003shell}.
	In these models, multi-scale properties are mimicked by geometrical progressions of scales, resulting in the popular GOY~\cite{gledzer1973system,ohkitani1989temporal} and Sabra models~\cite{l1998improved}, the reduced wave vector set approximation (REWA)~\cite{eggers1991does,grossmann1996developed} and tree models~\cite{benzi1993random,benzi19971}, as well as more sophisticated geometric constructions~\cite{gurcan2017nested,gurcan2019spiral}. 
	Toy-models rely on the intuitive decision of what unimportant properties of the original system can be neglected.
	Of course, dealing with open problems, such decision has the risk of missing essential features of motion.
	Especially, this concerns neglected symmetries and conserved quantities, since fluid systems are known to possess highly nontrivial (infinite dimensional) symmetry groups and conservation laws~\cite{zakharov1997hamiltonian,majda2002vorticity}, e.g., the Kelvin Circulation Theorem.
	
	In the present work, we propose a different approach for constructing simplified models, in which instead of simplified equations one introduces a simplified configuration space with proper algebraic operations and calculus.
	For this purpose, we employ velocity fields defined on discrete multi-dimensional lattices with logarithmically distributed nodes.
	These lattices are designed such that the equations of motion can be used in their exact original form and, as a consequence, the symmetry groups and conservation laws automatically carry over to the new system.
	The resulting models possess much higher degree of similarity to the exact equations as compared to conventional toy-models and, at the same time, share the property of being easily accessible for numerical analysis.
	
	The paper is divided logically into two parts.
	The first part consists of Sections~\ref{SEC:loglatt}--\ref{SEC:generalized}.
	Here we classify logarithmic lattices on which the functional operations with necessary properties can be introduced.
	This classification reveals interesting connections with well-known mathematical constants, associating the two representative lattice spacings with the golden mean and the plastic number.
	We prove that the product of two fields cannot be associative on logarithmic lattices, but has the property of associativity in average.
	With this limitation, the technique is applicable to any system of partial differential equations with quadratic nonlinearities and quadratic or linear conservation laws or other integral characteristics.
	Fortunately, this is sufficient for the applicability of our approach to many fundamental models in fluid dynamics.
	
	The second part includes Sections~\ref{SEC:ideal_flow},~\ref{SEC:viscous_flow} with the Appendix, and contains applications of the developed technique to specific equations of fluid dynamics.
	Here many properties of solutions are proved following classical derivations in fluid dynamics, as a consequence of the designed similarity of configuration spaces.
	This refers not only to the basic symmetries and inviscid invariants like, e.g., energy and helicity, but also to a number of fine properties such as incompressibility and conservation of circulation.
	In this way, we demonstrate basic properties such as local-in-time existence and uniqueness of strong solutions and the blowup criterion.
	Then, we proceed with the numerical study.
	Our central numerical result concerns the blowup problem for incompressible 3D Euler equations, where we demonstrate surprisingly strong robustness of the chaotic blowup scenario~\cite{campolina2018chaotic} with respect to the choice of the logarithmic lattice.
	We also show that, for the 3D Navier-Stokes equations at large Reynolds numbers, our models demonstrate chaotic regimes with the classical properties of developed turbulence.
	Here, velocity fields on logarithmic lattices must be considered as analogues of Fourier-transposed velocity fields in the original model. 
	
	The paper is organized as follows.
	We classify logarithmic lattices in Section~\ref{SEC:loglatt}, introduce the calculus on such lattices in Section~\ref{SEC:calculus}, and provide some generalizations in Section~\ref{SEC:generalized}.
	Section~\ref{SEC:ideal_flow} is devoted to the blowup problem in incompressible ideal flows, and Section~\ref{SEC:viscous_flow} to the numerical study of developed turbulence.
	We draw  conclusions in the final section.
	Appendix discusses the Burgers equation, where connections to existing shell models are given, as well as some perspectives for compressible flows.
	
	\section{Logarithmic lattices}
	\label{SEC:loglatt}
	
	\begin{figure*}[t]
		\centering
		\includegraphics[width=.65\textwidth]{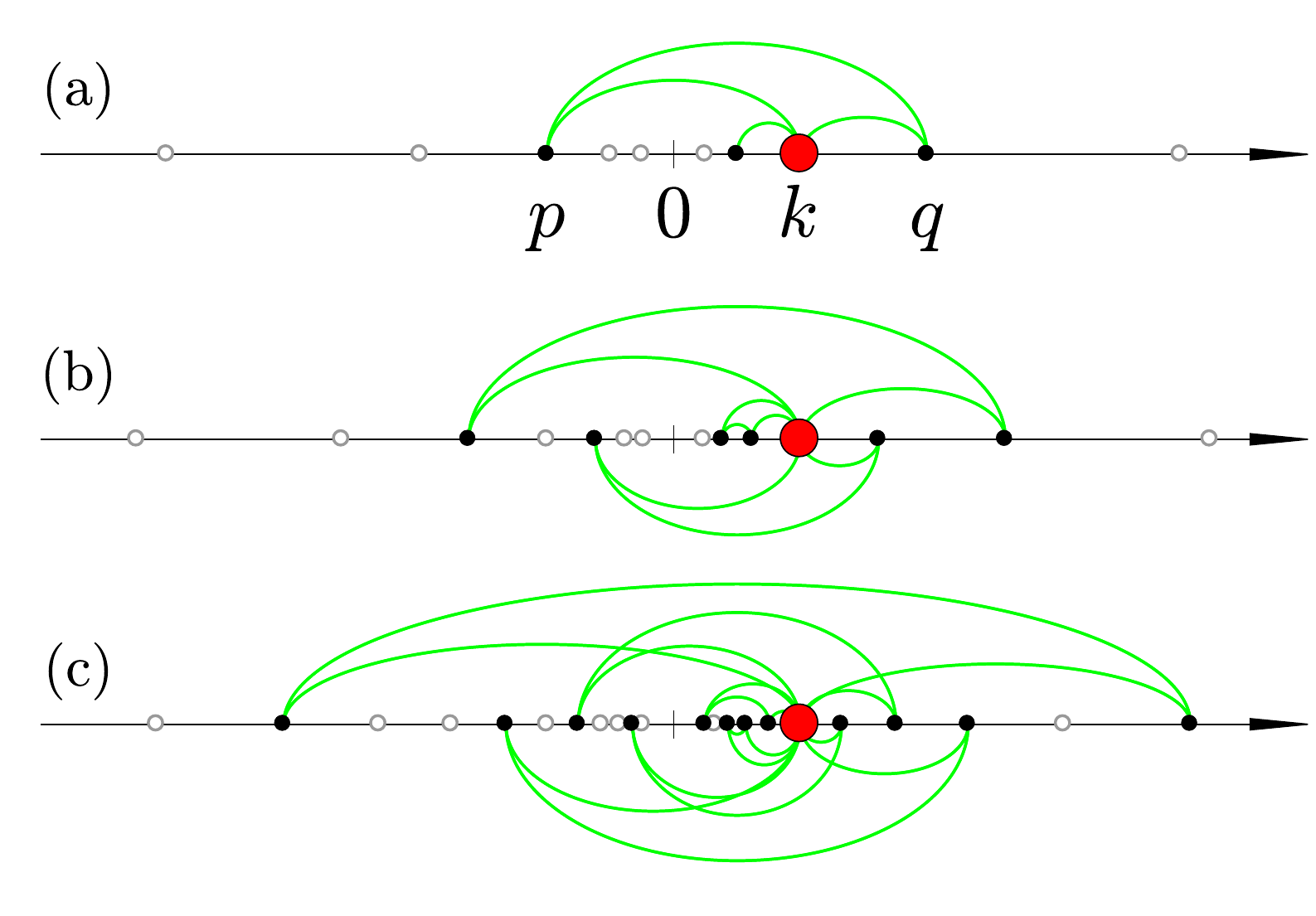}
		\caption{Triad interactions on logarithmic lattices with different spacing factors: (a) $\lambda = 2$; (b) $\lambda = \varphi$, the golden mean; (c) $\lambda = \sigma$, the plastic number. The red node $k \in \mathbb{\Lambda}$ can be decomposed into sums $k = p + q$, where all possible $p,q \in \mathbb{\Lambda}$ are shown by the green lines. All figures are given in the same scale.}
		\label{FIG:triads}
	\end{figure*}
	
	In this section, we perform a systematic study of logarithmic lattices with certain geometric properties, providing the domain on which the dynamical models shall be defined in the next sections. We start with one-dimensional lattices, similar to those used in shell models, and then consider the multi-dimensional case. 
	
	Given a real number $\lambda>1$, the \textit{logarithmic lattice} with spacing factor $\lambda$ is the set
	\begin{equation}
	\mathbb{\Lambda} = \{ \pm \lambda^n\}_{n \in \mathbb{Z}},
	\end{equation}
	consisting of positive and negative integer powers of $\lambda$ -- see Fig.~\ref{FIG:triads}.
	This set has two properties important for applications.
	First, $\mathbb{\Lambda}$ is scale-invariant, i.e., $\mathbb{\Lambda} = k \mathbb{\Lambda}$ for any $k \in \mathbb{\Lambda}$. 
	Secondly, the points of the lattice grow geometrically with $n$. 
	Thus, with only a few nodes we span a large range of scales.
	However, logarithmic lattices are not closed under addition as $p + q \notin \mathbb{\Lambda}$ for general $p,q \in \mathbb{\Lambda}$.
	Three points $k,p,q \in \mathbb{\Lambda}$ on a logarithmic lattice form a \textit{triad} if $k = p + q$. 
	In this case, we say that $k$ \textit{interacts} with $p$ and $q$.
	The lattice is called \textit{nondegenerate} if every two nodes interact through a finite sequence of triads.
	We are interested in a twofold task:
	\begin{enumerate}[label=(\roman*)]
		\item to determine which spacings $\lambda$ provide nondegenerate lattices, and
		\item to classify all triads of nondegenerate lattices.
	\end{enumerate}
	Because of the scale invariance, it is sufficient to describe the triads at unity, i.e., $1 = p + q$.
	
	\begin{table}[t]
		\begin{subtable}{.35\textwidth}
			\caption{} 
			\centering 
			\begin{tabular}{c c c c} 
				\hline\hline 
				$i$&$\phantom{-}1$&$\phantom{-}2$&$3$ \\
				\hline
				$p_i$&$\phantom{-}2$&$-1$&$1/2$ \\ 
				$q_i$&$-1$&$\phantom{-}2$&$1/2$ \\
				\hline\hline
			\end{tabular}
			\label{TAB:DN}
		\end{subtable}\hspace*{.05\textwidth}
		\begin{subtable}{.6\textwidth}
			\caption{} 
			\centering 
			\begin{tabular}{c c c c c c c} 
				\hline\hline 
				$i$&$\phantom{-}1$&$\phantom{-}2$&$3$&$4$&$5$&$6$ \\
				\hline
				$p_i$&$\phantom{-}\lambda^b$&$-\lambda^a$&$\phantom{-}\lambda^{b-a}$&$-\lambda^{-a}$&$\lambda^{-b}$&$\lambda^{a-b}$ \\ 
				$q_i$&$-\lambda^a$&$\phantom{-}\lambda^b$&$-\lambda^{-a}$&$\phantom{-}\lambda^{b-a}$&$\lambda^{a-b}$&$\lambda^{-b}$ \\
				\hline\hline
			\end{tabular}
			\label{TAB:general}
		\end{subtable}\vspace*{20pt}
		\begin{subtable}{\textwidth}
			\caption{} 
			\centering 
			\begin{tabular}{c c c c c c c c c c c c c} 
				\hline\hline 
				$i$&$\phantom{-}1$&$\phantom{-}2$&$3$&$4$&$5$&$6$&$\phantom{-}7$&$\phantom{-}8$&$9$&$10$&$11$&$12$ \\
				\hline
				$p_i$&$\phantom{-}\sigma^3$&$-\sigma$&$\sigma^2$&$-\sigma^{-1}$&$\sigma^{-3}$&$\sigma^{-2}$&
				$\phantom{-}\sigma^5$&$-\sigma^4$&$\phantom{-}\sigma^{\phantom{-}}$&$-\sigma^{-4}$&$\sigma^{-5}$&$\sigma^{-1}$ \\
				$q_i$&$-\sigma^{\phantom{-}}$&$\phantom{-}\sigma^3$&$-\sigma^{-1}$&$\sigma^2$&$\sigma^{-2}$&$\sigma^{-3}$&
				$-\sigma^4$&$\phantom{-}\sigma^5$&$-\sigma^{-4}$&$\sigma$&$\sigma^{-1}$&$\sigma^{-5}$ \\
				\hline\hline
			\end{tabular}
			\label{TAB:extended}
		\end{subtable}
		\caption{Triads at the unity $1 = p_i + q_i$ for different spacing factors: \subref{TAB:DN}~$\lambda = 2$;
			\subref{TAB:general}~$\lambda$ satisfies $1 = \lambda^b-\lambda^a$ for integers $0 \leq a <b$. For example, $\lambda = \varphi$ is the golden mean for $a = 1$ and $b = 2$;
			\subref{TAB:extended}~$\lambda = \sigma$, the plastic number.}
		\label{TAB:triads_unity}
	\end{table}
	
	Lattices $\mathbb{\Lambda}$ with nontrivial triad interactions exist only for certain values of $\lambda$.
	Let us first present three specific nondegenerate lattices.
	The lattice with $\lambda = 2$ has three possible types of triads described in Tab.~\ref{TAB:DN} and Fig.~\ref{FIG:triads}(a).
	For any $k \in \mathbb{\Lambda}$, these triads are $k = \lambda k - k$, $k = -k + \lambda k$ and $k = \lambda^{-1}k + \lambda^{-1}k$.
	The next example is $\lambda = \varphi$, where $\varphi = (1+\sqrt{5})/2 \approx 1.618$ is the \textit{golden mean}.
	All triads are obtained from permutations and rescalings of the identity $1 = \varphi^2 - \varphi$, providing the richer sample of interactions in Tab.~\ref{TAB:general}.
	In this case, each point interacts with six different neighbors -- see Fig.~\ref{FIG:triads}(b).
	Another example is provided by the \textit{plastic number} of Dom Van der Laan~\cite{laan1960nombre}
	\begin{equation}
	\sigma = \frac{\sqrt[3]{9+\sqrt{69}}+\sqrt[3]{9-\sqrt{69}}}{\sqrt[3]{18}} \approx 1.325,
	\label{plastic}
	\end{equation}
	which is the common real solution of equations $\sigma^3 - \sigma - 1 = 0$ and $\sigma^5 - \sigma^4 - 1 = 0$.
	The lattice with spacing $\lambda = \sigma$ has twelve types of interacting triads, enumerated in Tab.~\ref{TAB:extended} and depicted in Fig. \ref{FIG:triads}(c).
	Because immediate neighbors are coupled, these are examples of nondegenerate lattices.
	On the other hand, if $\lambda = \sqrt{2}$, the lattice is degenerate: there are no interactions that couple points $\pm 2^n$ with $\pm 2^n\sqrt{2}$.
	
	The main result of this section is the classification of nondegenerate logarithmic lattices with respect to their triad interactions, given by the following
	
	\begin{theorem}
		The following three cases describe all nondegenerate lattices with spacing factors $\lambda \geq 1.05$:
		\begin{enumerate}[label=(\roman*)]
			\item $\lambda = 2$, and all triads at the unity are given in Tab. \ref{TAB:DN};
			\item $\lambda = \sigma$, the plastic number~\eqref{plastic}, and all triads at the unity are given in Tab. \ref{TAB:extended};
			\item $\lambda$ satisfies $1 = \lambda^b - \lambda^a$, where $(a,b)$ are mutually prime integers not larger than $62$, excluding also the pairs~$(a,b) = (1,3)$ and~$(4,5)$. All triads at the unity are given in Tab.~\ref{TAB:general}.
		\end{enumerate}
		\label{THE:all_triads}
	\end{theorem}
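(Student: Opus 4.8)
The plan is to convert the purely combinatorial notion of a triad into a single Diophantine-type condition on $\lambda$, and then to use the hypothesis $\lambda\ge 1.05$ to make the ensuing search finite. \textbf{Step 1 (reduction to trinomials).} First I would classify the triads at unity by the signs appearing in $1=\pm\lambda^{m}\pm\lambda^{n}$. The two-minus pattern is impossible; the mixed pattern gives $1=\lambda^{m}-\lambda^{n}$, and the two-plus pattern $1=\lambda^{-s}+\lambda^{-t}$ (with $s\le t$) becomes, after multiplying by $\lambda^{t}$, the same relation $\lambda^{t}-\lambda^{t-s}=1$, the equal-exponent subcase $s=t$ being $\lambda^{t}=2$. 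Normalizing by scale invariance, every nontrivial triad therefore forces $\lambda$ to be a root of the trinomial $f(x)=x^{b}-x^{a}-1$ for integers $0\le a<b$. Since $f'(x)=x^{a-1}(b\,x^{b-a}-a)>0$ on $[1,\infty)$ and $f(1)=-1$, this real root $>1$ is unique; conversely, rescaling and permuting the identity $1=\lambda^{b}-\lambda^{a}$ reproduces exactly the six triads of Tab.~\ref{TAB:general}.

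Next I would pin down the equivalence between nondegeneracy and coprimality. By scale invariance a single relation connects the node $\lambda^{n}$ to $\lambda^{n\pm a},\lambda^{n\pm b},\lambda^{n\pm(b-a)}$, with sign flips available, so the set of exponents reachable from the unity is the subgroup $\gcd(a,b)\,\mathbb{Z}$. Hence a lattice generated by one relation is nondegenerate if and only if $\gcd(a,b)=1$. This is precisely why $\lambda=\sqrt{2}$, the root of the pair $(a,b)=(0,2)$, is degenerate: its interactions never leave the even sublattice, exactly as noted before the statement.

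Then I would use monotonicity of $f$ to bound the exponents. Since $f$ is increasing, $\lambda\ge 1.05$ is equivalent to $1.05^{\,b}-1.05^{\,a}\le 1$. For fixed $b$ the left-hand side is largest at $a=b-1$, where the condition reads $0.05\cdot 1.05^{\,b-1}\le 1$, i.e. $1.05^{\,b-1}\le 20$, which forces $b\le 62$; any larger gap $b-a$ only tightens the bound. Thus only finitely many coprime pairs $0\le a<b\le 62$ survive, each determining one candidate $\lambda\ge 1.05$ carrying at least the six triads of Tab.~\ref{TAB:general}.

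Finally I would isolate the exceptional lattices. A spacing $\lambda$ carries extra triads precisely when it satisfies a second independent relation, and there are two genuine coincidences: $\lambda=2$ additionally admits the self-sum $\tfrac12+\tfrac12=1$, collapsing Tab.~\ref{TAB:general} into Tab.~\ref{TAB:DN}; and the factorization $x^{5}-x^{4}-1=(x^{3}-x-1)(x^{2}-x+1)$ shows that $\sigma$ solves both $(a,b)=(1,3)$ and $(4,5)$, producing the twelve triads of Tab.~\ref{TAB:extended}. The main obstacle I anticipate is proving that these are the \emph{only} coincidences for $\lambda\ge 1.05$: that among the finitely many trinomials above no two share a real root $>1$ except the pair $\{(1,3),(4,5)\}$ (both giving $\sigma$), and that no feasible $\lambda\neq 2$ also satisfies a self-sum $\lambda^{c}=2$. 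Thanks to the bound $b\le 62$ this reduces to a finite, computer-assisted verification; removing $(1,3)$ and $(4,5)$ into case~(ii) leaves exactly the coprime pairs of case~(iii), each with the six triads of Tab.~\ref{TAB:general}, which completes the classification.
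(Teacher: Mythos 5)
Your overall route coincides with the paper's: reduce every triad at unity to a root of a trinomial $\lambda^b-\lambda^a-1$ with integers $0\le a<b$, tie nondegeneracy to $\gcd(a,b)=1$, use monotonicity of $\lambda\mapsto\lambda^b-\lambda^a-1$ evaluated at $\lambda=1.05$ to force $b\le 62$, and finish with a finite computer-assisted check of root coincidences. The extra details you make explicit -- the sign-pattern case analysis in Step 1 and the factorization $x^5-x^4-1=(x^3-x-1)(x^2-x+1)$ explaining why $\sigma$ solves both $(1,3)$ and $(4,5)$ -- are correct and welcome.

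There is, however, a concrete gap in the scope of your final verification. You propose to check coincidences only among the \emph{coprime} pairs (plus the self-sums $\lambda^c=2$). That is not enough. To certify that a non-coprime spacing such as $\sqrt2$ (pair $(0,2)$) or $\sigma^{1/m}$ is genuinely degenerate, you must rule out that its root also satisfies some coprime trinomial, which would reconnect the sublattices; symmetrically, to certify that each case-(iii) spacing carries \emph{only} the six triads of Tab.~\ref{TAB:general}, you must rule out that its root also satisfies a non-coprime trinomial, which would add triads. Hence the check must run over \emph{all} pairs $0\le a<b\le 62$, as in Lemma~\ref{LEM:solution_trinomials}. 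This also corrects your predicted outcome: it is false that $\{(1,3),(4,5)\}$ is the only coincidence with a common root $\ge 1.05$ -- there are four more, $\lambda=\sigma^{1/m}$ for $m=2,\dots,5$, shared by the pairs $(m,3m)$ and $(4m,5m)$; these are harmless only because both members of each pair carry the same divisor $m$, so the lattice stays degenerate, but that must be observed explicitly. (A minor wording slip: for fixed $b$ the quantity $1.05^b-1.05^a$ is \emph{smallest}, not largest, at $a=b-1$; that is precisely why $a=b-1$ is the binding case, and your conclusion $b\le 62$ stands.)
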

	
	\begin{remark}
		We used the lower bound $\lambda \geq 1.05$ in order to make the numerically assisted proof possible.
		Still, we conjecture that Theorem~\ref{THE:all_triads} is valid for arbitrary $\lambda > 1$,  with no upper bound for $a$ and $b$ in the item ($iii$).
		A partial result in this direction is the Theorem proved in~\cite{aarts2001morphic}, which states that the plastic number is the only common root greater than unity of any two distinct polynomials $\lambda^a - \lambda^{a-1} - 1$ and $\lambda^b - \lambda - 1$ with $a,b \geq 2$.
	\end{remark}
	
	\begin{proof}
		Let us consider the trinomial equation
		\begin{equation}
		p_{a,b}(\lambda) = \lambda^b - \lambda^a - 1 = 0,
		\label{EQ:trinomial}
		\end{equation}
		with integer powers $0 \leq a < b$.
		This equation has a single root in the interval $\lambda > 1$ because the function $p_{a,b}(\lambda)$ is strictly increasing in $\lambda \in [1,\infty)$ with image $[-1,\infty)$.
		Relation \eqref{EQ:trinomial} yields the three equalities
		\begin{equation}
		1 = \lambda^b - \lambda^a = \lambda^{b-a} - \lambda^{-a} = \lambda^{-b} + \lambda^{a-b}.
		\label{EQ:trinomial_rescaled}
		\end{equation}
		There are six triads $1 = p + q$ corresponding to expressions~\eqref{EQ:trinomial_rescaled} as described in Tab.~\ref{TAB:general}.
		Let us show that the lattice is degenerate when $a$ and $b$ have a common divisor $m > 1$.
		For the sublattice $\mathbb{\Lambda}^\prime = \{ \pm \lambda^{mn} \}_{n \in \mathbb{Z}}$ to be coupled with the remaining points, the spacing $\lambda$ should satisfy another trinomial equation~\eqref{EQ:trinomial} with exponents $(a',b')$ not multiples of $m$. However, this is not possible, as it follows from case (b) of Lemma~\ref{LEM:solution_trinomials} below.
		This leaves only the mutually prime pairs $(a,b)$ to our consideration.
		Now, the Theorem is a direct consequence of Lemma~\ref{LEM:solution_trinomials}, where all triads are generated by the relations in~\eqref{EQ:trinomial_rescaled}: the case~\textit{(i)} corresponds to $(a,b) = (0,1)$; the case~\textit{(ii)} to $(a,b) = (1,3)$ and $(4,5)$; and the case~\textit{(iii)} to all other possibilities.
	\end{proof}
	
	\begin{lemma}
		Consider two distinct trinomials~\eqref{EQ:trinomial} with integer powers $(a_1,b_1)$ and $(a_2,b_2)$, where $0 \leq a_1 < a_2$.
		These trinomials have a common root $\lambda \geq 1.05$ if and only if
		\begin{enumerate}[label=(\alph*)]
			\item $\lambda = \sigma$ is the plastic number~\eqref{plastic}. In this case, $(a_1,b_1) = (1,3)$ and $(a_2,b_2) = (4,5)$ are mutually prime;
			\item $\lambda = \sigma^{1/m}$ with $m = 2,\dots,5$. In this case $(a_1,b_1) = (m,3m)$ and $(a_2,b_2) = (4m,5m)$ have the same common divisor $m$.
		\end{enumerate}
		\label{LEM:solution_trinomials}
	\end{lemma}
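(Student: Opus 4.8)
The plan is to establish the two implications separately. For the reverse (``if'') direction the key is the factorization
\[
\lambda^5-\lambda^4-1=(\lambda^2-\lambda+1)(\lambda^3-\lambda-1),
\]
in which the cubic factor is the defining polynomial of the plastic number $\sigma$ from \eqref{plastic}, while the quadratic factor $\lambda^2-\lambda+1$ has no real roots. Hence $p_{1,3}$ and $p_{4,5}$ share the common real root $\sigma\geq 1.05$, which is case~(a). The substitution $\lambda\mapsto\lambda^m$ carries $p_{1,3}$ and $p_{4,5}$ into $p_{m,3m}$ and $p_{4m,5m}$ and sends the common root to $\sigma^{1/m}$; the range $m=2,\dots,5$ is exactly the set of integers $m\geq 2$ for which $\sigma^{1/m}\geq 1.05$, giving case~(b).

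For the forward (``only if'') direction I would first bound the exponents. If $\lambda\geq 1.05$ is a common root then $\lambda^{b_i}-\lambda^{a_i}=1$ with $a_i\leq b_i-1$, and therefore
\[
1.05^{\,b_i-1}\leq\lambda^{\,b_i-1}\leq\frac{\lambda^{b_i}-\lambda^{a_i}}{\lambda-1}=\frac{1}{\lambda-1}\leq 20 ,
\]
where the middle inequality is just $\lambda^{a_i}\leq\lambda^{b_i-1}$. This yields $b_i-1\leq\log 20/\log 1.05<62$, so $b_i\leq 62$; this is the origin of the bound $62$ in the statement and leaves only finitely many trinomials to inspect.

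Next I would perform a scaling reduction. Writing $g=\gcd(a_1,b_1,a_2,b_2)$ and $\mu=\lambda^{g}$, the reduced exponents $(a_i/g,b_i/g)$ define two distinct trinomials whose common root is $\mu\geq\lambda\geq 1.05$ and whose four exponents are coprime as a set. If one shows that the only such primitive coincidence has exponents $\{1,3\}$ and $\{4,5\}$ with $\mu=\sigma$, then $\lambda=\sigma^{1/g}$ and both within-pair divisors equal $g$; this is precisely the dichotomy (a)$/$(b), with the admissible range $g\leq 5$ coming once more from $\sigma^{1/g}\geq 1.05$. Since the two pairs necessarily share the same within-pair divisor, this is also exactly the statement needed in the proof of Theorem~\ref{THE:all_triads} to forbid a sublattice $\{\pm\lambda^{mn}\}$ from decoupling.

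Finally, the primitive classification is settled by a certified computation over the finitely many trinomials with $b\leq 62$: isolate the unique root $r_{a,b}>1$ of each $p_{a,b}$ inside a rigorous interval, discard those with $r_{a,b}<1.05$, and compare all surviving pairs. I expect the main obstacle to be rigor rather than mere enumeration: disjoint root enclosures certify distinctness at once, but one must guarantee that every pair with overlapping enclosures is a genuine coincidence and not a near miss. Because there are finitely many trinomials and all unequal roots are separated by a positive gap, refining the working precision below that gap terminates and separates every non-coinciding pair; each remaining coincidence is then confirmed exactly, for instance through the factorization above for $\{1,3\},\{4,5\}$, so that no accidental clustering of roots can masquerade as a shared algebraic number.
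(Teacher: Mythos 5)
Your proposal is correct and follows essentially the same strategy as the paper: bound the exponents so that only finitely many trinomials can have a root $\lambda\geq 1.05$, identify the plastic-number coincidence (together with its $m$-th-root rescalings), and rule out every other pair by a certified finite computation. The only differences are cosmetic --- you derive $b\leq 62$ analytically from $\lambda^{b-1}\leq 1/(\lambda-1)\leq 20$ where the paper uses monotonicity of $\lambda(a,b)$ plus the check $p_{62,63}(1.05)>0$, and you certify distinctness of roots by disjoint interval enclosures where the paper evaluates the product $p_{a,b}\,p_{a',b'}$ exactly at the midpoint of the two approximate roots and verifies it is negative.
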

	\begin{proof}
		Let us denote by $\lambda(a,b)$ the unique root of~\eqref{EQ:trinomial} in the interval $\lambda >1$.
		Note that $\lambda(a,b) < \lambda(a',b)$ if $0 \leq a' < a$ because the polynomials $p_{a,b}(\lambda)$ are strictly increasing starting at $p_{a,b}(1) = p_{a',b}(1) = -1$ and $p_{a,b}(\lambda) < p_{a',b}(\lambda)$ for $\lambda >1$.
		Therefore, if we fix the exponent~$b$ of trinomial $p_{a,b}(\lambda)$, then $\lambda(a,b)$ is maximized when $a = b-1$.
		Next, $\lambda(b-1,b)$ form a decreasing sequence with respect to $b$, since $p_{b-1,b}(\lambda) < p_{b,b+1}(\lambda)$.
		Finally, one may check that $p_{62,63}(1.05) > 0$, so $\lambda(62,63)<1.05$.
		Therefore $\lambda(a,b) \geq 1.05$ only if $b < 63$.
		This bound leaves a finite number of trinomials to our consideration. 
		Since the plastic number $\sigma$ satisfies $\sigma^5 - \sigma^4 - 1 = \sigma^3 - \sigma - 1 = 0$, we obtain the two cases~\textit{(a)} and~\textit{(b)} of the Lemma.
		It remains to check that trinomials with different powers have no common root.
		This was accomplished via Validated Numerics~\cite{moore2009introduction}, a computer assisted proof using the following strategy.
		Given two trinomials $p_{a,b}$ and $p_{a',b'}$, we estimate their respective roots $\lambda_1$ and $\lambda_2$ with Newton's Method up to machine double precision. Next, using Symbolic Algebra~\cite{cohen2003computer}, we evaluate exactly the product $p_{a,b}p_{a',b'}$ at the middle point $\lambda_m = (\lambda_1 + \lambda_2)/2$ of their approximate roots. For all cases, it was verified a negative number at this point, which guarantees that $\lambda(a,b) \neq \lambda(a',b')$.
	\end{proof}
	
	\begin{figure*}[t]
		\centering
		\includegraphics[width=\textwidth]{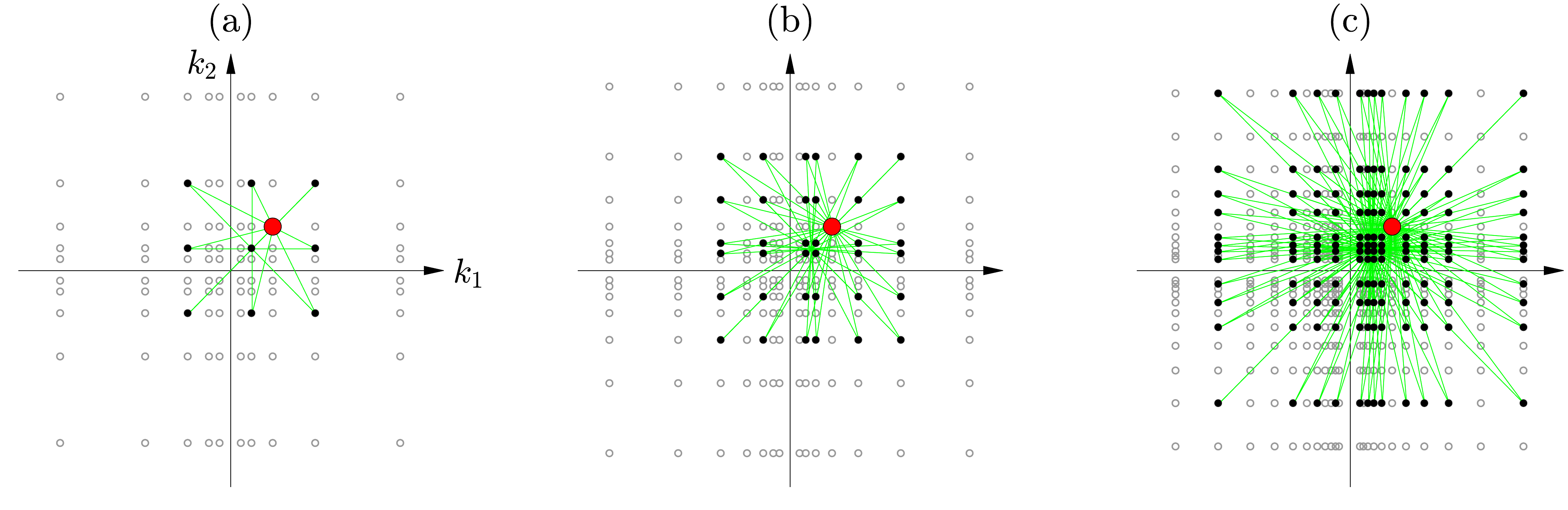}
		\caption{Triad interactions on two-dimensional logarithmic lattices for different spacing factors: (a) $\lambda = 2$; (b) $\lambda = \varphi$, the golden mean; (c) $\lambda = \sigma$, the plastic number. The red node $\mathbf{k}$ can be decomposed into sums $\mathbf{k} = \mathbf{p} + \mathbf{q}$ where all possible nodes $\mathbf{p}$ and $\mathbf{q}$ are indicated by the green lines. All figures are given in the same scale. From (a) to (c), both the density of nodes and the number of triads per each node increase.}
		\label{FIG:2D_lattice}
	\end{figure*}
	
	The above results for one-dimensional logarithmic lattices can be extended to higher dimensions.
	The \textit{$d$-dimensional logarithmic lattice} with spacing $\lambda>1$ is given by the cartesian power $\mathbb{\Lambda}^d = \mathbb{\Lambda} \times \cdots \times \mathbb{\Lambda}$ (with $d$ factors), i.e., $\mathbf{k} = (k_1,\dots,k_d) \in \mathbb{\Lambda}^d$ if each component $k_j \in \mathbb{\Lambda}$.
	Three points $\mathbf{k},\mathbf{p},\mathbf{q} \in \mathbb{\Lambda}^d$ on the lattice form a triad if $\mathbf{k} = \mathbf{p} + \mathbf{q}$.
	All nondegenerate lattices $\mathbb{\Lambda}^d$ are given by the spacings $\lambda$ listed in Theorem~\ref{THE:all_triads} and all triads are combinations of the one-dimensional triads for each component -- see Fig.~\ref{FIG:2D_lattice} for the two-dimensional picture.
	
	\section{Calculus on logarithmic lattices}
	\label{SEC:calculus}
	
	Let us consider complex-valued functions $f(\mathbf{k}) \in \mathbb{C}$ on a nondegenerate logarithmic lattice $\mathbb{\Lambda}^d$, where $\mathbf{k} \in \mathbb{\Lambda}^d$ is interpreted as a wave vector in Fourier space.
	Motivated by the property of the Fourier transform of a real-valued function, we impose the reality condition
	\begin{equation}
	f(-\mathbf{k}) = \overline{f(\mathbf{k})},
	\label{EQ:reality}
	\end{equation}
	where the bar denotes complex conjugation.
	Thus, $f(\mathbf{k})$ is analogous to the Fourier transform of a real function, and now we are going to introduce basic operations.
	
	Functions $f(\mathbf{k})$ possess a natural structure of a linear space with real scalars.
	Since we are working with Fourier-space representation, the spatial derivative $\partial_j$ in the $j$-th direction is defined by the Fourier factor,
	\begin{equation}
	\partial_j f(\mathbf{k}) = ik_j f(\mathbf{k}), \quad j =1,\dots,d,
	\label{EQ:fourier_factors}
	\end{equation}
	where $i$ is the imaginary unit.
	Clearly, higher order derivatives are products of such Fourier factors.
	Given two functions $f$ and $g$, one defines their \textit{inner product} naturally as
	\begin{equation}
	(f,g) = \sum_{\mathbf{k} \in \mathbb{\Lambda}^d} f(\mathbf{k})\overline{g(\mathbf{k})}.
	\label{DEF:inner}
	\end{equation}
	Just like the $L^2$-inner product of real functions, expression~\eqref{DEF:inner} is real valued because of reality condition~\eqref{EQ:reality}.
	
	The notion of differentiability on the lattice retains some important calculus identities, like the integration by parts
	\begin{equation}
	(\partial_j f, g) = - (f, \partial_j g), \quad j =1,\dots,d,
	\label{EQ:by_parts}
	\end{equation}
	which follows from the fact that the inner product \eqref{DEF:inner} couples $f(\mathbf{k})$ and $\overline{g(\mathbf{k})} = g(-\mathbf{k})$.
	We next define the product of two functions on the logarithmic lattice, which in Fourier space is understood as a convolution.
	Here and below, all functions are assumed to be absolutely summable
	\begin{equation}
	\label{EQ:L1}
	\sum_{\mathbf{k} \in \mathbb{\Lambda}^d} |f(\mathbf{k})| < \infty.
	\end{equation}
	
	\begin{definition}
		A \textit{product on the logarithmic lattice} $\mathbb{\Lambda}^d$, denoted by $\ast$, is a binary operation between absolutely summable functions on $\mathbb{\Lambda}^d$, which satisfies the following properties:
		\begin{enumerate}[label=\textit{(P.\arabic*)}] 
			\item \emph{(Reality condition)}
			$
			\!
			\begin{aligned}[t]
			(f \ast g)(-\mathbf{k}) = \overline{(f \ast g) (\mathbf{k})};
			\end{aligned}
			$ 
			\label{DEF:PROD_reality}
			\item \emph{(Bilinearity)} 
			$
			\!
			\begin{aligned}[t]
			(f + \gamma g) \ast h = f \ast h + \gamma(g \ast h),
			\end{aligned}
			$
			for any $\gamma \in \mathbb{R}$;
			\label{DEF:PROD_bilinearity}
			\item \emph{(Commutativity)}
			$
			\!
			\begin{aligned}[t]
			f \ast g = g \ast f;
			\end{aligned}
			$ 
			\label{DEF:PROD_commutativity}
			\item \emph{(Associativity in average)}
			$
			\!
			\begin{aligned}[t]
			(f \ast g, h ) = (f, g \ast h);
			\end{aligned}
			$
			\label{DEF:PROD_associativity_avg}
			\item \emph{(Leibniz rule)}
			$
			\!
			\begin{aligned}[t]
			\partial_j (f \ast g) = \partial_j f \ast g + f \ast \partial_j g, \ \text{for } j = 1,\dots,d;
			\end{aligned}
			$
			\label{DEF:PROD_leibniz}
		\end{enumerate}
		Additional properties, which are related to the spatial symmetries of the lattice, may be imposed:
		\begin{enumerate}[label=\textit{(S.\arabic*)}] 
			\item \emph{(Scaling invariance)}
			$
			\!
			\begin{aligned}[t]
			\delta_{\lambda}(f \ast g) = \delta_{\lambda} f \ast \delta_{\lambda} g,
			\end{aligned}
			$
			where we denoted $\delta_\lambda f (\mathbf{k}) = f(\lambda \mathbf{k})$, the rescaling of $f$ by the lattice spacing $\lambda$;
			\label{DEF:PROD_scaling}
			\item \emph{(Isotropy and parity)} 
			$
			\!
			\begin{aligned}[t]
			(f\ast g) \circ R = (f \circ R) \ast (g \circ R),
			\end{aligned}
			$
			where we denoted $(f \circ R)(\mathbf{k}) = f(R\mathbf{k})$ and $R \in \mathsf{O_h}$ is any element of the group of cube symmetries; cf.~\cite[Sec.~93]{landau2013quantum} -- it includes all transformations $(k_1,\dots,k_d) \mapsto (\pm k_{\alpha_1},\dots,\pm k_{\alpha_d})$, where $(\alpha_1,\dots,\alpha_d)$ are permutations of $(1,\dots,d)$.
			\label{DEF:PROD_isotropy}
		\end{enumerate}
		\label{DEF:PROD}
	\end{definition}
	\begin{remark}
		Lebniz rule readily implies \textit{translation invariance} on the lattice, expressed as $\tau_{\pmb{\xi}} (f \ast g) = \tau_{\pmb{\xi}} f \ast \tau_{\pmb{\xi}} g$, where $\tau_{\pmb{\xi}} f (\pmb{k}) = e^{-i\pmb{k} \cdot \pmb{\xi}} f(\pmb{k})$ mimics the physical-space translation (in Fourier representation) by any vector $\pmb{\xi} \in \mathbb{R}^d$;
	\end{remark}
	
	The required properties for the product are chosen in order to mimic as much as possible a common pointwise product (or, equivalently, a convolution in Fourier space) of real functions defined in the Euclidean space.
	The symmetries of scaling invariance \ref{DEF:PROD_scaling} and isotropy \ref{DEF:PROD_isotropy} can only be satisfied in a discrete form, because only discrete scalings and rotations are symmetries of the lattice itself.
	More importantly, we will prove shortly that the product cannot be associative.
	Nevertheless, the weaker property of associativity in average \ref{DEF:PROD_associativity_avg} can be satisfied, which turns out to be sufficient for our purposes.
	
	We first establish the general form of the product on one-dimensional lattices.
	Later, it will be generalized to higher dimensions.
	Bilinearity \ref{DEF:PROD_bilinearity}, Leibniz rule \ref{DEF:PROD_leibniz} and scaling invariance  \ref{DEF:PROD_scaling} yield the following general form of the product
	\begin{equation}
	(f \ast g)(k) = \sum_{p_j + q_j = 1} c_j f(p_j k)g(q_j k), \quad k \in \mathbb{\Lambda}.
	\label{EQ:general_product}
	\end{equation}
	Here, the Leibniz rule restricts the product to triad interactions, which are determined by the factors $p_j$ and $q_j$ from Tab.~\ref{TAB:triads_unity} for each lattice of Theorem~\ref{THE:all_triads}.
	The independence of the coefficients $c_j$ on $k$ is a consequence of the scaling invariance. Next, reality condition~\ref{DEF:PROD_reality} and parity $k \mapsto -k$, from~\ref{DEF:PROD_isotropy}, imply that the coefficients $c_j$ are real.
	Since the sum in~\eqref{EQ:general_product} has a finite number of terms, the product of two absolutely summable functions is absolutely summable.
	
	As an example, consider the case $\lambda = 2$.
	Then, for the three triads in Tab.~\ref{TAB:DN}, formula~\eqref{EQ:general_product} becomes
	\begin{equation}
	(f \ast g)(k) = c_1 f(2 k)g(-k) + c_2 f(-k)g(2 k) + c_3 f(2^{-1}k)g(2^{-1}k).
	\label{EQ:product_DN_general}
	\end{equation}
	We are interested in non-trivial products~\eqref{EQ:general_product}, where the coefficients $c_j$ do not vanish simultaneously.
	
	\begin{theorem}
		Let $\mathbb{\Lambda}$ be one of the logarithmic lattices~\textit{(i)--(iii)} described in Theorem~\ref{THE:all_triads}. 
		For the lattices~\textit{(i)} and~\textit{(iii)}, the product with properties~\ref{DEF:PROD_reality}--\ref{DEF:PROD_leibniz} and symmetries~\ref{DEF:PROD_scaling} and~\ref{DEF:PROD_isotropy} is unique, up to a real prefactor which we set to unity, and has the form
		\begin{equation}
		(f \ast g)(k) = \sum_{p_j + q_j = 1} f(p_j k) g(q_j k), \quad k \in \mathbb{\Lambda},
		\label{EQ:product_1D_1}
		\end{equation}
		where the coupling factors $p_j$ and $q_j$ are given in Tab.~\ref{TAB:triads_unity}.
		For the lattice~\textit{(ii)}, the general form of the product is
		\begin{equation}
		(f \ast g)(k) = c_1 \sum_{\substack{p_j + q_j = 1 \\[2pt] j = 1,\dots,6}}f(p_j k) g(q_j k) + c_2 \sum_{\substack{p_j + q_j = 1 \\[2pt] j = 7,\dots,12}}f(p_j k) g(q_j k), \quad k \in \mathbb{\Lambda},
		\label{EQ:product_1D_2}
		\end{equation}
		where $c_1$ and $c_2$ are arbitrary real prefactors.
		\label{THE:product_classification}
	\end{theorem}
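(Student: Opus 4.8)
The plan is to pin down the real coefficients $c_j$ in the general form~\eqref{EQ:general_product} by feeding commutativity~\ref{DEF:PROD_commutativity} and associativity in average~\ref{DEF:PROD_associativity_avg} into the trilinear form $T(f,g,h):=(f\ast g,h)$. First I would note that, since $f\ast g$ and $h$ both satisfy the reality condition, $T$ is real-valued, so $(f,g\ast h)=\overline{(g\ast h,f)}=(g\ast h,f)$. Together with~\ref{DEF:PROD_associativity_avg} this yields the cyclic identity $T(f,g,h)=T(g,h,f)$, while~\ref{DEF:PROD_commutativity} yields $T(f,g,h)=T(g,f,h)$. A cyclic rotation and one transposition generate the full symmetric group $S_3$, so $T$ is \emph{totally symmetric} in its three arguments.

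Next I would convert this symmetry into linear relations among the $c_j$. Writing
\begin{equation}
T(f,g,h)=\sum_{k\in\mathbb{\Lambda}}\sum_j c_j\,f(p_jk)\,g(q_jk)\,h(-k)
\label{EQ:plan_Tsum}
\end{equation}
and reindexing by the triple of arguments $(a,b,c)=(p_jk,q_jk,-k)$, which satisfies $a+b+c=0$, one gets $T(f,g,h)=\sum_{a+b+c=0}C(a,b,c)\,f(a)g(b)h(c)$, where $C(a,b,c)=c_j$ for the unique triad with $(p_j,q_j)=(-a/c,-b/c)$. The key remark is that, by scale invariance $\mathbb{\Lambda}=k\mathbb{\Lambda}$, \emph{every} zero-sum triple of lattice points renormalizes---upon dividing all entries by the negative of a chosen entry---to a genuine triad at unity listed in Tab.~\ref{TAB:triads_unity}; hence no monomial in~\eqref{EQ:plan_Tsum} is spurious, and total symmetry of $C$ is exactly the statement that $c_j$ is constant along the $S_3$-action on triads generated by the transposition $s:(p,q)\mapsto(q,p)$ and the renormalized rotation $r:(p,q)\mapsto(1/q,-p/q)$. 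In particular no $c_j$ is forced to vanish, and the number of independent real prefactors equals the number of $S_3$-orbits of triads.

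It then remains to count orbits for each lattice, using that Theorem~\ref{THE:all_triads} already supplies the complete triad list. Applying $r$ and $s$ starting from a single triad, one checks directly that for~\textit{(i)} the three triads of Tab.~\ref{TAB:DN} form one orbit (e.g.\ $r:(\tfrac12,\tfrac12)\mapsto(2,-1)\mapsto(-1,2)$), and that for~\textit{(iii)} the six triads of Tab.~\ref{TAB:general} form one orbit (e.g.\ $r:(\lambda^{b},-\lambda^{a})\mapsto(-\lambda^{-a},\lambda^{b-a})\mapsto\cdots$). A single orbit means a single prefactor, which normalized to unity gives the unique product~\eqref{EQ:product_1D_1}. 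For~\textit{(ii)} the same formal computation shows that the six triads $j=1,\dots,6$ of Tab.~\ref{TAB:extended} (those from $1=\sigma^{3}-\sigma$) form one orbit and the six triads $j=7,\dots,12$ (those from $1=\sigma^{5}-\sigma^{4}$) form another, producing the two arbitrary prefactors $c_1,c_2$ in~\eqref{EQ:product_1D_2}.

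The main obstacle I anticipate is precisely this last point: one must verify that for $\lambda=\sigma$ the two orbits do \emph{not} merge, since a single merged orbit would collapse the product to one free parameter and contradict~\eqref{EQ:product_1D_2}. I would settle this by observing that $r$ and $s$ send pairs of signed powers of $\sigma$ to pairs of signed powers of $\sigma$, and that the twelve triads of Tab.~\ref{TAB:extended} are pairwise distinct because $\sigma^{m}\neq\sigma^{m'}$ for $m\neq m'$ when $\sigma>1$. Since the formal orbit starting in $\{1,\dots,6\}$ closes onto pairs involving the powers $\sigma^{\pm2},\sigma^{\pm3}$, which are absent from the family $\{7,\dots,12\}$ involving $\sigma^{\pm4},\sigma^{\pm5}$, the two size-six orbits share no triad and are therefore genuinely disjoint.
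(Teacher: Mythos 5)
Your proposal is correct and follows essentially the same route as the paper: both start from the reduced form~\eqref{EQ:general_product} and impose commutativity~\ref{DEF:PROD_commutativity} and associativity in average~\ref{DEF:PROD_associativity_avg} as linear constraints forcing equalities among the coefficients $c_j$, then solve that system for each lattice. Your reformulation of those constraints as total $S_3$-symmetry of the trilinear form $(f\ast g,h)$, with the free parameters counted by $S_3$-orbits of triads at unity, is a cleaner and more explicit organization of the ``linear equations with unit coefficients'' that the paper solves case by case, and your orbit computations (one orbit for cases \textit{(i)} and \textit{(iii)}, two disjoint orbits for \textit{(ii)}) match the paper's conclusions.
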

	\begin{proof}
		Properties~\ref{DEF:PROD_reality}, \ref{DEF:PROD_bilinearity}, \ref{DEF:PROD_leibniz} and symmetries~\ref{DEF:PROD_scaling} and~\ref{DEF:PROD_isotropy} were already used to reduce the product to the form~\eqref{EQ:general_product}.
		One may check that the remaining conditions~\ref{DEF:PROD_commutativity} and~\ref{DEF:PROD_associativity_avg} for the product can be written as linear equations with unit coefficients with respect to the variables $c_j$.
		The system of such equations can be solved explicitly, leading to formulas~\eqref{EQ:product_1D_1} and~\eqref{EQ:product_1D_2}.
		Consider, for example, the case $\lambda = 2$, whose product expression is given by~\eqref{EQ:product_DN_general}.
		Commutativity~\ref{DEF:PROD_commutativity} requires $c_1 = c_2$.
		On the other hand, associativity in average~\ref{DEF:PROD_associativity_avg} enforces all coefficients to be the same. 
	\end{proof}
	
	Recall that the associativity condition is valid in average; see property~\ref{DEF:PROD_associativity_avg} in Definition~\ref{DEF:PROD}. At the same time, the products cannot be associative, as it follows from
	
	\begin{corollary}\label{THE:product_associativity}
		The non-trivial products described in Theorem~\ref{THE:product_classification} are not associative: condition $(f \ast g) \ast h = f \ast (g \ast h)$ is not valid for all functions $f$, $g$ and $h$.
	\end{corollary}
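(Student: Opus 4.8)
The plan is to extract the associativity defect directly from the product formula by comparing the two iterated products as trilinear forms in $f,g,h$. Writing the one-dimensional product \eqref{EQ:product_1D_1} as $(f\ast g)(k)=\sum_j c_j f(p_jk)g(q_jk)$ with $p_j+q_j=1$ and real coefficients $c_j$, a single substitution gives
\[
\bigl((f\ast g)\ast h\bigr)(k)=\sum_{j,l}c_jc_l\,f(p_lp_jk)\,g(q_lp_jk)\,h(q_jk),
\]
\[
\bigl(f\ast(g\ast h)\bigr)(k)=\sum_{j,m}c_jc_m\,f(p_jk)\,g(p_mq_jk)\,h(q_mq_jk).
\]
By bilinearity \ref{DEF:PROD_bilinearity} it suffices to test the identity on single-node indicators, so I would compare, for a fixed output $k$ and a fixed ordered triple $(x,y,z)$ with $x+y+z=k$, the coefficient of the monomial $f(x)g(y)h(z)$ on both sides. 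On the left this coefficient is nonzero exactly when the intermediate node $x+y$ makes $(x+y,z)$ a rescaled triad of $k$ and $(x,y)$ a triad of $x+y$; on the right, exactly when $(x,y+z)$ is a triad of $k$ and $(y,z)$ a triad of $y+z$. Thus associativity is equivalent to the statement that, for every such triple, the left- and right-nested triad conditions hold simultaneously, and a single triple satisfying one nesting but not the other breaks it.

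The key observation is that one can force the right nesting to be \emph{vacuously} impossible by choosing $y+z=0$: since $0\notin\mathbb{\Lambda}$, no triad can have $y+z$ as a node, so the right-hand coefficient vanishes identically, and it remains only to exhibit a valid left nesting with $y=-z$. For the lattices \textit{(i)} and \textit{(iii)}, where $\lambda^b-\lambda^a=1$, I would take the triple $(x,y,z)=\bigl(1,\,-\lambda^{a-b},\,\lambda^{a-b}\bigr)$ with $k=1$. Here $x+y=1-\lambda^{a-b}=\lambda^{-b}$, so that $(x+y,z)=(\lambda^{-b},\lambda^{a-b})$ is the all-positive triad of Tab.~\ref{TAB:general}, while $(x,y)/(x+y)=(\lambda^b,-\lambda^a)$ is again a triad; hence the left coefficient equals $c_jc_l=1$ whereas the right coefficient is $0$. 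Choosing $f=\delta_1$, $g=\delta_{-\lambda^{a-b}}$, $h=\delta_{\lambda^{a-b}}$ (the indicators of the respective nodes) therefore yields $\bigl((f\ast g)\ast h\bigr)(1)\neq\bigl(f\ast(g\ast h)\bigr)(1)$. For $\lambda=2$ this is simply $(1,-\tfrac12,\tfrac12)$.

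For the plastic-number lattice \textit{(ii)} the same device applies, now tracking which coefficient group each triad belongs to in \eqref{EQ:product_1D_2}. If $c_1\neq0$ I would use $(1,-\sigma^{-2},\sigma^{-2})$, whose left nesting runs through the triads $(\sigma^{-3},\sigma^{-2})$ and $(\sigma^3,-\sigma)$ (both in the first group, using $\sigma^3=\sigma+1$), giving left coefficient $c_1^2$; if instead $c_2\neq0$ I would use $(1,-\sigma^{-1},\sigma^{-1})$, whose left nesting runs through $(\sigma^{-5},\sigma^{-1})$ and $(\sigma^5,-\sigma^4)$ (both in the second group, using $\sigma^5=\sigma^4+1$), giving $c_2^2$. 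Since a non-trivial product has $(c_1,c_2)\neq(0,0)$, associativity fails in every case.

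The one place I expect genuine bookkeeping — the main care-point rather than a deep obstacle — is confirming that the offending monomial receives no competing contribution that could cancel it. This is resolved by the \emph{forcing} argument above: in the left expression the intermediate node is pinned to $p_jk=x+y$ and the outer factor to $q_jk=z$, so the pair of triads $((x+y)/k,\,z/k)$ and $(x,y)/(x+y)$ is uniquely determined and the left coefficient comes from a single term. Finally, to honor the reality condition \eqref{EQ:reality} one may replace each indicator $\delta_a$ by an admissible combination $\delta_a\pm\delta_{-a}$ and verify, by the same triad bookkeeping at the output node, that the target contribution survives; alternatively, one simply notes that $\ast$ is defined on all absolutely summable functions, for which the indicators $\delta_a$ are already admissible.
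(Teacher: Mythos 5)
Your proof is correct, but it reaches the contradiction by a genuinely different (and in one respect lighter) route than the paper. Both arguments ultimately test associativity on single-node indicators and exhibit a triple whose left-nested product is nonzero while the right-nested one vanishes identically. The paper chooses $p=\lambda^{2b}$, $q=-\lambda^{a+b}$, $r=-\lambda^{a}$, so that $p+q=\lambda^b$ and $p+q+r=1$ lie on the lattice but $q+r=-(1+\lambda^b)\lambda^a$ does not; verifying $1+\lambda^b\notin\mathbb{\Lambda}$ requires invoking the (computer-assisted) Lemma~\ref{LEM:solution_trinomials} to rule out $\lambda$ being a common root of two trinomials~\eqref{EQ:trinomial}. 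You instead arrange $y+z=0$, which is off the lattice for free since $0\notin\mathbb{\Lambda}$ for the lattices of Theorem~\ref{THE:all_triads}; this removes the dependence on Lemma~\ref{LEM:solution_trinomials} entirely and makes the counterexample purely combinatorial, at the small cost of checking that the left nesting $1-\lambda^{a-b}=\lambda^{-b}$ passes through genuine triads of Tab.~\ref{TAB:general} (it does, via the entries $i=5$ and $i=1$, and the contributing index pair is unique as you argue). You are also more explicit than the paper on the plastic-number case: where the paper says only that ``a similar argument applies'' to~\eqref{EQ:product_1D_2}, you split into the cases $c_1\neq 0$ and $c_2\neq 0$ and route the left nesting through triads of the corresponding coefficient group, which is genuinely needed since a counterexample built from first-group triads alone would fail when $c_1=0$. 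Your closing remarks on the reality condition are fine but not strictly necessary: Definition~\ref{DEF:PROD} defines $\ast$ on all absolutely summable functions, and the paper's own proof likewise uses bare indicators.
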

	\begin{proof}
		Let us show that there are $p,q,r \in \mathbb{\Lambda}$ such that $p+q, p+q+r \in \mathbb{\Lambda}$, but $q+r \notin \mathbb{\Lambda}$.
		From the proof of Theorem~\ref{THE:all_triads}, there are integers $0 \leq a < b$ such that the spacing $\lambda$ satisfies $1 = \lambda^b - \lambda^a$.
		Take $p = \lambda^{2b}$, $q = -\lambda^{a+b}$ and $r = -\lambda^{a}$.
		Then $p+q+r = 1 \in \mathbb{\Lambda}$ and $p+q = \lambda^b \in \mathbb{\Lambda}$.
		We claim that $q+r = -(1+\lambda^b)\lambda^a \notin \mathbb{\Lambda}$, which is equivalent to the condition $1+\lambda^b \notin \mathbb{\Lambda}$. 
		Indeed, suppose that $1+\lambda^b \in \mathbb{\Lambda}$.
		In this case, $1+\lambda^b = \lambda^m$ for some integer $m>b$. 
		It follows that $\lambda$ is a common root of trinomials~\eqref{EQ:trinomial} with $(a_1,b_1) = (a,b)$ and $(a_2,b_2) = (b,m)$.
		However, such a solution is forbidden by Lemma~\ref{LEM:solution_trinomials}, leading to a contradiction.
		Now, indicating by $\delta_k$ the function with $\delta_k(k) = 1$ and zero elsewhere, it follows from expression~\eqref{EQ:product_1D_1} that $(\delta_p \ast \delta_q) \ast \delta_r = \delta_{p+q} \ast \delta_r = \delta_{p+q+r}$, but $\delta_p \ast (\delta_q \ast \delta_r) = \delta_p \ast 0 = 0$.
		A similar argument applies to expression~\eqref{EQ:product_1D_2}.
	\end{proof}
	
	Application of the same ideas for the two and three-dimensional cases yield similar formulas for products on these spaces, but with a larger number of free coefficients.
	For instance, the product on the three-dimensional lattice with spacing $\lambda = \varphi$, the golden mean, has $10$ free real coefficients.
	
	It is useful to give the following expression
	\begin{equation}
	(f \ast g)(\mathbf{k}) = \sum_{\substack{\mathbf{p} + \mathbf{q} = \mathbf{k}\\[2pt] \mathbf{p},\mathbf{q} \in \mathbb{\Lambda}^d}} f(\mathbf{p}) g(\mathbf{q}), \quad \mathbf{k} \in \mathbb{\Lambda}^d,
	\label{EQ:Product_nD}
	\end{equation}
	analogous to~\eqref{EQ:product_1D_1}, which yields a product in any dimension and any lattice.
	
	All operations introduced in this section are implemented in \textsc{LogLatt}, an efficient \textsc{Matlab}\textsuperscript{\circledR} library for the numerical calculus on logarithmic lattices~\cite{campolina2020loglattmatlab}.
	
	\section{Generalized lattices and products}
	\label{SEC:generalized}
	
	In this section we discuss some generalizations of logarithmic lattices, which can be useful for applications.
	In order to mimic non-local interactions, one can add the origin to the logarithmic lattice
	\begin{equation}
	\mathbb{\Lambda} = \{ 0 \} \cup \{\pm \lambda^n \}_{n \in \mathbb{Z}}.
	\end{equation}
	In this case, every point $k \in \mathbb{\Lambda}$ interacts with the zero node: $k = k + 0 = 0 + k$, which provides additional (non-local) terms to the products. 
	The value $f(0)$ is interpreted as the mean value of $f$ in physical space, in analogy with the same value for continuous functions $\hat{F}(0) = \int F(x)dx$.
	
	The same relations (\ref{EQ:reality})--(\ref{EQ:L1}) and Definition~\ref{DEF:PROD} are used to define the product and other operations.
	For example, when $\lambda = 2$, the product~\eqref{EQ:product_1D_1} at $k \neq 0$ generalizes to
	\begin{multline}
	(f \ast g)(k) = [f(2 k)g(-k) + f(-k)g(2 k) + f(2^{-1}k)g(2^{-1}k)]
	+ c[f(k)g(0) + f(0)g(k)],
	\label{EQ:product_origin}
	\end{multline}
	with an arbitrary real parameter $c$.
	The product $f \ast g$ evaluated at $k = 0$ is given by
	\begin{equation}
	(f \ast g)(0) = c\sum_{k \in \mathbb{\Lambda}}f(k)g(-k)
	\label{EQ:product_at_zero}
	\end{equation}
	with the same prefactor $c$, which is the consequence of associativity in average -- see~\ref{DEF:PROD_associativity_avg} of Definition~\ref{DEF:PROD}. It is natural to set $c = 1$, in which case expression~\eqref{EQ:product_at_zero} coincides with the inner product~\eqref{DEF:inner}, i.e., $(f \ast g)(0) = (f,g)$.
	
	Furthermore, we can define \textit{generalized logarithmic lattices} as arbitrary subsets $\mathbb{\Lambda}' \subset \mathbb{\Lambda}^d$ of logarithmically distributed nodes.
	To ensure that functions satisfying the reality condition~\eqref{EQ:reality} can be represented in $\mathbb{\Lambda}'$, we impose the property that if $\mathbf{k} \in \mathbb{\Lambda}'$ then $-\mathbf{k} \in \mathbb{\Lambda}'$.
	This is the case, for example, of a truncated lattice with a finite number of points
	\begin{equation}
	\mathbb{\Lambda}' = \{ 0, \pm 1, \pm \lambda, \dots, \pm \lambda^N \},
	\label{EQ:generalized_lattice}
	\end{equation}
	or the same subset excluding zero.
	Since a generalized lattice $\mathbb{\Lambda}'$ is not necessarily scaling invariant or isotropic, we cannot demand the corresponding product to have these symmetries.
	Therefore, a product on $\mathbb{\Lambda}'$ is an operation satisfying properties~\ref{DEF:PROD_reality}--\ref{DEF:PROD_leibniz} of Definition~\ref{DEF:PROD}.
	In the following Theorem, we provide one natural form of the product that serves for all generalized lattices.
	
	\begin{theorem}
		Let $\mathbb{\Lambda}' \subset \mathbb{\Lambda}^d$ be a generalized $d$-dimensional logarithmic lattice.
		Then, operation
		\begin{equation}
		(f \ast g)(\mathbf{k}) = \sum_{\substack{\mathbf{p} + \mathbf{q} = \mathbf{k}\\[2pt] \mathbf{p},\mathbf{q} \in \mathbb{\Lambda}'}} f(\mathbf{p}) g(\mathbf{q}), \quad \mathbf{k} \in \mathbb{\Lambda}',
		\label{EQ:Product}
		\end{equation}
		defines a product on $\mathbb{\Lambda}'$ with properties~\ref{DEF:PROD_reality}--\ref{DEF:PROD_leibniz}.
		\label{THE:generalized_product}
	\end{theorem}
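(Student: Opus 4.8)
The plan is to verify the five defining properties \ref{DEF:PROD_reality}--\ref{DEF:PROD_leibniz} directly from the explicit convolution formula~\eqref{EQ:Product}, after first checking that the operation is well defined on the class of absolutely summable functions. For well-definedness I would note that for each fixed $\mathbf{k}$ the defining sum converges absolutely, since $\sum_{\mathbf{p}+\mathbf{q}=\mathbf{k}}|f(\mathbf{p})|\,|g(\mathbf{q})| \le \big(\sum_{\mathbf{p}\in\mathbb{\Lambda}'}|f(\mathbf{p})|\big)\,\sup_{\mathbf{q}}|g(\mathbf{q})| < \infty$ (the supremum being finite because $g$ satisfies~\eqref{EQ:L1}), and that $f\ast g$ is again absolutely summable, because interchanging the order of summation gives $\sum_{\mathbf{k}}|(f\ast g)(\mathbf{k})| \le \sum_{\mathbf{p},\mathbf{q}\in\mathbb{\Lambda}'}|f(\mathbf{p})|\,|g(\mathbf{q})| = \big(\sum_{\mathbf{p}}|f(\mathbf{p})|\big)\big(\sum_{\mathbf{q}}|g(\mathbf{q})|\big)<\infty$. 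Throughout, the standing hypothesis that $\mathbf{k}\in\mathbb{\Lambda}'$ implies $-\mathbf{k}\in\mathbb{\Lambda}'$ is what keeps the output inside the admissible class and will be invoked repeatedly.

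The properties \ref{DEF:PROD_bilinearity} and \ref{DEF:PROD_commutativity} are immediate: bilinearity follows from linearity of the absolutely convergent sum in its first argument, and commutativity from swapping the labels $\mathbf{p}\leftrightarrow\mathbf{q}$ in~\eqref{EQ:Product}. For the reality condition \ref{DEF:PROD_reality}, I would change variables $\mathbf{p}\mapsto-\mathbf{p}$, $\mathbf{q}\mapsto-\mathbf{q}$ in the expansion of $(f\ast g)(-\mathbf{k})$ -- admissible precisely because $\mathbb{\Lambda}'$ is symmetric under negation -- and then apply the reality condition~\eqref{EQ:reality} of $f$ and $g$ to rewrite $f(-\mathbf{p})g(-\mathbf{q})$ as $\overline{f(\mathbf{p})}\,\overline{g(\mathbf{q})}$, which reassembles into $\overline{(f\ast g)(\mathbf{k})}$. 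For the Leibniz rule \ref{DEF:PROD_leibniz}, I would use the Fourier-factor definition $\partial_j f(\mathbf{k})=ik_j f(\mathbf{k})$ from~\eqref{EQ:fourier_factors} together with the identity $k_j=p_j+q_j$ valid on the constraint $\mathbf{p}+\mathbf{q}=\mathbf{k}$; splitting the prefactor $ik_j=ip_j+iq_j$ distributes the derivative across the two factors and yields $\partial_j(f\ast g)=\partial_j f\ast g+f\ast\partial_j g$.

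The substantive step is associativity in average \ref{DEF:PROD_associativity_avg}, for which I would introduce the symmetric trilinear form
\[
T(f,g,h)=\sum_{\substack{\mathbf{a}+\mathbf{b}+\mathbf{c}=\mathbf{0}\\ \mathbf{a},\mathbf{b},\mathbf{c}\in\mathbb{\Lambda}'}} f(\mathbf{a})\,g(\mathbf{b})\,h(\mathbf{c}),
\]
and show that both $(f\ast g,h)$ and $(f,g\ast h)$ equal $T(f,g,h)$. Expanding the inner product~\eqref{DEF:inner} of $f\ast g$ with $h$ and using $\overline{h(\mathbf{k})}=h(-\mathbf{k})$ from~\eqref{EQ:reality}, the constraint becomes $\mathbf{p}+\mathbf{q}+(-\mathbf{k})=\mathbf{0}$, which is exactly $T$ after the relabeling $\mathbf{a}=\mathbf{p}$, $\mathbf{b}=\mathbf{q}$, $\mathbf{c}=-\mathbf{k}$; the symmetry of $\mathbb{\Lambda}'$ guarantees $-\mathbf{k}\in\mathbb{\Lambda}'$ so the index set matches. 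An identical computation starting from $(f,g\ast h)$, now applying the reality condition to $g$ and $h$, produces the same form $T$. The main obstacle is purely analytic: these are in general infinite multiple sums, so I must justify interchanging the order of summation and eliminating the constrained variable. This is furnished by absolute convergence, since $\sum|f(\mathbf{a})|\,|g(\mathbf{b})|\,|h(\mathbf{c})| \le \big(\sum|f|\big)\big(\sum|g|\big)\sup|h| <\infty$, after which Fubini's theorem legitimizes all the rearrangements and renders $T$ manifestly invariant under the relabelings used on each side. Together with the elementary checks above, this establishes that~\eqref{EQ:Product} satisfies \ref{DEF:PROD_reality}--\ref{DEF:PROD_leibniz}.
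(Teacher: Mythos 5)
Your proposal is correct and follows essentially the same route as the paper: the elementary properties are verified directly from the convolution formula, and associativity in average is obtained by rewriting both $(f\ast g,h)$ and $(f,g\ast h)$ as the same symmetric triple sum over $\mathbf{p}+\mathbf{q}+\mathbf{r}=\mathbf{0}$ with all indices in $\mathbb{\Lambda}'$. The only difference is that you spell out the absolute-convergence and Fubini justifications, which the paper leaves implicit.
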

	\begin{proof}
		Properties~\ref{DEF:PROD_reality}--\ref{DEF:PROD_leibniz} are directly verified, except for the associativity in average~\ref{DEF:PROD_associativity_avg}, which follows from the fact that both $(f \ast g, h)$ and $(f,g \ast h)$ can be written in the same form as
		\begin{equation}
		\sum_{\substack{\mathbf{p} + \mathbf{q} + \mathbf{r} = \mathbf{0}\\[2pt]\mathbf{p},\mathbf{q},\mathbf{r} \in \mathbb{\Lambda}'}} f(\mathbf{p}) g(\mathbf{q}) h(\mathbf{r}).
		\end{equation}
	\end{proof}
	
	Note that when $\lambda = 2$ and we let $N \to \infty$, the lattice~\eqref{EQ:generalized_lattice} establishes a decimation of Fourier space for $2\pi$-periodic functions, in the spirit of e.g.~\cite{frisch2012turbulence,buzzicotti2016lagrangian,buzzicotti2016intermittency}.
	Other examples, also for $\lambda = 2$ are~\cite{eggers1991does,grossmann1996developed}.
	The application of lattice operations to the one-dimensional Burgers equation reproduces some well-known shell models of turbulence;
	see Appendix~\hyperref[app:A]{A} for the details.
	
	\section{Ideal incompressible flow}\label{SEC:ideal_flow}
	
	In this and next sections, we make sense of incompressible hydrodynamics on logarithmic lattices by applying the operations introduced previously.
	We will consider a $d$-dimensional logarithmic lattice $\mathbb{\Lambda}^d$, for $d = 2$ or $3$, where
	\begin{equation}
	\mathbb{\Lambda} = \{ 0, \pm 1, \pm \lambda, \pm \lambda^2, \dots \},
	\label{EQ:Euler_lattice}
	\end{equation}
	for some $\lambda$ from Theorem~\ref{THE:all_triads}.
	This lattice mimics Fourier space of a system with largest integral scale $L \sim 2 \pi$ corresponding to $|\mathbf{k}| \sim 1$.
	Our derivations below are equally valid for the case $\mathbb{\Lambda} = \{ \pm 1, \pm \lambda, \pm \lambda^2, \dots \}$, where zero is excluded from~\eqref{EQ:Euler_lattice}.
	
	This section is subdivided as follows.
	Section~\ref{SEC:governing_equations} introduces the incompressible Euler equations on the logarithmic lattice and enumerates their main properties.
	Section~\ref{SEC:local_theory} establishes rigorous results concerning the local-in-time existence and uniqueness of strong solutions and the criterion for singularity formation in this model.
	Section~\ref{SEC:blowup} presents a numerical study of blowup in the three-dimensional equations.
	
	\subsection{Basic equations, symmetries and conservation laws}\label{SEC:governing_equations}
	
	We represent the velocity field $\mathbf{u}(\mathbf{k},t) = (u_1,\dots,u_d) \in \mathbb{C}^d$ as a function of the wave vector $\mathbf{k} \in \mathbb{\Lambda}^d$ and time $t \in \mathbb{R}$.
	Similarly we define the scalar pressure $p(\mathbf{k},t)$.
	The inner product for vector fields will be understood as $(\mathbf{u},\mathbf{v}) = (u_1,v_1)+\cdots+(u_d,v_d)$ with the inner product (\ref{DEF:inner}) for each scalar component.
	All functions are supposed to satisfy the reality condition~\eqref{EQ:reality}.
	
	For the governing equations, we use the exact form of the incompressible Euler equations
	\begin{equation}
	\partial_t \mathbf{u} + \mathbf{u} \ast \nabla \mathbf{u} = -\nabla p, \quad \nabla \cdot \mathbf{u} = 0,
	\label{EQ:Euler}
	\end{equation}
	which are defined upon the logarithmic lattice $\mathbb{\Lambda}^d$, with the conventional notation $(\mathbf{u} \ast \nabla \mathbf{v})_i = \sum_{j = 1}^{d} u_j \ast \partial_j v_i$ for the product $\ast$ from Theorem~\ref{THE:generalized_product}.
	Introducing the vorticity $\pmb{\omega} = \nabla \times \mathbf{u}$ and taking the curl of equations~\eqref{EQ:Euler}, we may write the Euler equations in vorticity formulation
	\begin{equation}
	\partial_t \pmb{\omega} + \mathbf{u} \ast \nabla \pmb{\omega} = \pmb{\omega} \ast \nabla \mathbf{u}.
	\label{EQ:Euler_vorticity}
	\end{equation}
	In the case of vanishing average velocity $\mathbf{u}(\mathbf{0}) = \mathbf{0}$ at $\mathbf{k} = \mathbf{0}$, the velocity field is recovered from the vorticity through the Biot-Savart law
	\begin{equation}
	\mathbf{u}(\mathbf{k}) = \frac{i\mathbf{k} \times \pmb{\omega}(\mathbf{k})}{|\mathbf{k}|^2} \quad \text{for} \ \mathbf{k} \neq \mathbf{0}; \quad \mathbf{u}(\mathbf{0}) = \mathbf{0}.
	\label{EQ:Biot_Savart}
	\end{equation}
	Moreover, if we take the divergence of equation~\eqref{EQ:Euler} and use the incompressibility condition, then the pressure may be obtained from the velocities by solving the Poisson equation
	\begin{equation}
	-\Delta p = \nabla \cdot (\mathbf{u} \ast \nabla \mathbf{u}).
	\label{EQ:pressure_elliptic}
	\end{equation}
	
	The proposed model retains many properties of the continuous Euler equations, which rely only upon the structure of the equations and elementary operations on the logarithmic lattice, as described in the previous sections.
	These include the basic symmetry groups.
	\begin{theorem}[Symmetry groups of the Euler equations on the logarithmic lattice]\label{THE:Euler_symmetries}
		Let $\mathbf{u}(\mathbf{k},t)$, $p(\mathbf{k},t)$  be a solution of the Euler equations~\eqref{EQ:Euler}.
		Then the following transformations also yield solutions:
		\begin{enumerate}[label=\textit{(E.\arabic*)}]
			\item (Time translations)
			$
			\!
			\begin{aligned}[t]
			\mathbf{u}^\tau(\mathbf{k},t) = \mathbf{u}(\mathbf{k},t+\tau),
			\end{aligned}
			$
			for any $\tau \in \mathbb{R}$;
			\label{SYM:time}
			\item (Space translations)
			$
			\!
			\begin{aligned}[t]
			\mathbf{u}^{\pmb{\xi}}(\mathbf{k},t) 
			= e^{-i\mathbf{k}\cdot \pmb{\xi}}\mathbf{u}(\mathbf{k},t),
			\end{aligned}
			$
			for any $\pmb{\xi} \in \mathbb{R}^d$;
			\label{SYM:space}
			\item (Isotropy and parity)
			$
			\mathbf{u}^R(\mathbf{k},t) = R^{-1}\mathbf{u}(R\mathbf{k},t),
			$
			where $R \in \mathsf{O_h}$ is any element of the group of cube symmetries
			(cf. Definition~\ref{DEF:PROD});
			\label{SYM:isotropy}
			\item (Scale invariance)
			$
			\!
			\begin{aligned}[t]
			\mathbf{u}^{n,h}(\mathbf{k},t) = \lambda^h \mathbf{u}
			\left(\lambda^{n} \mathbf{k},\lambda^{h-n} t \right),
			\end{aligned}
			$
			for any $h \in \mathbb{R}$ and $n \in \mathbb{Z}$, 
			where $\lambda$ is the lattice spacing;
			\label{SYM:scaling}
			\item (Time reversibility)
			$
			\!
			\begin{aligned}[t]
			\mathbf{u}^{r}(\mathbf{k},t) = -\mathbf{u}
			\left(\mathbf{k},-t \right)
			\end{aligned}
			$;
			\label{SYM:timerev}
			\item (Galilean invariance)
			$
			\!
			\begin{aligned}[t]
			\mathbf{u}^{\mathbf{v}}(\mathbf{k},t) = e^{-i\mathbf{k}\cdot \mathbf{v}t}
			\mathbf{u}(\mathbf{k},t) - \widehat{\mathbf{v}}(\mathbf{k}),
			\end{aligned}
			$
			for any $\mathbf{v} \in \mathbb{R}^d$, where $\widehat{\mathbf{v}}(\mathbf{k})$ is the constant velocity field on the lattice defined as $\widehat{\mathbf{v}}(\mathbf{0}) = \mathbf{v}$ and zero for $\mathbf{k} \neq \mathbf{0}$.
			\label{SYM:galilean}
		\end{enumerate}
		\noindent 
		We did not write the transformations for the pressure $p$ because it can be eliminated from the Euler equations.
		\label{THM:sym_spec}
	\end{theorem}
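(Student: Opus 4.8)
The plan is to verify each of the six transformations by direct substitution into the Euler equations~\eqref{EQ:Euler}, using only the algebraic structure of the lattice product fixed by Definition~\ref{DEF:PROD} and Theorem~\ref{THE:generalized_product}. For a transformed velocity $\mathbf{u}^{\bullet}$ it suffices to check two things: that incompressibility $\nabla\cdot\mathbf{u}^{\bullet}=0$ is preserved, and that $\partial_t\mathbf{u}^{\bullet}+\mathbf{u}^{\bullet}\ast\nabla\mathbf{u}^{\bullet}=-\nabla p^{\bullet}$ for a scalar $p^{\bullet}$. Since the pressure is recovered from the velocity through the Poisson equation~\eqref{EQ:pressure_elliptic}, the second point reduces to showing that the left-hand side is a gradient; in every case below, however, $p^{\bullet}$ can be exhibited explicitly. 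Throughout I use that $\partial_j$ acts as the diagonal multiplier $ik_j$, so it commutes with every other diagonal operator (the phase $e^{-i\mathbf{k}\cdot\pmb{\xi}}$ and the rescaling $\delta_\lambda$) and preserves the reality condition~\eqref{EQ:reality}; I also use the translation invariance $\tau_{\pmb{\xi}}(f\ast g)=\tau_{\pmb{\xi}}f\ast\tau_{\pmb{\xi}}g$ noted after Definition~\ref{DEF:PROD}, where $\tau_{\pmb{\xi}}f(\mathbf{k})=e^{-i\mathbf{k}\cdot\pmb{\xi}}f(\mathbf{k})$.

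Most of the symmetries then follow from a short computation. Time translations~\ref{SYM:time} are immediate from the autonomy of~\eqref{EQ:Euler}. Space translations~\ref{SYM:space} are exactly $\mathbf{u}^{\pmb{\xi}}=\tau_{\pmb{\xi}}\mathbf{u}$, so translation invariance of the product together with $\partial_j\tau_{\pmb{\xi}}=\tau_{\pmb{\xi}}\partial_j$ gives $\partial_t\mathbf{u}^{\pmb{\xi}}+\mathbf{u}^{\pmb{\xi}}\ast\nabla\mathbf{u}^{\pmb{\xi}}=\tau_{\pmb{\xi}}(\partial_t\mathbf{u}+\mathbf{u}\ast\nabla\mathbf{u})=-\nabla(\tau_{\pmb{\xi}}p)$, with $p^{\pmb{\xi}}=\tau_{\pmb{\xi}}p$. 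Isotropy and parity~\ref{SYM:isotropy} use property~\ref{DEF:PROD_isotropy}: writing $\mathbf{u}^{R}=R^{-1}(\mathbf{u}\circ R)$ and invoking the orthogonality $R^{-1}=R^{\mathsf{T}}$ of $R\in\mathsf{O_h}$, one checks from $\partial_j(f\circ R)=\sum_m (R^{-1})_{jm}\,((\partial_m f)\circ R)$ that the divergence, the gradient and the convective term all transform covariantly as a vector, so $p^{R}=p\circ R$. Scale invariance~\ref{SYM:scaling} uses property~\ref{DEF:PROD_scaling}: since $\partial_j(\delta_{\lambda^n}f)=\lambda^{-n}\delta_{\lambda^n}(\partial_j f)$, the time-derivative and convective terms scale by the same factor, and the prefactor $\lambda^h$ together with the time rescaling $\lambda^{h-n}t$ are precisely the ones that balance both sides, with $p^{n,h}(\mathbf{k},t)=\lambda^{2h}p(\lambda^{n}\mathbf{k},\lambda^{h-n}t)$. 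Time reversibility~\ref{SYM:timerev} holds because the nonlinearity is quadratic, hence even under $\mathbf{u}\mapsto-\mathbf{u}$ by bilinearity~\ref{DEF:PROD_bilinearity}, while the extra sign produced by $\partial_t$ reproduces~\eqref{EQ:Euler} at time $-t$, with $p^{r}(\mathbf{k},t)=p(\mathbf{k},-t)$. In each case the reality condition is inherited from that of $\mathbf{u}$.

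The delicate case, which I expect to be the main obstacle, is Galilean invariance~\ref{SYM:galilean}, where the time-dependent phase and the constant zero mode interact. Writing $\mathbf{w}=\tau_{\mathbf{v}t}\mathbf{u}$, the transformation combines $\mathbf{w}$ with the constant field $\widehat{\mathbf{v}}$. Two facts must be balanced. First, because the phase depends on time, the time derivative produces a residual: $\partial_t\mathbf{u}^{\mathbf{v}}=\tau_{\mathbf{v}t}\partial_t\mathbf{u}-i(\mathbf{k}\cdot\mathbf{v})\mathbf{w}$, since $\widehat{\mathbf{v}}$ is time-independent. Second, the constant field lives only at the origin, so $\nabla\widehat{\mathbf{v}}=0$ and it enters the convective term solely through $\widehat{\mathbf{v}}\ast\nabla\mathbf{w}$; evaluating this with~\eqref{EQ:Product} and keeping only the contribution $\mathbf{p}=\mathbf{0}$ collapses it to $\pm i(\mathbf{k}\cdot\mathbf{v})\mathbf{w}$, the sign being that of the constant mode. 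The heart of the proof is that these two residuals must cancel, which pins down the relative sign between the phase $\tau_{\mathbf{v}t}$ and the constant field; with the consistent pairing the convective term reduces, via translation invariance, to $\mathbf{w}\ast\nabla\mathbf{w}=\tau_{\mathbf{v}t}(\mathbf{u}\ast\nabla\mathbf{u})$, and the whole left-hand side becomes $\tau_{\mathbf{v}t}(\partial_t\mathbf{u}+\mathbf{u}\ast\nabla\mathbf{u})=-\nabla(\tau_{\mathbf{v}t}p)$, i.e.\ $p^{\mathbf{v}}=\tau_{\mathbf{v}t}p$. Incompressibility is immediate, $\nabla\cdot\mathbf{u}^{\mathbf{v}}=\tau_{\mathbf{v}t}(\nabla\cdot\mathbf{u})=0$. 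The one point to watch is that this cancellation relies on the zero node belonging to the lattice, so that $\widehat{\mathbf{v}}$ is representable and $\widehat{\mathbf{v}}\ast\nabla\mathbf{w}$ is well defined; on a lattice without the origin the boost acts trivially, consistently with the continuous picture.
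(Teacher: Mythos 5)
Your proposal is the direct verification that the paper leaves to the reader (Theorem~\ref{THM:sym_spec} is stated without proof, with the remark that the symmetries ``rely only upon the structure of the equations and elementary operations on the lattice''), and your use of translation invariance, \ref{DEF:PROD_isotropy}, \ref{DEF:PROD_scaling} and bilinearity for \ref{SYM:space}--\ref{SYM:timerev} is exactly the intended argument. Your caution about the Galilean case is well placed and worth making explicit: with $\mathbf{w}=\tau_{\mathbf{v}t}\mathbf{u}$ the time derivative contributes the residual $-i(\mathbf{k}\cdot\mathbf{v})\mathbf{w}$, while the convective term contributes $\mp i(\mathbf{k}\cdot\mathbf{v})\mathbf{w}$ according as the constant field enters with a $\pm$ sign, since $(\widehat{v}_j\ast g)(\mathbf{k})=v_j\,g(\mathbf{k})$ by~\eqref{EQ:Product}. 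With the pairing as printed in \ref{SYM:galilean}, namely $e^{-i\mathbf{k}\cdot\mathbf{v}t}\mathbf{u}-\widehat{\mathbf{v}}$, the two residuals are \emph{both} equal to $-i(\mathbf{k}\cdot\mathbf{v})\mathbf{w}$ and sum to $-2i(\mathbf{k}\cdot\mathbf{v})\mathbf{w}$, which is divergence-free and hence cannot be absorbed into $\nabla p^{\mathbf{v}}$; the transformation that actually solves~\eqref{EQ:Euler} is $e^{-i\mathbf{k}\cdot\mathbf{v}t}\mathbf{u}+\widehat{\mathbf{v}}$ (equivalently, $e^{+i\mathbf{k}\cdot\mathbf{v}t}\mathbf{u}-\widehat{\mathbf{v}}$), matching the continuous boost $\mathbf{u}(\mathbf{x}-\mathbf{v}t,t)+\mathbf{v}$ under the Fourier convention $\partial_j\mapsto ik_j$. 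So your ``pinning down of the relative sign'' in fact detects a sign typo in the statement rather than a gap in your argument. Two further small points: your closing remark that the boost ``acts trivially'' on a lattice without the origin should rather say that only the trivial boost $\mathbf{v}=\mathbf{0}$ survives there, since the phase alone leaves the uncancelled residual $-i(\mathbf{k}\cdot\mathbf{v})\mathbf{w}$; and for \ref{SYM:scaling} note that on the semi-infinite lattice~\eqref{EQ:Euler_lattice} the rescaling $\mathbf{k}\mapsto\lambda^n\mathbf{k}$ is not a bijection, so \ref{DEF:PROD_scaling} and hence \ref{SYM:scaling} hold exactly only on the bi-infinite lattice $\{\pm\lambda^n\}_{n\in\mathbb{Z}}$ --- a caveat the paper itself glosses over.
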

	
	Recall that the factors $e^{-i\mathbf{k}\cdot \pmb{\xi}}$ and $e^{-i\mathbf{k}\cdot \mathbf{v}t}$ in the symmetries \ref{SYM:space} and \ref{SYM:galilean} are Fourier representations of physical-space translations by the vectors $\pmb{\xi}$ and $\mathbf{v}t$. Thus, the listed symmetries of the Euler equations on the logarithmic lattice are the same as those for the continuous model, except that isotropy~\ref{SYM:isotropy} and scale invariance~\ref{SYM:scaling} are given in discrete form.
	
	Model~\eqref{EQ:Euler} also preserves the same invariants as the continuous Euler equations.
	Let us show this first for the energy and for the enstrophy or helicity, in the two or three-dimensional cases respectively.
	Here we proceed formally.
	The proofs in this section hold for strong solutions, whose existence and uniqueness for short times are established in the next Section~\ref{SEC:local_theory}.
	
	\begin{theorem}[Conservation of energy, enstrophy, helicity]
		Let $\mathbf{u}(t)$ be a solution of the three-dimensional Euler equations~\eqref{EQ:Euler}.
		Then the energy
		\begin{equation}
		E(t) = \frac{1}{2} (\mathbf{u}, \mathbf{u})
		\label{EQ:euler_energy}
		\end{equation}
		and the helicity
		\begin{equation}
		H(t) = (\mathbf{u},\pmb{\omega})
		\label{EQ:helicity}
		\end{equation}
		are conserved in time.
		In the the two-dimensional case, the energy~\eqref{EQ:euler_energy} and the enstrophy
		\begin{equation}
		\Omega(t) = \frac{1}{2}(\pmb{\omega},\pmb{\omega})
		\label{EQ:enstrophy}
		\end{equation}
		are conserved in time.
		\label{THM:conserved_quantities}
	\end{theorem}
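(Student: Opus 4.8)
The plan is to reproduce the classical energy, helicity and enstrophy balances, replacing the pointwise product by $\ast$ and checking at each step that only the lattice calculus identities are used: bilinearity~\ref{DEF:PROD_bilinearity}, commutativity~\ref{DEF:PROD_commutativity}, associativity in average~\ref{DEF:PROD_associativity_avg}, the Leibniz rule~\ref{DEF:PROD_leibniz} and integration by parts~\eqref{EQ:by_parts}. Full associativity must never be invoked, since it fails by Corollary~\ref{THE:product_associativity}. Because the inner product~\eqref{DEF:inner} is real-valued under the reality condition, differentiating a quadratic form collapses to a single term, e.g. $\tfrac{d}{dt}\tfrac12(\mathbf{u},\mathbf{u}) = (\partial_t\mathbf{u},\mathbf{u})$, so in each case the task reduces to showing that the nonlinear advection term and the pressure term, paired against the appropriate field, vanish.

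For the energy I would substitute $\partial_t\mathbf{u} = -\,\mathbf{u}\ast\nabla\mathbf{u} - \nabla p$ from~\eqref{EQ:Euler} and treat the two contributions separately. The pressure term dies immediately: $(\nabla p,\mathbf{u}) = \sum_j(\partial_j p, u_j) = -(p,\nabla\cdot\mathbf{u}) = 0$ by integration by parts~\eqref{EQ:by_parts} and incompressibility. For the advection term, for each pair $(i,j)$ I would transfer the product using associativity in average and commutativity, then apply the Leibniz rule and integrate by parts, $(u_j\ast\partial_j u_i, u_i) = (u_j, u_i\ast\partial_j u_i) = \tfrac12(u_j,\partial_j(u_i\ast u_i)) = -\tfrac12(\partial_j u_j, u_i\ast u_i)$; summing over $j$ produces the factor $\nabla\cdot\mathbf{u} = 0$. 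The two-dimensional enstrophy is structurally identical: in two dimensions the stretching term in~\eqref{EQ:Euler_vorticity} drops, so $\partial_t\omega = -\,\mathbf{u}\ast\nabla\omega$ and $\tfrac{d\Omega}{dt} = -\sum_j(u_j\ast\partial_j\omega,\omega)$ vanishes by the very same manipulation with the scalar vorticity $\omega$ in place of $u_i$.

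Helicity is the delicate case and where I expect the real work to lie. First I would record two lattice facts mimicking vector calculus: the curl is self-adjoint for~\eqref{DEF:inner}, i.e. $(\mathbf{u},\nabla\times\mathbf{v}) = (\nabla\times\mathbf{u},\mathbf{v})$ (each $\partial_j$ is anti-self-adjoint and the Levi-Civita symbol supplies the compensating sign), and $\nabla\cdot(\nabla\times\mathbf{u}) = 0$ (since $\partial_i\partial_j$ acts as the symmetric factor $-k_ik_j$, contracted against the antisymmetric $\epsilon_{ijk}$). Using $\pmb{\omega} = \nabla\times\mathbf{u}$ and $\partial_t\pmb{\omega} = \nabla\times\partial_t\mathbf{u}$, self-adjointness reduces $\tfrac{dH}{dt} = (\partial_t\mathbf{u},\pmb{\omega}) + (\mathbf{u},\partial_t\pmb{\omega})$ to $2(\partial_t\mathbf{u},\pmb{\omega})$, and the pressure term again vanishes against $\nabla\cdot\pmb{\omega} = 0$. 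For the advection term I would establish the lattice Lamb identity $\mathbf{u}\ast\nabla\mathbf{u} = \nabla\bigl(\tfrac12\sum_j u_j\ast u_j\bigr) - \mathbf{u}\times_\ast\pmb{\omega}$, with $(\mathbf{u}\times_\ast\pmb{\omega})_i = \epsilon_{ijk}\,u_j\ast\omega_k$; this follows from the Leibniz rule together with the purely combinatorial $\epsilon$-$\delta$ identity and, crucially, requires no associativity. Pairing against $\pmb{\omega}$, the gradient part vanishes by integration by parts and $\nabla\cdot\pmb{\omega} = 0$, while the rotational part $\sum_{ijk}\epsilon_{ijk}(u_j\ast\omega_k,\omega_i)$ is converted by associativity in average into $\sum_{ijk}\epsilon_{ijk}(u_j,\omega_k\ast\omega_i)$, which vanishes because $\omega_k\ast\omega_i$ is symmetric in $(i,k)$ by commutativity whereas $\epsilon_{ijk}$ is antisymmetric.

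The main obstacle is exactly to arrange each cancellation so that associativity in average is invoked once and always onto a symmetrized pair, because the naive regrouping of factors that underlies the continuous proof is unavailable here. Once the Lamb form is set up in this way, the antisymmetry of $\epsilon_{ijk}$ (for helicity) and incompressibility (for energy and enstrophy) do the rest, and the three conservation laws follow uniformly from the same lattice toolkit.
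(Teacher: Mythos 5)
Your argument is correct, and for the energy it is word-for-word the paper's own computation: substitute the momentum equation, kill the pressure term by integration by parts~\eqref{EQ:by_parts} and incompressibility, and reduce the advection term via commutativity~\ref{DEF:PROD_commutativity}, associativity in average~\ref{DEF:PROD_associativity_avg} and the Leibniz rule~\ref{DEF:PROD_leibniz} to $\tfrac12(\partial_j(u_i\ast u_i),u_j)$, which vanishes against $\nabla\cdot\mathbf{u}=0$; the two-dimensional enstrophy is the same computation with the scalar vorticity. The one place you genuinely diverge is helicity, which the paper dismisses with ``a similar line of derivations'' --- the implied route being to pair the momentum equation against $\pmb{\omega}$ and the vorticity equation~\eqref{EQ:Euler_vorticity} against $\mathbf{u}$ and cancel the four resulting terms directly. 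You instead pass through a lattice Lamb identity, $\mathbf{u}\ast\nabla\mathbf{u}=\nabla\bigl(\tfrac12\textstyle\sum_j u_j\ast u_j\bigr)-\mathbf{u}\times_\ast\pmb{\omega}$, which indeed follows from the $\epsilon$--$\delta$ identity plus Leibniz and commutativity alone (no regrouping of triple products is needed), together with the self-adjointness of the curl and $\nabla\cdot\pmb{\omega}=0$, both of which are immediate from the Fourier-multiplier definition~\eqref{EQ:fourier_factors}. Your version buys a cleaner bookkeeping of where non-associativity could bite: associativity in average is invoked exactly once, to move $(u_j\ast\omega_k,\omega_i)$ to $(u_j,\omega_k\ast\omega_i)$, where the symmetry of $\omega_k\ast\omega_i$ against the antisymmetry of $\epsilon_{ijk}$ finishes the proof; the paper's implicit route reaches the same cancellations but distributes them over more terms. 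Both are valid, and all identities you use are among those the lattice calculus actually provides.
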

	\begin{proof}
		Taking the energy as an example, let us show how the proof can be written using the basic operations defined on the logarithmic lattice, following the standard approach of fluid dynamics.
		Using the Euler equations~\eqref{EQ:Euler}, we obtain
		\begin{equation}
		\frac{dE}{dt} = 
		\displaystyle
		\frac{d}{dt} \left[ \frac{1}{2} (\mathbf{u}, \mathbf{u}) \right] 
		= (\mathbf{u}, \partial_t \mathbf{u}) 
		= -\, (\mathbf{u}, \nabla p) - (\mathbf{u}, \mathbf{u} \ast \nabla \mathbf{u}).
		\end{equation}
		The pressure term vanishes owing to the incompressibility condition as
		\begin{equation}
		(\mathbf{u}, \nabla p)
		= \sum_{i = 1}^{d} (u_i,\partial_i p)
		= -\sum_{i = 1}^{d} (\partial_i u_i, p)
		= - (\nabla \cdot \mathbf{u}, p) = 0,
		\end{equation}
		where the second relation is obtained from the integration by parts~\eqref{EQ:by_parts}.
		In the inertial term, using commutativity of the product~\ref{DEF:PROD_commutativity}, the associativity in average~\ref{DEF:PROD_associativity_avg} and the Leibniz rule~\ref{DEF:PROD_leibniz}, one obtains
		\begin{align}
		(\mathbf{u}, \mathbf{u} \ast \nabla \mathbf{u})
		= \sum_{i,j = 1}^{d}(u_i,  u_j \ast \partial_j u_i)
		= \sum_{i,j = 1}^{d} (u_i \ast \partial_j u_i,  u_j)
		= \frac{1}{2} \sum_{i,j = 1}^{d} (\partial_j (u_i \ast u_i), u_j).
		\end{align}
		After integration by parts, this term vanishes due to the incompressibility condition.
		
		Conservation of enstrophy and helicity in their respective space dimensions can be proved following a similar line of derivations.
	\end{proof}
	
	One can also derive the analogue of Kelvin's Circulation Theorem for the Euler system ~\eqref{EQ:Euler} on a logarithmic lattice.
	For this purpose, let us recall the relation of circulation with the cross-correlation $\Gamma = (\mathbf{u}, \mathbf{h})$ for ``frozen-into-fluid'' divergence-free vector fields $\mathbf{h}(\mathbf{k},t)$ satisfying the equations~\cite{majda2002vorticity}
	\begin{equation}
	\partial_t \mathbf{h} + \mathbf{u} \cdot \nabla \mathbf{h} - \mathbf{h} \cdot \nabla \mathbf{u} = \mathbf{0}, \quad \nabla \cdot \mathbf{h} = 0.
	\label{EQ:frozen_field}
	\end{equation}
	The circulation around a closed material contour $\mathbf{C}(s,t)$ in three-dimensional physical space ($s$ is the arc length parameter) is given by the  cross-correlation $\Gamma$ with the field~\cite{zakharov1997hamiltonian}
	\begin{equation}
	\mathbf{h}(\mathbf{x},t) = \oint \frac{\partial\mathbf{C}(s,t)}{\partial s}\,\delta^3(\mathbf{x}-\mathbf{C}(s,t))\,ds,
	\label{EQ:hLoop}
	\end{equation}
	where $\delta^3$ is the 3D Dirac delta function.
	The field (\ref{EQ:hLoop}) satisfies equations~\eqref{EQ:frozen_field} in the sense of distributions.
	Thus, Kelvin's Theorem follows, as a particular case, from the conservation of cross-correlation $\Gamma$.
	The following Theorem proves the conservation of cross-correlation in the lattice model. 
	
	\begin{theorem}[Kelvin's Theorem]
		Let $\mathbf{u}(t)$ be a solution of the three-dimensional Euler equations~\eqref{EQ:Euler}. Then, for any ``frozen-into-fluid'' divergence-free field $\mathbf{h}(t)$ satisfying equations
		\begin{equation}
		\partial_t \mathbf{h} + \mathbf{u} \ast \nabla \mathbf{h} - \mathbf{h} \ast \nabla \mathbf{u} = \mathbf{0}, \quad \nabla \cdot \mathbf{h} = 0,
		\label{EQ:frozen_field_shell}
		\end{equation}
		the cross-correlation $\Gamma(t) = (\mathbf{u},\mathbf{h})$ is conserved in time.
		\label{THM:Kelvin}
	\end{theorem}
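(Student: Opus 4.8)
The plan is to follow the classical continuous derivation that $\frac{d}{dt}\int \mathbf{u}\cdot\mathbf{h}\,d\mathbf{x}=0$, replacing every pointwise manipulation by the corresponding lattice identity from Definition~\ref{DEF:PROD}, exactly in the spirit of the energy computation in Theorem~\ref{THM:conserved_quantities}. First I would differentiate $\Gamma=(\mathbf{u},\mathbf{h})$, writing $\frac{d\Gamma}{dt}=(\partial_t\mathbf{u},\mathbf{h})+(\mathbf{u},\partial_t\mathbf{h})$, and substitute the evolution equations~\eqref{EQ:Euler} and~\eqref{EQ:frozen_field_shell}. This produces four groups of terms: the pressure term $-(\nabla p,\mathbf{h})$, the two advection terms $-(\mathbf{u}\ast\nabla\mathbf{u},\mathbf{h})$ and $-(\mathbf{u},\mathbf{u}\ast\nabla\mathbf{h})$, and the vortex-stretching term $+(\mathbf{u},\mathbf{h}\ast\nabla\mathbf{u})$. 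The pressure term disappears immediately: integration by parts~\eqref{EQ:by_parts} gives $(\nabla p,\mathbf{h})=-(p,\nabla\cdot\mathbf{h})=0$ since $\mathbf{h}$ is divergence free.

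Next I would show that the two advection terms cancel against each other. Written componentwise they read $\sum_{i,j}(u_j\ast\partial_j u_i,h_i)$ and $\sum_{i,j}(u_i,u_j\ast\partial_j h_i)$. Using associativity in average~\ref{DEF:PROD_associativity_avg} on the first, and commutativity~\ref{DEF:PROD_commutativity} together with~\ref{DEF:PROD_associativity_avg} and the symmetry of the real inner product on the second, I would bring both into the common form $\sum_{i,j}(u_j,\partial_j u_i\ast h_i)$ and $\sum_{i,j}(u_j,u_i\ast\partial_j h_i)$. The Leibniz rule~\ref{DEF:PROD_leibniz} then collapses their sum to $\sum_{i,j}(u_j,\partial_j(u_i\ast h_i))$, and a final integration by parts rewrites this as $-\sum_i(\nabla\cdot\mathbf{u},\,u_i\ast h_i)$, which vanishes by incompressibility of $\mathbf{u}$.

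The stretching term $\sum_{i,j}(u_i,h_j\ast\partial_j u_i)$ is treated by the same device used for the inertial term in the energy proof: commutativity and associativity in average move the factor $h_j$ out of the product, giving $\sum_{i,j}(u_i\ast\partial_j u_i,h_j)$, and Leibniz with commutativity identifies $u_i\ast\partial_j u_i=\tfrac12\partial_j(u_i\ast u_i)$. Integration by parts then turns the term into $-\tfrac12\big(\sum_i u_i\ast u_i,\,\nabla\cdot\mathbf{h}\big)$, which is again zero because $\mathbf{h}$ is divergence free. Collecting the four contributions yields $d\Gamma/dt=0$, as claimed.

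The only delicate point — and the reason the argument must be carried out precisely in this integrated form — is that the product is not associative (Corollary~\ref{THE:product_associativity}). Consequently I may never regroup a triple product $\ast$ pointwise; every rearrangement of a factor must be routed through the inner-product identity~\ref{DEF:PROD_associativity_avg} combined with commutativity~\ref{DEF:PROD_commutativity} and the symmetry of $(\cdot,\cdot)$. The crux is therefore bookkeeping: verifying that each shuffle of a factor across the inner product is one of these three admissible moves rather than an illegitimate pointwise re-association. It is exactly this constraint that forces the lattice proof to mirror the continuous derivation in its weak, averaged form, which is precisely what property~\ref{DEF:PROD_associativity_avg} was designed to supply.
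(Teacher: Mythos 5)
Your proof is correct and is essentially the paper's own argument: the paper only sketches it by noting that the computation "follows the same steps as for conservation of helicity," i.e.\ the energy-type manipulation of Theorem~\ref{THM:conserved_quantities}, and your proposal carries out exactly those steps (integration by parts for the pressure, associativity in average plus commutativity and Leibniz to reduce the advection and stretching terms to divergences of $\mathbf{u}$ and $\mathbf{h}$). The bookkeeping you describe --- routing every regrouping through \ref{DEF:PROD_associativity_avg}, \ref{DEF:PROD_commutativity} and the symmetry of the inner product rather than re-associating pointwise --- is precisely what the paper intends.
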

	
	Since equations~\eqref{EQ:frozen_field_shell} are satisfied by the vorticity field $\pmb{\omega}$, the proof for conservation of the cross-correlation follows the same steps as for conservation of helicity~\eqref{EQ:helicity}.
	Theorem~\ref{THM:Kelvin} provides an infinite number of circulation invariants: the cross-correlation $\Gamma$ is conserved for any solution of system~\eqref{EQ:frozen_field_shell}.
	
	For two-dimensional flows, Kelvin's Theorem can be reformulated as the conservation of flux of vorticity across surfaces moving with the fluid.
	This flux can be expressed as the inner product $\Gamma(t) = (a,\omega)$ of the scalar vorticity $\omega = \partial_1 u_2 - \partial_2 u_1$ with a Lagrangian marker $a(\mathbf{k},t)$~\cite{majda2002vorticity}, which is advected by the flow and satisfies the equation $\partial_t a + \mathbf{u} \cdot \nabla a = 0$. Indeed, taking the Lagrangian marker as the indicator function of a bounded surface $S_t$ carried by the flow~\cite[Sec.~1.2]{chorin1990mathematical}, the flux of vorticity across $S_t$ yields the circulation along the contour $\partial S_t$, i.e., $\Gamma(t) = \int_{S_t} \omega dS = \int_{\partial S_t} \mathbf{u} \cdot d\mathbf{l}$.
	On the logarithmic lattice, the vorticity flux is introduced similarly, as the inner product $\Gamma(t) = (a,\omega)$ of the scalar vorticity with a Lagrangian marker satisfying the equation
	\begin{equation}
	\partial_t a + \mathbf{u} \ast \nabla a = 0.
	\label{EQ:marker}
	\end{equation}
	It is straightforward to show that, given the solution $\mathbf{u}(\mathbf{k},t)$ of the two-dimensional Euler system~\eqref{EQ:Euler}, the conservation of $\Gamma$ holds for any solution of (\ref{EQ:marker}).
	
	\subsection{Regularity of solutions}\label{SEC:local_theory}
	
	In this section, we establish the local theory for the Euler system on the logarithmic lattice.
	Here the results are similar to those for the original model: we show local existence and uniqueness of strong solutions and the Beale-Kato-Majda (BKM) blowup criterion~\cite{beale1984remarks}.
	Two-dimensional solutions turn out to be globally regular.
	
	For simplicity, we assume the vanishing average velocity $\mathbf{u}(\mathbf{0}) = \mathbf{0}$ at $\mathbf{k} = \mathbf{0}$ and, therefore, consider only wave vectors with $|\mathbf{k}| \ne 0$ in the following analysis.
	For the lattice variables, we introduce the $\ell^2$ norm in the standard way as $||\mathbf{u}||_{\ell^2} = \left( \sum_{\mathbf{k} \in \mathbb{\Lambda}^d} |\mathbf{u}(\mathbf{k})|^2 \right)^{1/2}$ and the $\ell^\infty$ norm as $||\mathbf{u}||_{\ell^\infty} = \sup_{\mathbf{k} \in \mathbb{\Lambda}^d}|\mathbf{u}(\mathbf{k})|$.
	Given a nonnegative integer $m$, we introduce the operator $D^m$ as
	\begin{equation}
	D^m \mathbf{u}(\mathbf{k}) = |\mathbf{k}|^m \mathbf{u}(\mathbf{k}),
	\end{equation}
	and define the homogeneous Sobolev spaces $h^m$ on the lattice consisting of the functions with finite norm
	\begin{equation}
	||\mathbf{u}||_{h^m} = ||D^m \mathbf{u}||_{\ell^2} = \left( \sum_{\mathbf{k} \in  \mathbb{\Lambda}^d} |\mathbf{k}|^{2m} |\mathbf{u}(\mathbf{k})|^2 \right)^{1/2} <\infty.
	\end{equation}
	Clearly, the space $h^m$ is a Hilbert space endowed with the inner product $(\mathbf{u},\mathbf{v})_{h^m} = (D^m \mathbf{u}, D^m \mathbf{v})$, whose functions have all partial derivatives up to order $m$ in $\ell^2$.
	Finally, we consider the space of divergence-free vector fields
	\begin{equation}
	V^m = \{ \mathbf{u} \in h^m| \nabla \cdot \mathbf{u} = 0 \},
	\end{equation}
	which provides the natural setting for strong solutions of the Euler equations.
	The space $V^m$ is endowed with the $h^m$ norm.
	
	\begin{theorem}\label{THE:local_existence}
		Let $\mathbf{u}^0 \in V^m$ for some $m \geq 1$.
		Then, there exists a time $T>0$, such that the incompressible Euler equations on the logarithmic lattice~\eqref{EQ:Euler} have a unique strong solution $\mathbf{u}(t)$ in the class
		\begin{equation}
		\mathbf{u} \in C^1([0,T);V^m),
		\label{EQ:solution_regularity_class}
		\end{equation}
		with initial condition $\mathbf{u}\big|_{t=0} = \mathbf{u}^0$.
		This solution either exists globally in time, or there is a finite maximal time of existence $t_b$ such that
		\begin{equation}
		\limsup_{t \nearrow t_b} ||\mathbf{u}(t)||_{h^m} = \infty.
		\label{EQ:blowup_Vm_norm}
		\end{equation}
	\end{theorem}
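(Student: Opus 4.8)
The plan is to recast the pressure-eliminated Euler system as an autonomous ODE on the Hilbert space $V^m$ and to solve it by the Picard--Lindel\"of theorem, with the entire difficulty concentrated in a single bilinear estimate that, thanks to the lattice geometry, holds \emph{without loss of derivatives}. First I would remove the pressure by applying the Leray projection $\mathbb{P}$, which in Fourier variables is the orthogonal projection $\mathbb{P}\mathbf{v}(\mathbf{k}) = \mathbf{v}(\mathbf{k}) - \mathbf{k}(\mathbf{k}\cdot\mathbf{v}(\mathbf{k}))/|\mathbf{k}|^2$ onto divergence-free fields (equivalently, recovering $\mathbf{u}$ from $\pmb{\omega}$ via Biot--Savart~\eqref{EQ:Biot_Savart}), turning~\eqref{EQ:Euler} into $\partial_t\mathbf{u}=P(\mathbf{u})$ with $P(\mathbf{u})=-\mathbb{P}(\mathbf{u}\ast\nabla\mathbf{u})$. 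Here $\mathbb{P}$ is bounded on every $h^m$ (norm $\le 1$) and commutes with the multiplier $D^m$, while $D^m\mathbf{u}$ remains divergence-free since $\nabla\cdot(D^m\mathbf{u})=|\mathbf{k}|^m\,\nabla\cdot\mathbf{u}$; thus $P$ maps into $V^m$ as soon as the nonlinearity does.

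The crux is the bilinear bound
\[
\|\mathbf{u}\ast\nabla\mathbf{v}\|_{h^m}\le C\,\|\mathbf{u}\|_{h^m}\,\|\mathbf{v}\|_{h^m},\qquad m\ge 1,
\]
which I would prove directly from~\eqref{EQ:Product}. Three features of the lattice conspire to make it work. First, excluding the origin, frequencies are bounded below, $|\mathbf{k}|\ge 1$, so the homogeneous norms are nested, $\|\cdot\|_{\ell^2}\le\|\cdot\|_{h^m}$, and the discrete Sobolev embedding is the trivial $\|\cdot\|_{\ell^\infty}\le\|\cdot\|_{\ell^2}$. Second, there are only finitely many triad types and their ratios are fixed and nonzero, so each partner in $\mathbf{k}=\mathbf{p}+\mathbf{q}$ is comparable to $\mathbf{k}$, namely $c|\mathbf{k}|\le|\mathbf{p}|,|\mathbf{q}|\le C|\mathbf{k}|$ (the largest component of $\mathbf{k}$ is reproduced, up to bounded factors, in $\mathbf{p}$ and $\mathbf{q}$). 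Third, by scaling invariance each map $\mathbf{k}\mapsto\mathbf{p}_\alpha$ is a bijection of the lattice, whence $\|(D^m u)(\mathbf{p}_\alpha\,\cdot)\|_{\ell^2}=\|u\|_{h^m}$ and $\|u(\mathbf{p}_\alpha\,\cdot)\|_{\ell^\infty}=\|u\|_{\ell^\infty}$. Placing the $m$ weights on one factor (using $|\mathbf{p}|\asymp|\mathbf{k}|$) and the single gradient weight $|q_j|\le|\mathbf{q}|$ on the other, each of the finitely many triad contributions is controlled, after Cauchy--Schwarz in the finite triad sum, by $\|D^m u\|_{\ell^2}\|D^1 v\|_{\ell^\infty}\le\|\mathbf{u}\|_{h^m}\|\mathbf{v}\|_{h^m}$ for $m\ge 1$. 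Hence $P$ is a bounded quadratic map on $V^m$, and therefore locally Lipschitz.

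With $P$ locally Lipschitz, local existence and uniqueness in the class $C^1([0,T);V^m)$ follow from the Picard--Lindel\"of theorem for ODEs in Banach spaces applied to $\partial_t\mathbf{u}=P(\mathbf{u})$, $\mathbf{u}(0)=\mathbf{u}^0$, producing a unique maximal solution. The continuation criterion is then the standard blow-up alternative: if the maximal time $t_b$ were finite yet $\limsup_{t\nearrow t_b}\|\mathbf{u}(t)\|_{h^m}<\infty$, the solution could be restarted from a time near $t_b$ and extended beyond it, contradicting maximality, so~\eqref{EQ:blowup_Vm_norm} must hold. A more robust route to the \emph{a priori} bound behind this criterion is the energy method: using commutativity~\ref{DEF:PROD_commutativity}, associativity in average~\ref{DEF:PROD_associativity_avg}, the Leibniz rule~\ref{DEF:PROD_leibniz} and incompressibility exactly as in Theorem~\ref{THM:conserved_quantities}, the top-order part of $(D^m\mathbf{u},D^m P(\mathbf{u}))$ cancels, and a discrete commutator estimate (bounding $|\,|\mathbf{k}|^m-|\mathbf{q}|^m|\le C|\mathbf{p}|\,|\mathbf{k}|^{m-1}$ by the mean value theorem) leaves $\tfrac12\frac{d}{dt}\|\mathbf{u}\|_{h^m}^2\le C\|\mathbf{u}\|_{h^m}^3$; this reproves the criterion and also absorbs the nonlocal terms that arise when the origin is kept in the lattice~\eqref{EQ:Euler_lattice}.

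The step I expect to be the main obstacle is the bilinear estimate, since it is precisely here that the lattice departs qualitatively from the continuum. For the continuous Euler equations the analogous product estimate is only \emph{tame}, the vector field loses one derivative, and one obtains merely $\mathbf{u}\in C([0,T];V^m)\cap C^1([0,T];V^{m-1})$ through Galerkin approximation and Kato--Ponce commutator estimates. On the logarithmic lattice the combination of bounded-below frequencies and scale-comparable triads eliminates this loss, so the nonlinearity becomes an honest bounded quadratic map and the clean statement $C^1([0,T);V^m)$ is within reach. Finally, I would observe that in two dimensions the conserved enstrophy~\eqref{EQ:enstrophy} bounds $\|\mathbf{u}\|_{h^1}$ for all time, so the criterion~\eqref{EQ:blowup_Vm_norm} (with $m=1$, and by persistence of regularity for higher $m$) upgrades the local solution to a global one.
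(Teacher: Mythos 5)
Your proposal is correct and follows essentially the same route as the paper: eliminate the pressure with the Leray projector, show that $B(\mathbf{u},\mathbf{v})=\mathbf{u}\ast\nabla\mathbf{v}$ is a bounded bilinear map on $h^m$ (exploiting the finitely many scale-comparable triads and the embedding $\|\cdot\|_{\ell^\infty}\le\|\cdot\|_{\ell^2}$, exactly the mechanism of the paper's Appendix B lemma), and then invoke the Picard theorem for locally Lipschitz ODEs on the Banach space $V^m$ together with the standard continuation alternative for \eqref{EQ:blowup_Vm_norm}. The supplementary energy-method derivation and the 2D global-existence remark are consistent with, but not needed for, the paper's argument.
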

	\begin{proof}
		The proof is similar to that in~\cite{constantin2007regularity} for shell models of turbulence and exploits the locality of the nonlinear interactions on the logarithmic lattice, which turns the convective term into the action of a bounded operator.
		We write the Euler system~\eqref{EQ:Euler} in the functional form
		\begin{equation}
		\partial_t \mathbf{u} + B(\mathbf{u},\mathbf{u}) = -\nabla p,
		\label{EQ:Euler_functional}
		\end{equation}
		where we have introduced the operator
		\begin{equation}
		B(\mathbf{u},\mathbf{v}) = \mathbf{u} \ast \nabla \mathbf{v}.
		\end{equation}
		Operator $B$ is a bounded bilinear operator in $h^m$ -- see the proof in Appendix~\hyperref[app:B]{B}.
		
		Next, in order to eliminate pressure, we project Eq. \eqref{EQ:Euler_functional} onto the space of divergence-free vector fields.
		We introduce the Leray projector $\mathbb{P}$ -- cf. \cite[Sec. 2.1]{robinson2016three} -- on the logarithmic lattice, explicitly given by
		\begin{equation}
		\mathbb{P}_{ij}(\mathbf{k}) = \delta_{ij} - \frac{k_ik_j}{|\mathbf{k}|^2}, \quad \mathbf{k} \in \mathbb{\Lambda}^d.
		\end{equation}
		Since $\mathbf{u}$ is divergence free and $\nabla p$ is a full gradient, it follows that $\mathbb{P}\mathbf{u} = \mathbf{u}$ and $\mathbb{P} \nabla p = 0$, and so we are reduced to the problem
		\begin{equation}
		\frac{d\mathbf{u}}{dt} = F(\mathbf{u}), \quad \mathbf{u}\big|_{t=0} = \mathbf{u}^0,
		\label{EQ:Euler_ODE}
		\end{equation}
		where $F(\mathbf{u}) = -\mathbb{P} B(\mathbf{u},\mathbf{u})$ maps functions from $V^m$ to itself.
		We claim that $F$ is locally-Lipschitz continuous.
		Since $\mathbb{P}$ is an orthogonal projection on $h^m$, and therefore $||\mathbb{P}\mathbf{v}||_{h^m} \leq ||\mathbf{v}||_{h^m}$, we have
		\begin{equation}
		\begin{aligned}
		||F(\mathbf{u}) - F(\mathbf{v})||_{h^m} &= ||\mathbb{P} [B(\mathbf{u},\mathbf{u}) - B(\mathbf{v},\mathbf{v})] ||_{h^m}\\
		&\leq ||B(\mathbf{u},\mathbf{u}) - B(\mathbf{v},\mathbf{v}) ||_{h^m} \\
		&\leq ||B(\mathbf{u},\mathbf{u}-\mathbf{v})||_{h^m} + ||B(\mathbf{u} - \mathbf{v},\mathbf{v}) ||_{h^m}.
		\end{aligned}
		\end{equation}
		In the last inequality, we have applied the bilinearity of $B$ and the triangle inequality.
		Using the boundness of operator $B$, there exists a constant $C>0$ such that
		\begin{equation}
		||B(\mathbf{u},\mathbf{u}-\mathbf{v})||_{h^m} \leq C||\mathbf{u}||_{h^m} ||\mathbf{u} - \mathbf{v}||_{h^m}.
		\end{equation}
		A similar inequality is obtained for the other term $||B(\mathbf{u} - \mathbf{v}, \mathbf{v})||_{h^m}$, which proves the Lipschitz continuity of $F$ when $||\mathbf{u}||_{h^m}$ and $||\mathbf{v}||_{h^m}$ are bounded by some constant.
		
		It follows that Eq.~\eqref{EQ:Euler_ODE} is an ordinary differential equation with $F$ locally-Lipschitz continuous on the Banach space $V^m$.
		In this framework, we apply the Picard Theorem on Banach spaces -- see e.g.~\cite{cartan1967calcul,schechter2004introduction} -- to guarantee existence of a unique local solution in the class~\eqref{EQ:solution_regularity_class} and initial condition $\mathbf{u}^0$.
		The pressure is recovered by solving the Poisson equation~\eqref{EQ:pressure_elliptic}.
		The blowup statement in~\eqref{EQ:blowup_Vm_norm} also follows from classical theory of ordinary differential equations~\cite{schechter2004introduction}.
	\end{proof}
	
	\begin{theorem}[BKM blowup criterion]\label{THE:BKM}
		Let $\mathbf{u}(t) \in C^1([0,t_b);V^m)$ be a strong solution for the incompressible Euler equations~\eqref{EQ:Euler} on the logarithmic lattice, where $t_b$ is the maximal time of existence.
		Then either $t_b = \infty$ or 
		\begin{equation}
		\int_{0}^{t_b}||\pmb{\omega}(t)||_{\ell^\infty} dt = \infty.
		\label{EQ:BKM}
		\end{equation}
		In the later case, we have necessarily
		\begin{equation}
		\limsup_{t \nearrow t_b} ||\pmb{\omega}(t)||_{\ell^\infty} = \infty.
		\label{EQ:BKM_limsup}
		\end{equation}
	\end{theorem}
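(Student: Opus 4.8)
The plan is to argue by contradiction with the maximality of $t_b$, using Theorem~\ref{THE:local_existence}: it suffices to show that whenever $\int_0^{t_b}\|\pmb{\omega}(t)\|_{\ell^\infty}\,dt < \infty$ with $t_b < \infty$, the norm $\|\mathbf{u}(t)\|_{h^m}$ stays bounded on $[0,t_b)$, since then \eqref{EQ:blowup_Vm_norm} fails and the solution extends past $t_b$. The whole argument rests on a single energy estimate for $\|\mathbf{u}\|_{h^m}^2$ together with a loss-free bound of $\|\nabla\mathbf{u}\|_{\ell^\infty}$ by $\|\pmb{\omega}\|_{\ell^\infty}$, the latter being the decisive simplification relative to the continuous theory.

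First I would differentiate $\|\mathbf{u}\|_{h^m}^2 = (D^m\mathbf{u}, D^m\mathbf{u})$ in time. Applying $D^m$ to \eqref{EQ:Euler} and pairing with $D^m\mathbf{u}$, the pressure term drops: since $D^m$ is multiplication by the real factor $|\mathbf{k}|^m$ it commutes with $\nabla$ and is self-adjoint, so integration by parts~\eqref{EQ:by_parts} turns $(D^m\nabla p, D^m\mathbf{u})$ into $-(D^m p, D^m\,\nabla\cdot\mathbf{u}) = 0$. This leaves the inertial term $-\sum_{i,j}(D^m(u_j\ast\partial_j u_i), D^m u_i)$, which I would split as $u_j \ast \partial_j D^m u_i$ plus the commutator $D^m(u_j\ast\partial_j u_i) - u_j\ast\partial_j D^m u_i$ (note $\partial_j$ and $D^m$ commute). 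The transport part vanishes exactly: using associativity in average~\ref{DEF:PROD_associativity_avg}, commutativity~\ref{DEF:PROD_commutativity} and the Leibniz rule~\ref{DEF:PROD_leibniz} one rewrites $(u_j\ast\partial_j D^m u_i, D^m u_i) = \tfrac12(u_j, \partial_j(D^m u_i \ast D^m u_i))$, and integration by parts together with $\nabla\cdot\mathbf{u}=0$ kills the sum over $j$, exactly as in the energy conservation proof of Theorem~\ref{THM:conserved_quantities}.

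The commutator is where locality of the lattice product does the work. In each triad $\mathbf{p}+\mathbf{q}=\mathbf{k}$ the coupling factors are fixed lattice elements, so $|\mathbf{p}|,|\mathbf{q}|,|\mathbf{k}|$ are mutually comparable up to a constant and hence $\big||\mathbf{k}|^m - |\mathbf{q}|^m\big| \leq C|\mathbf{k}|^{m-1}|\mathbf{p}|$. Writing out the commutator at $\mathbf{k}$ and grouping its factors as $|D^m u_i(\mathbf{k})|$, $|\mathbf{p}|\,|u_j(\mathbf{p})|$ and $|\mathbf{k}|^{m-1}|q_j|\,|u_i(\mathbf{q})|$, the middle factor is bounded by $\|\nabla\mathbf{u}\|_{\ell^\infty}$, the third is controlled by $|D^m u_i(\mathbf{q})|$ via the comparability of magnitudes, and Cauchy--Schwarz over the finitely many triad types handles the outer sum. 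This yields the commutator estimate $|(\,\cdots, D^m\mathbf{u})| \leq C\|\nabla\mathbf{u}\|_{\ell^\infty}\|\mathbf{u}\|_{h^m}^2$, and altogether $\tfrac{d}{dt}\|\mathbf{u}\|_{h^m}^2 \leq C\|\nabla\mathbf{u}\|_{\ell^\infty}\|\mathbf{u}\|_{h^m}^2$. I expect this commutator bound to be the only genuinely technical step, and the lattice locality is precisely what makes it elementary, with no need for a logarithmic Sobolev inequality.

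It remains to control $\|\nabla\mathbf{u}\|_{\ell^\infty}$ by $\|\pmb{\omega}\|_{\ell^\infty}$, and here the lattice gives a log-free bound: the Biot--Savart law~\eqref{EQ:Biot_Savart} yields $|\partial_l u_m(\mathbf{k})| = |k_l|\,|\mathbf{u}(\mathbf{k})| \leq |\mathbf{k}|\,|\mathbf{u}(\mathbf{k})| = |\mathbf{k}\times\pmb{\omega}(\mathbf{k})|/|\mathbf{k}| \leq |\pmb{\omega}(\mathbf{k})|$ pointwise, so $\|\nabla\mathbf{u}\|_{\ell^\infty} \leq C\|\pmb{\omega}\|_{\ell^\infty}$. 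Feeding this in and applying Gr\"onwall's inequality gives $\|\mathbf{u}(t)\|_{h^m}^2 \leq \|\mathbf{u}^0\|_{h^m}^2\exp\!\big(C\int_0^t\|\pmb{\omega}\|_{\ell^\infty}\,ds\big)$, which is finite on $[0,t_b)$ under the assumed integrability, completing the contradiction and proving \eqref{EQ:BKM}. Finally \eqref{EQ:BKM_limsup} is immediate: were $\limsup_{t\nearrow t_b}\|\pmb{\omega}\|_{\ell^\infty}$ finite, then $\|\pmb{\omega}\|_{\ell^\infty}$ would be bounded near the finite time $t_b$ and its integral would converge, contradicting \eqref{EQ:BKM}.
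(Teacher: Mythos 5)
Your proof is correct and reaches the same differential inequality $\tfrac{d}{dt}\|\mathbf{u}\|_{h^m}^2 \leq C\|D\mathbf{u}\|_{\ell^\infty}\|\mathbf{u}\|_{h^m}^2$, the same Biot--Savart bound $\|D\mathbf{u}\|_{\ell^\infty}\leq\|\pmb{\omega}\|_{\ell^\infty}$, and the same Gr\"onwall/contradiction endgame as the paper, but your treatment of the nonlinear term is genuinely different. You follow the classical continuous BKM route: split $D^m(u_j\ast\partial_j u_i)$ into the transport part $u_j\ast\partial_j D^m u_i$, which cancels exactly against $D^m u_i$ via associativity in average, the Leibniz rule and incompressibility, plus a commutator estimated by hand from the comparability of $|\mathbf{p}|,|\mathbf{q}|,|\mathbf{k}|$ within a triad. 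The paper makes no use of this cancellation: it bounds $(D^m(\mathbf{u}\ast\nabla\mathbf{u}),\mathbf{v})\leq\|\mathbf{u}\ast\nabla\mathbf{u}\|_{h^m}\|\mathbf{v}\|_{\ell^2}$ and invokes the Moser-type product estimate~\eqref{EQ:calculus_inequality} from Appendix~\hyperref[app:B]{B} with $\mathbf{f}=\mathbf{u}$, $\mathbf{g}=\nabla u_i$; since $\|\nabla u_i\|_{h^{m-1}}=\|u_i\|_{h^m}$ on the lattice, this crude bound already gives $C\|\mathbf{u}\|_{h^m}^2\|D\mathbf{u}\|_{\ell^\infty}$ with no derivative loss, so the commutator machinery is dispensable here (unlike in the continuous setting, where the analogous direct bound would cost a full extra derivative). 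Your route buys a self-contained argument that makes transparent why no logarithmic correction appears; the paper's buys brevity by outsourcing the work to a reusable product inequality. One caveat you share with the paper's own Appendix~\hyperref[app:B]{B}: the mutual comparability of $|\mathbf{p}|,|\mathbf{q}|,|\mathbf{k}|$ (respectively the finiteness of $M=\max_j|q_j|^2/|p_j|^2$ there) requires every triad coupling factor to be nonzero componentwise, which holds for the lattices of Theorem~\ref{THE:all_triads} without the origin but not verbatim once the zero node is adjoined; as Section~\ref{SEC:local_theory} assumes $\mathbf{u}(\mathbf{0})=\mathbf{0}$, this is a minor bookkeeping point identical in both arguments. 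Your final deduction of~\eqref{EQ:BKM_limsup} from~\eqref{EQ:BKM} is immediate and correct (the paper does not even spell it out).
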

	\begin{proof}
		Let us assume that
		\begin{equation}
		\int_{0}^{t_b}||\pmb{\omega}(t)||_{\ell^\infty} dt = M < \infty.
		\end{equation}
		for a finite $t_b < \infty$.
		We are going to prove that this implies
		\begin{equation}
		\label{EQ:no49}
		||\mathbf{u}(t)||_{h^m} \leq N, \quad \forall t<t_b,
		\end{equation}
		for some constant $N<\infty$, thus contradicting condition~\eqref{EQ:blowup_Vm_norm} of Theorem~\ref{THE:local_existence}.
		To show this, we perform an energy estimate for Eq.~\eqref{EQ:Euler}.
		We set $\mathbf{v} = D^m \mathbf{u}$ and $q = D^m p$ and apply $D^m$ to Eq.~\eqref{EQ:Euler} to obtain
		\begin{equation}
		\partial_t \mathbf{v} = -D^m(\mathbf{u} \ast \nabla \mathbf{u}) - \nabla q.
		\label{EQ:energy_estimate_step1}
		\end{equation}
		Taking the $\ell^2$-inner product of Eq.~\eqref{EQ:energy_estimate_step1} with $\mathbf{v}$ yields
		\begin{equation}
		\frac{1}{2} \frac{d}{dt}||\mathbf{v}||_{\ell^2}^2 = -(D^m(\mathbf{u} \ast \nabla \mathbf{u}),\mathbf{v}) - (\nabla q, \mathbf{v}).
		\label{EQ:energy_estimate_step2}
		\end{equation}
		After integrating by parts, the last term vanishes due to incompressibility as
		\begin{equation}
		(\nabla q, \mathbf{v}) = \sum_{i = 1}^{d}(\partial_i q, v_i) = -\sum_{i = 1}^{d}(q,\partial_i v_i) = -\sum_{i = 1}^{d}(q,D^m \partial_i u_i) = -(q,D^m \nabla \cdot \mathbf{u}) = 0.
		\label{EQ:no52}
		\end{equation}
		Next, we use the following calculus inequality on logarithmic lattices
		\begin{equation}
		||\mathbf{f} \ast \mathbf{g}||_{h^m} \leq C(||\mathbf{f}||_{h^m} ||\mathbf{g}||_{\ell^\infty} + ||D\mathbf{f}||_{\ell^\infty} ||\mathbf{g}||_{h^{m-1}}), \quad \text{for } \mathbf{f} \in h^m, \mathbf{g} \in h^{m-1}
		\label{EQ:calculus_inequality}
		\end{equation}
		for some positive constant $C$; this inequality has a continuous analogue for Sobolev spaces $H^s$ -- see e.g.~\cite[Sec. 3.2.1]{majda2002vorticity} -- and the lattice version~\eqref{EQ:calculus_inequality} is proved in the Appendix~\hyperref[app:B]{B}.
		Then, the nonlinear term in (\ref{EQ:energy_estimate_step2}) can be estimated using $\mathbf{f} = \mathbf{u}$ and $\mathbf{g} = \nabla u_i$ as
		\begin{equation}
		\begin{array}{rcl}
		(D^m(\mathbf{u} \ast \nabla \mathbf{u}),\mathbf{v}) & \leq & ||\mathbf{u} \ast \nabla \mathbf{u}||_{h^m} ||\mathbf{v}||_{\ell^2} 
		\\[3pt]
		& \leq & \displaystyle
		C||\mathbf{v}||_{\ell^2}\sum_{i = 1}^d\left( 
		||\mathbf{u}||_{h^m} ||\nabla u_i||_{\ell^\infty} 
		+ ||D\mathbf{u}||_{\ell^\infty} ||\nabla u_i||_{h^{m-1}}
		\right)
		\\[15pt]
		& \leq & \displaystyle
		2dC||\mathbf{v}||_{\ell^2} ||\mathbf{u}||_{h^m} ||D\mathbf{u}||_{\ell^\infty}
		=
		C'||\mathbf{v}||_{\ell^2}^2 ||D\mathbf{u}||_{\ell^\infty},
		\end{array}
		\label{EQ:no54}
		\end{equation}
		where at the end we used $||\mathbf{u}||_{h^m} =  ||D^m \mathbf{u}||_{\ell^2} = ||\mathbf{v}||_{\ell^2}$ and set $C' = 2dC$.
		Substituting relations (\ref{EQ:no52}) and (\ref{EQ:no54}) into \eqref{EQ:energy_estimate_step2} yields
		\begin{equation}
		\frac{d}{dt}||\mathbf{v}||_{\ell^2}^2 \leq 2C'||\mathbf{v}||_{\ell^2}^2 ||D\mathbf{u}||_{\ell^\infty},
		\end{equation}
		and applying Gronwall's Inequality, we are lead to
		\begin{equation}
		||\mathbf{v}(t)||_{\ell^2} \leq ||\mathbf{v}(0)||_{\ell^2} \exp \left( C' \int_{0}^{t} ||D\mathbf{u}(s)||_{\ell^\infty}ds \right).
		\end{equation}
		Finally, using the estimate
		\begin{equation}
		||D\mathbf{u}||_{\ell^\infty} \leq ||\pmb{\omega}||_{\ell^\infty},
		\end{equation}
		which follows from the Biot-Savart law~\eqref{EQ:Biot_Savart}, and recalling again that $||\mathbf{v}||_{\ell^2} = ||\mathbf{u}||_{h^m}$, we obtain
		\begin{equation}
		||\mathbf{u}(t)||_{h^m} \leq ||\mathbf{u}(0)||_{h^m} \exp \left( C' \int_{0}^{t_b} ||\pmb{\omega}(s)||_{\ell^\infty}ds \right) \leq N, \quad \forall t \in [0,t_b)
		\end{equation}
		for $N = ||\mathbf{u}(0)||_{h^m} \exp(C'M) < \infty$. This is the inequality (\ref{EQ:no49}), which led us to contradiction.
	\end{proof}
	
	\begin{corollary}
		Strong solutions $\mathbf{u}(t)$ of the two-dimensional incompressible Euler equations~\eqref{EQ:Euler} exist globally in time.
	\end{corollary}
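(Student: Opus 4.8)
The plan is to derive global existence for the two-dimensional problem by feeding the conserved enstrophy into the Beale--Kato--Majda criterion of Theorem~\ref{THE:BKM}, exploiting a feature peculiar to the discrete setting: on the logarithmic lattice the $\ell^\infty$ norm is controlled by the $\ell^2$ norm. The point is that, whereas the embedding $L^2 \hookrightarrow L^\infty$ fails in the continuous case, for any function $g$ on $\mathbb{\Lambda}^d$ one has the elementary bound $\|g\|_{\ell^\infty} = \sup_{\mathbf{k}} |g(\mathbf{k})| \leq \big(\sum_{\mathbf{k}} |g(\mathbf{k})|^2\big)^{1/2} = \|g\|_{\ell^2}$, since a single term never exceeds the whole sum. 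This turns the $\ell^2$-type inviscid invariant (enstrophy) into an $\ell^\infty$ bound on the vorticity, which is exactly the quantity appearing in the blowup criterion.

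First I would take a strong solution $\mathbf{u}(t) \in C^1([0,t_b);V^m)$ with maximal existence time $t_b$, as supplied by Theorem~\ref{THE:local_existence}. Next I would invoke the two-dimensional part of Theorem~\ref{THM:conserved_quantities}: the enstrophy $\Omega(t) = \tfrac{1}{2}(\pmb{\omega},\pmb{\omega}) = \tfrac{1}{2}\|\pmb{\omega}(t)\|_{\ell^2}^2$ is constant, hence $\|\pmb{\omega}(t)\|_{\ell^2} = \|\pmb{\omega}(0)\|_{\ell^2}$ for all $t < t_b$. Combining this with the discrete embedding above gives the uniform bound
\begin{equation}
\|\pmb{\omega}(t)\|_{\ell^\infty} \leq \|\pmb{\omega}(t)\|_{\ell^2} = \|\pmb{\omega}(0)\|_{\ell^2}, \quad \forall\, t < t_b.
\end{equation}
Then I would argue by contradiction: if $t_b < \infty$, integrating this bound yields
\begin{equation}
\int_0^{t_b} \|\pmb{\omega}(t)\|_{\ell^\infty}\,dt \leq t_b\,\|\pmb{\omega}(0)\|_{\ell^2} < \infty,
\end{equation}
which violates the blowup condition~\eqref{EQ:BKM} of Theorem~\ref{THE:BKM}. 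Therefore no finite maximal time can occur, i.e.\ $t_b = \infty$, and the solution is global.

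I do not expect any genuine obstacle in this argument; the proof is short precisely because the discrete structure trivializes the step that is delicate in the continuous theory. The only point requiring a moment of care is to confirm that the conserved quantity bounding $\|\pmb{\omega}\|_{\ell^\infty}$ is the same vorticity entering the BKM criterion, which is immediate since both are expressed through $\pmb{\omega} = \nabla \times \mathbf{u}$. For comparison, continuous two-dimensional Euler requires transport of vorticity, conservation of every $L^p$ norm, and a logarithmic Sobolev-type inequality to control $\|\omega\|_{L^\infty}$; here all of that machinery collapses to the single inequality $\|g\|_{\ell^\infty} \leq \|g\|_{\ell^2}$ together with the already-established enstrophy conservation.
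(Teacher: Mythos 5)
Your argument is correct and is essentially the paper's own proof: both rely on the conservation of enstrophy from Theorem~\ref{THM:conserved_quantities} combined with the elementary lattice inequality $\|\pmb{\omega}\|_{\ell^\infty} \leq \|\pmb{\omega}\|_{\ell^2}$ to rule out the blowup alternative of Theorem~\ref{THE:BKM}. The only cosmetic difference is that you contradict the integral condition~\eqref{EQ:BKM} while the paper contradicts the equivalent limsup condition~\eqref{EQ:BKM_limsup}.
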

	\begin{proof}
		From Theorem~\ref{THM:conserved_quantities}, strong solutions of the two-dimensional Euler equations conserve the $\ell^2$ norm $||\pmb{\omega}||_{\ell^2}$ of the vorticity.
		Hence, the inequality $||\pmb{\omega}||_{\ell^\infty} \leq ||\pmb{\omega}||_{\ell^2}$ on the lattice prevents condition~\eqref{EQ:BKM_limsup} to take place.
	\end{proof}
	
	\subsection{Blowup in incompressible 3D Euler equations}\label{SEC:blowup}
	
	Whether three-dimensional incompressible Euler flow develops a singularity in finite time (also called \textit{blowup}) remains a challenging open mathematical problem. According to the BKM criterion, the singularity implies a spontaneous generation of infinitely large vorticity.
	Such singularity is anticipated by Kolmogorov's theory of turbulence~\cite{frisch1999turbulence}, which predicts that the vorticity diverges at small scales as $\delta \omega \sim \ell^{-2/3}$ when energy is transferred from integral to viscous scales.
	In this context, blowup could reveal an efficient mechanism for the energy cascade and, for this reason, it is often considered a cornerstone for the theory of turbulence.
	
	In addition to purely mathematical approaches, see e.g.~\cite{chae2008incompressible,tao2016finite} and very recent achievements~\cite{chen2019finite,elgindi2019finite}, the blowup problem was intensively investigated through Direct Numerical Simulations (DNS)~\cite{gibbon2008three,grafke2008numerical,hou2009blow}.
	However, numerical results appear to be rather inconclusive, with the controversy~\cite{kerr1993evidence,hou2006dynamic} only growing with the increase of resolution.
	Naturally, several simplified models have been investigated for understanding possible blowup scenarios, e.g.~\cite{constantin1985simple,uhlig1997singularities,dombre1998intermittency,mailybaev2012renormalization}.
	Despite being rather successful in the study of turbulence~\cite{biferale2003shell} and serving as a useful testing ground for mathematical analysis, e.g.~\cite{katz2005finite,cheskidov2008blow}, these models fall short of reproducing basic features of Euler's blowup phenomenon:
	they lack important properties of Euler's flow, such as incompressibility and conservation of circulation, and often show dynamical behavior atypical for Euler solutions, such as self-similarity~\cite{chae2007nonexistence,chae2013formation}.
	Note that we do not discuss here boundary effects~\cite{luo2013potentially}, which set a different open problem.
	
	Unlike many previous simplified models, the Euler equations on logarithmic lattices retain most structural properties of the original equations, as we showed in Section~\ref{SEC:governing_equations}.
	In the work~\cite{campolina2018chaotic,campolina2019fluid}, we presented a numerical evidence of chaotic blowup in the three-dimensional Euler system on a golden-mean logarithmic lattice. 
	Now we extend these previously reported results by testing the robustness of our conclusions on different lattices.
	For the comparison, we consider the golden mean $\lambda = \varphi$ and the plastic number $\lambda = \sigma$, which provide two lattices $\mathbb{\Lambda}^3$ with increasing resolution -- see Fig.~\ref{FIG:2D_lattice}; here, $\mathbb{\Lambda} = \{ \pm1, \pm \lambda, \pm \lambda^2, \dots \}$ is taken, with no zero component.
	We remark that the spacing factor $\lambda = 2$ does not provide a reliable model for the blowup study, because the incompressibility condition together with a small number of triad interactions cause degeneracies in coupling of different modes.
	
	\textsc{Numerical model.}
	Aiming for the study of blowup, initial conditions are chosen to have nonzero components limited to large scales, with wavenumbers $1 \le |k_i| \le \varphi^2 = (3+\sqrt{5})/2$.
	This corresponds to a box of three excited modes in each direction for the golden mean and four modes for the plastic number lattice spacing. 
	The velocities at these modes are explicitly given in the form
	\begin{equation}
	u_j(\mathbf{k}) = \sum_{m,n = 1}^{3}\frac{|\epsilon_{jmn}|}{2}k_m k_n e^{i\theta_j(\mathbf{k})-|\mathbf{k}|}, \quad \text{for} \quad j=1,2.
	\end{equation}
	Here $\epsilon_{jmn}$ is the Levi-Civita permutation symbol and the phases $\theta_j$ are given by
	\begin{equation}
	\theta_j(\mathbf{k}) = \mathrm{sgn}(k_1)\alpha_j + \mathrm{sgn}(k_2)\beta_j
	+\mathrm{sgn}(k_3)\delta_j  + \mathrm{sgn}(k_1k_2k_3)\gamma_j,
	\end{equation}
	with the constants $(\alpha_1,\beta_1,\delta_1,\gamma_1) = (1,-7,13,-3)/4$ and $(\alpha_2,\beta_2,\delta_2,\gamma_2) = (-1,-3,11,7)/4$.
	The third component of velocity is uniquely defined by the incompressibility condition.
	Clearly, because the nodes of different lattices do not match, it is impossible to test the same initial condition on different lattices.
	
	\begin{figure*}[t]
		\centering
		\includegraphics[width=.79\textwidth]{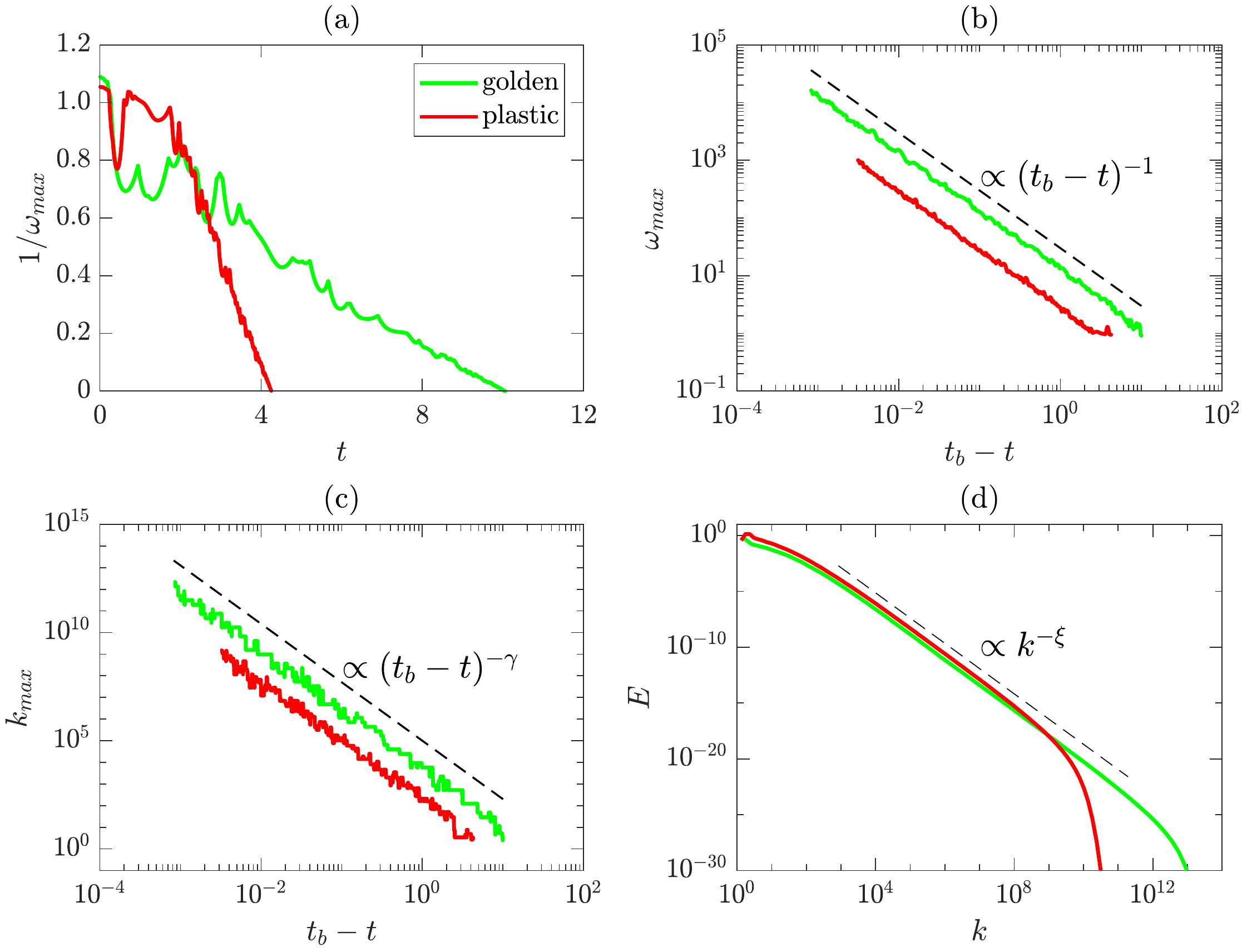}
		\caption{Comparison of the Euler blowup dynamics for different lattice spacings -- the golden mean $\varphi$ in green and plastic number $\sigma$ in red. Owing to the lower computational cost, the simulation for the golden mean spans a larger spatial range. (a) Dynamic evolution of the inverse maximum vorticity $1/\omega_{\max}$, reaching the blowup times $t_b = 4.255$ and $10.052$ for golden and plastic lattice spacings, respectively. (b) The maximum vorticity $\omega_{\max}$ in logarithmic scale fitting in average the power law $\sim (t_b - t)^{-1}$. (c) Wave number $k_{\max}$ where the maximum vorticity occurs in logarithmic scale, following the asymptotic $\sim (t_b -t)^{-\gamma}$ with $\gamma = 2.70$. (d) The energy spectrum~\eqref{EQ:energy_spectrum} at the final time of integration for each simulation, developing the power-law $E(k) \propto k^{-\xi}$, where $\xi = 2.26$.}
		\label{FIG:comp_scalings}
	\end{figure*}
	
	Like in usual DNS, we consider the Euler equations in vorticity formulation~\eqref{EQ:Euler_vorticity}, where the velocity field is recovered through the Biot-Savart law~\eqref{EQ:Biot_Savart}.
	The equations are integrated numerically with double-precision using the fourth-order Runge-Kutta-Fehlberg adaptive scheme~\cite{fehlberg1969low}.
	The time step was dynamically defined in order to keep the relative error for $\omega_{\max}(t) = \max_{\mathbf{k}}|\pmb{\omega}(\mathbf{k},t)|$ below $10^{-6}$.
	Since only a finite number $N$ of modes in each direction can be simulated, the infinite-dimensional nature of the problem was tracked by implementing the following spatial adaptive scheme.
	At each time step, we compute the enstrophy $\Omega(t) = \frac{1}{2} (\pmb{\omega},\pmb{\omega})$ due to the modes with wave vectors $|\mathbf{k}| \geq K_{\max}/\lambda$, where $K_{\max}$ is the largest wave number in each direction.
	This quantity estimates the enstrophy error (i.e., $\ell^2$ norm of vorticity) due to the mode truncation and it was kept below $10^{-15}$ during the whole simulation.
	Every time this threshold was reached we increased the number of modes in each direction by $5$, i.e., multiplying $K_{\max}$ by $\lambda^5$.
	We stopped the simulation for the plastic number with $N = 95$, thus covering a spatial range of $K_{\max} = \sigma^{95} \approx 10^{11}$.
	Due to the higher spacing value, the golden mean allows to cover a larger spatial range with less modes.
	In this case, the simulation was stopped with $N = 70$, which corresponds to $K_{\max} = \varphi^{70} \approx 10^{14}$.
	For the simulations of both lattice spacings, the energy was conserved during the whole time of integration with a relative error below $10^{-6}$.
	
	\textsc{Results.}
	Before presenting our new results, we briefly review the previous conclusions reported in~\cite{campolina2018chaotic}, where a simulation of ideal flow was conducted on a golden mean logarithmic lattice.
	A large amplification of maximum vorticity $\omega_{\max}$ within a finite time $t_b$ was observed, demonstrating an asymptotic blowup solution $\omega_{\max}(t) \sim (t_b - t)^{-1}$, followed by a power-law development in the energy spectrum as $E(k) \propto k^{-\xi}$, $\xi \approx 2.26$.
	Such blowup is linked to a chaotic wave in a renormalized system, traveling with constant average speed $\gamma \approx 2.70$.
	The chaotic behavior justifies the high sensitivity to perturbations, which is encountered in full DNS~\cite{grafke2008numerical} and has theoretical foundation in developed turbulence~\cite{ruelle1979microscopic}.
	Instability of blowup solutions is also observed in other simplified models~\cite{uhlig1997singularities,mailybaev2012c,de2017chaotic} and was proved recently for the full incompressible 3D Euler equations~\cite{vasseur2019blow}.
	The chaotic attractor restores the isotropy in the statistical sense, even though the solution at each particular moment is essentially anisotropic, in similarity to the recovery of isotropy in the Navier-Stokes turbulence~\cite{frisch1999turbulence,biferale2005anisotropy}.
	The striking property of the attractor is its multi-scale character, covering six decades in Fourier space, which seems impossible to be reproduced by the modern numerical techniques in full DNS.
	At the respective scales, the solution of the logarithmic model displays properties that can be associated with typical coherent structures of full DNS, e.g. the effect of two-dimensional depletion~\cite{pumir1990collapsing,agafontsev2017asymptotic}.
	For more details, see~\cite{campolina2018chaotic}.
	
	Fig.~\ref{FIG:comp_scalings} compares the numerical integrations of the Euler equations for golden and plastic lattice spacings.
	Though solutions are, of course, different at earlier times, they demonstrate very close (numerically indistinguishable) asymptotic blowup dynamics, which we analyze in details now.
	Figs.~\ref{FIG:comp_scalings}(a) and~\ref{FIG:comp_scalings}(b) show the dynamic evolution of $\omega_{\max}(t)$.
	BKM blowup criterion -- see Theorem~\ref{THE:BKM} -- requires an asymptotic growth of at least $\omega_{\max}(t) \sim (t_b - t)^{-1}$ as $t$ approaches the blowup time $t_b$.
	This is verified for both simulations by plotting the inverse value $1/\omega_{\max}(t)$ in Fig.~\ref{FIG:comp_scalings}(a), providing the blowup times $t_b = 4.255 \pm 0.001$ and $t_b = 10.052 \pm 0.001$ for the plastic number and golden mean, respectively.
	Fig.~\ref{FIG:comp_scalings}(b) shows the same results in logarithmic scale verifying the asymptotic $\omega_{\max}(t) \sim (t_b - t)^{-1}$.
	Note that a growth of five orders of magnitude is observed for the golden mean.
	The wave number $k_{\max}$ at which the maximum vorticity occurs also grows asymptotically as $k_{\max} \sim (t_b - t)^{-\gamma}$ with the same exponent $\gamma = 2.70 \pm 0.01$ for the two simulations, as shown in Fig.~\ref{FIG:comp_scalings}(c).
	At last, the energy spectrum
	\begin{equation}
	E(k) = \frac{1}{2 \Delta} \sum_{k \leq |\mathbf{k}'| < \lambda k} |\mathbf{u}(\mathbf{k}')|^{2}, \quad \text{with } \Delta = \lambda k - k,
	\label{EQ:energy_spectrum}
	\end{equation}
	develops a power-law $E(k) \propto k^{-\xi}$.
	A dimensional argument~\cite{campolina2018chaotic} predicts the exponent $\xi$ depending upon $\gamma$ as $\xi = 3 - 2/\gamma \approx 2.26$, confirmed in Fig.~\ref{FIG:comp_scalings}(d).
	
	\textsc{Chaotic attractor.}
	We argued in~\cite{campolina2018chaotic} that the blowup dynamics in the Euler system on a logarithmic lattice is associated to a chaotic wave in a renormalized system.
	This chaotic behavior exemplifies the fundamental instability, which is necessary for blowup in the incompressible 3D Euler equations, as proved recently in~\cite{vasseur2019blow}.
	Here, we compare the attractors for the two simulations. These attractors are visualized using the renormalized variables
	\begin{equation}
	\widetilde{\pmb{\omega}} = (t_b-t)\pmb{\omega}, \quad \eta = \log|\mathbf{k}|, \quad
	\mathbf{o} = \mathbf{k}/|\mathbf{k}|, \quad \tau = -\log(t_b-t),
	\label{Renorm}
	\end{equation}
	which apply similarly in Fourier space $\mathbb{R}^3$ and in our lattice $\mathbb{\Lambda}^3$.
	The Euler equations can be rewritten as a dynamical system with respect to these new coordinates; consult~\cite{campolina2018chaotic} for more details. 
	Fig.~\ref{FIG:attractor} shows the time evolution, in renormalized variables, of the solutions on the two different lattices.
	For the comparison, we plot the vorticities $\widetilde{\pmb{\omega}}$ normalized with respect to their correspondent maximum values $\widetilde{\omega}_{\max}$.
	The renormalized time for the plastic number is shifted $\tau \mapsto \tau + \tau_0$ by $\tau_0 = -1.2$ for the attractors to be aligned in space.
	
	\begin{figure*}[t]
		\centering
		\includegraphics[width=\textwidth]{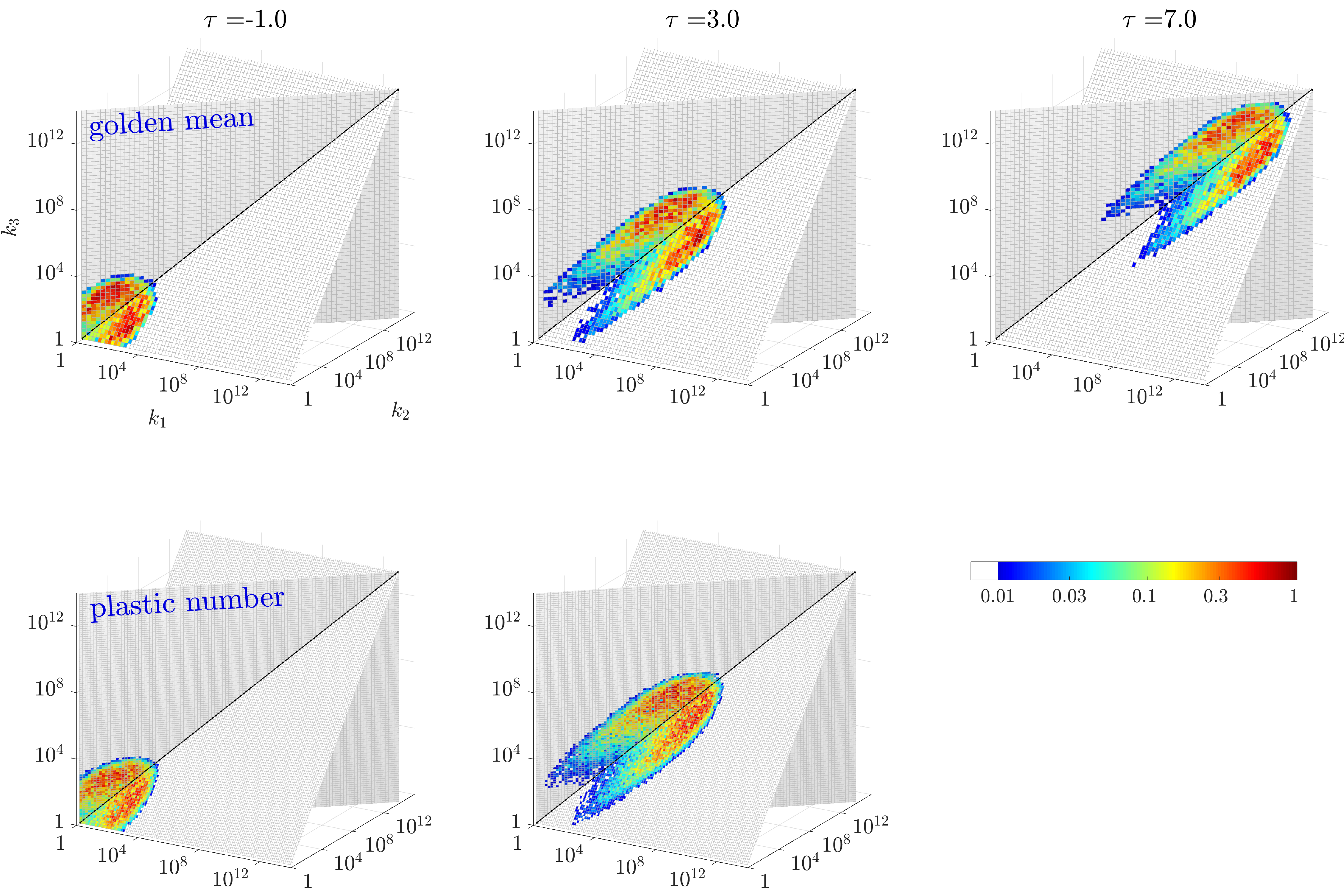}
		\caption{Absolute value of renormalized vorticities $|\widetilde{\pmb{\omega}}|$ plotted on sections of 3D Fourier space, in logarithmic scales, at three different instants $\tau$. For comparison, the vorticities are normalized with respect to their maximum norm $\widetilde{\omega}_{\max}$; values below $0.01$ are plotted in white. The first row shows the evolution on the golden and the second row on the plastic lattice. Owing to the lower computational cost, the simulation for the golden mean was integrated for longer renormalized times $\tau$.}
		\label{FIG:attractor}
	\end{figure*}
	
	The solutions in Fig.~\ref{FIG:attractor} show convergence to chaotic waves, which travel through the main diagonal of Fourier space with the same constant speed $\gamma$.
	In~\cite{campolina2018chaotic}, we demonstrated their chaotic nature by computing a positive largest Lyapunov exponent.
	The chaotic attractors look surprisingly similar despite the quite distinct resolutions furnished by the two lattices.
	
	\section{Viscous incompressible flow and turbulence}
	\label{SEC:viscous_flow}
	
	In this section, we introduce a viscous dissipative term and a forcing $\mathbf{f}$ into the Euler equations~\eqref{EQ:Euler}, leading to the incompressible 3D Navier-Stokes equations on a logarithmic lattice
	\begin{equation}
	\partial_t \mathbf{u} + \mathbf{u} \ast \nabla \mathbf{u} = -\nabla p + \nu \Delta \mathbf{u} + \mathbf{f}, \quad \nabla \cdot \mathbf{u} = 0,
	\label{EQ:Navier-Stokes}
	\end{equation}
	where $\nu \geq 0$ is the kinematic viscosity.
	We will focus on testing some fundamental properties of hydrodynamic turbulence, when the viscous term is responsible for dissipating energy at small scales of the flow while the force injects it at large scales.
	Following the same lines of derivations as for the continuous model, we deduce the balance for the energy~\eqref{EQ:euler_energy} as
	\begin{equation}
	\frac{dE}{dt} = -2\nu \Omega(t) + F(t),
	\end{equation}
	where $\Omega(t)$ is the enstrophy~\eqref{EQ:enstrophy} and $F(t) = (\mathbf{u},\mathbf{f})$ is the work done by external forces.
	The term $\varepsilon = 2\nu \Omega$ is the total dissipation rate of the flow.
	
	\subsection{Anomalous dissipation}\label{SEC:anomalous}
	
	A major feature of turbulent flows is the non-vanishing energy dissipation rate $\epsilon >0$ in the limit of large Reynolds numbers, which can also be formulated mathematically as the limit of vanishing viscosity $\nu \to 0$.
	This apparently paradoxical phenomenon is known as \textit{dissipation anomaly}~\cite{onsager1949statistical,eyink2006onsager} and has found confirmation in many experiments~\cite{pearson2002measurements} and numerical simulations~\cite{kaneda2003energy}.
	
	Dissipation anomaly is conveniently quantified by considering the evolution of energy through different scales. 
	We derive from Eq.~\eqref{EQ:Navier-Stokes} the scale-by-scale energy budget equation
	\begin{equation}
	\partial_t E_k = \Pi_k -2\nu \Omega_k + F_k.
	\end{equation}
	Here, using the notation
	\begin{equation}
	(f,g)_k = \sum_{|\mathbf{k}'| \leq k} f(\mathbf{k}')\overline{g(\mathbf{k}')},
	\end{equation}
	we have introduced the \textit{cumulative energy} between wave number $0$ and $k$
	\begin{equation}
	E_k = \frac{1}{2} (\mathbf{u},\mathbf{u})_k,
	\end{equation}
	the \textit{cumulative enstrophy}
	\begin{equation}
	\Omega_k = \frac{1}{2} (\pmb{\omega},\pmb{\omega})_k,
	\end{equation}
	the \textit{cumulative energy injection}
	\begin{equation}
	F_k = (\mathbf{u}, \mathbf{f})_k,
	\end{equation}
	and the \textit{energy flux}
	\begin{equation}
	\Pi_k = -(\mathbf{u}, \mathbf{u} \ast \nabla \mathbf{u} + \nabla p)_k.
	\label{EQ:energy_flux}
	\end{equation}
	Statistical steady state in a turbulent flow is achieved when $\partial_t \langle E_k \rangle = 0$.
	In this regime, the mean energy flux $\langle \Pi_k \rangle$ balances with the mean energy dissipation $\langle -2\nu \Omega_k \rangle$ and the work of external forces $\langle F_k \rangle$.
	Since for small viscosities it is typical to have energy injection confined to large scales and energy dissipation confined to small scales, a dissipation anomaly is related to the development of a constant energy flux in the intermediate range called the \textit{inertial interval}.
	In our definition, a positive energy flux corresponds to a (direct) cascade of energy from large to small scales.
	
	In order to compute the energy flux, we consider the Navier-Stokes equations~\eqref{EQ:Navier-Stokes} on the three-dimensional logarithmic lattice of spacing $\lambda = \varphi$, the golden mean.
	The energy is injected at large scales $\varphi \leq |k_{1,2,3}| \leq \varphi^3$ through a constant-in-time force with randomly generated components.
	To obtain an extended inertial interval, the viscous forces $\nu \Delta \mathbf{u}$ were replaced by a hyper-viscous term $-\nu (-\Delta)^{h}\mathbf{u}$ with $h = 2$.
	For models with local triad interactions, it is expected that the dynamical statistics are ultraviolet robust, i.e., does not depend on the detailed dissipation mechanism at small scales~\cite{benzi1999intermittency,lvov1998universal}.
	The model was integrated with double-precision using the first-order exponential time-splitting method~\cite{cox2002exponential}: at each time step, we first use the fourth-order Runge-Kutta method to integrate the Euler equations and next we multiply the resulting solution by the exponential factor $e^{(-\nu |\mathbf{k}|^{2h}\Delta t)}$, where $\Delta t$ is the time step.
	
	Fig.~\ref{FIG:flux} shows the mean energy flux $\langle \Pi_k \rangle$ along scales $k$ for different viscosities.
	The energy flux reaches the same constant positive value for all viscosities and the inertial range extends to smaller scales as the viscosity decreases, which indicates the development of a dissipation anomaly in the limit $\nu \to 0$.
	
	\begin{figure*}[t]
		\centering
		\includegraphics[width=.6\textwidth]{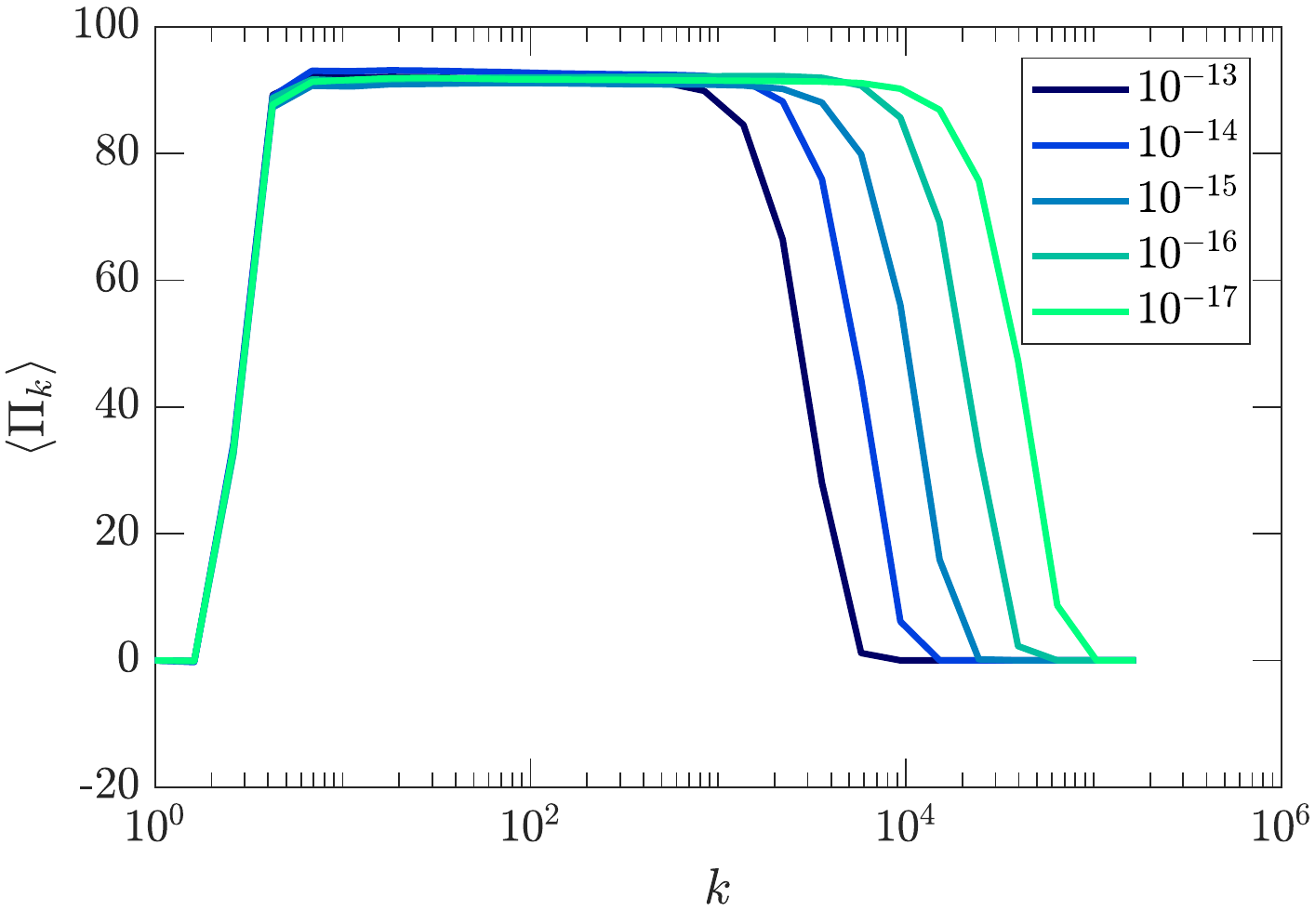}
		\caption{Mean energy flux $\langle \Pi_k \rangle$ along wave numbers $k$, in logarithmic scale, for different values of the hyper-viscous parameter $\nu = 10^{-13}$, $10^{-14}$, $10^{-15}$, $10^{-16}$ and $10^{-17}$. In our notation, a positive energy flux corresponds to a direct cascade of energy.}
		\label{FIG:flux}
	\end{figure*}
	
	\subsection{Statistics of Fourier modes}
	
	The Navier-Stokes equations on the golden mean lattice were integrated using the same numerical procedure described in Section~\ref{SEC:anomalous}, with hyper-viscous term and viscosity $\nu = 10^{-13}$.
	The probability distribution functions (PDF's) were numerically estimated through a histogram binning procedure using the statistics accumulated within a sample time $T$.
	In terms of turnover time $T_0 = 1/|\mathbf{k}_0|U_0$, where $\mathbf{k}_0 = (1,1,1)$ is the wave vector of integral scale and $U_0 = \langle |\mathbf{u}(\mathbf{k}_0)|^2 \rangle^{1/2}$, the sample time $T$ was larger than $90 T_0$.
	The PDF's of $\text{Re}[u_1(\mathbf{k}_n)]$, in units of their root-mean-square $\langle \text{Re}[u_1(\mathbf{k}_n)]^2 \rangle^{1/2}$ values are shown in Fig.~\ref{FIG:PDF}, for several wave vectors $\mathbf{k}_n = \lambda^n \mathbf{k}_0$ rescaled along the main diagonal of Fourier space.
	
	\begin{figure*}[t]
		\centering
		\includegraphics[width=\textwidth]{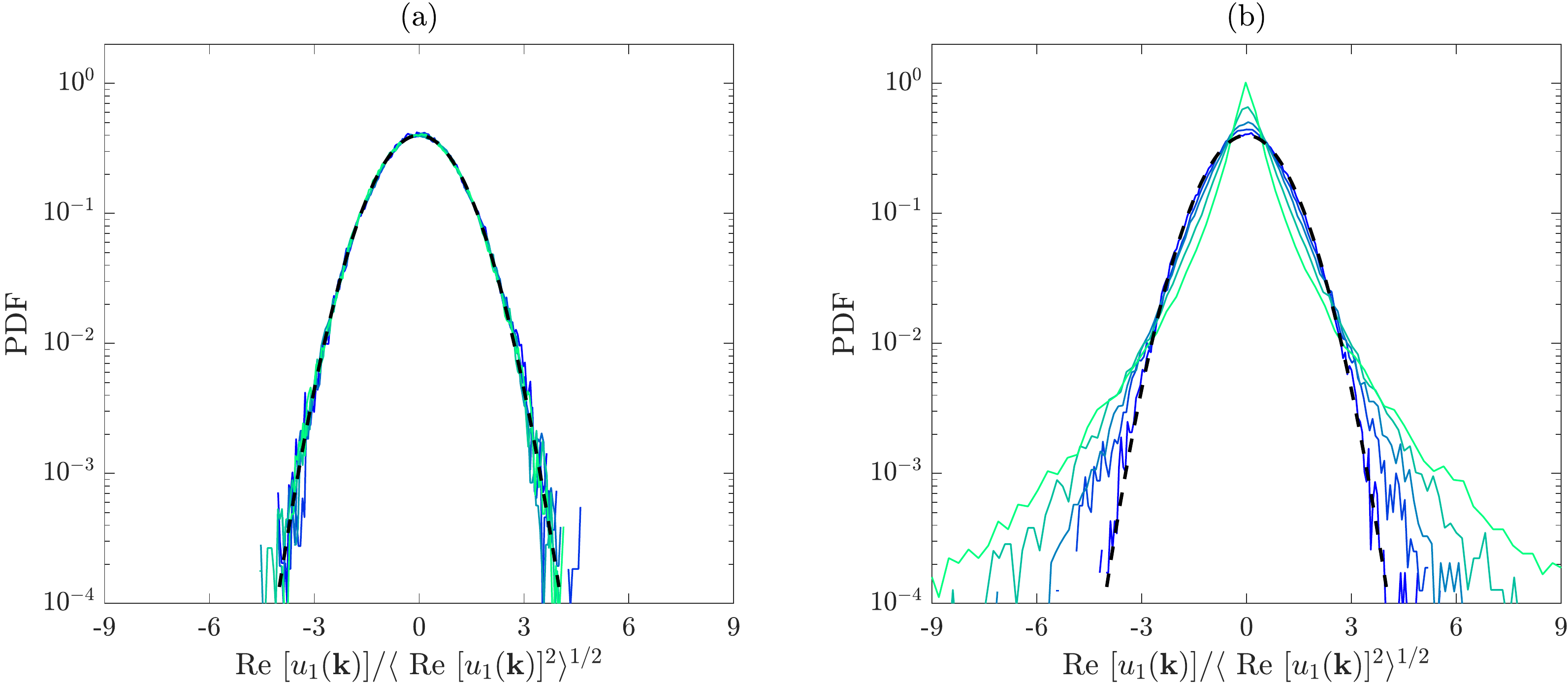}
		\caption{Normalized PDF's in logarithmic scale of the real part of $x$ velocity, $\text{Re} [u_1(\mathbf{k})]$, at different wave vectors $\mathbf{k}_n = \lambda^n \mathbf{k}_0$, $\mathbf{k_0} = (1,1,1)$, rescaled along the main diagonal of Fourier space. Scales decrease from darker to lighter colors. Gaussian distribution of zero mean and unit variance is shown for comparison in black dashed line. (a)~Statistics of inertial-range wave vectors $\mathbf{k}_n$, with $n = 10,\dots,15$; (b)~statistics of viscous-range wave vectors $\mathbf{k}_n$, with $n = 18,\dots,22$.}
		\label{FIG:PDF}
	\end{figure*}
	
	Fig.~\ref{FIG:PDF}(a) shows the statistics at inertial-range wave vectors $\mathbf{k}_n$, for $n = 10,\dots,15$.
	The PDF's for all scales are very close to a Gaussian distribution. 
	Similar Gaussian distributions for inertial-range Fourier components were observed for developed turbulence through full DNS~\cite{brun2001statistics} and laboratory experiments~\cite{petrossian1997sound,mouri2002probability,chevillard2005statistics}.
	For a flow of characteristic large scale $L$ and finite correlation length $\ell$ in physical space, the univariate statistics of Fourier modes in the inertial range are normally distributed in the asymptotic limit $\ell/L \to 0$, as a particular case of the Central Limit Theorem for weighted integrals~\cite{lumley1972statistical}.
	For these reasons, it is commonly argued that Fourier modes are not well suited for the study of extreme events that proportionate inertial-range turbulent intermittency.
	
	Large fluctuations in Fourier modes can only appear when viscous processes become important and initiate a complex interplay between nonlinearity and dissipation.
	In this regime, the velocity field exhibits strong intermittency, associated with spatial variation of large-scale motion rather than with intense small-scale structures~\cite{chen1993far}.
	Unlike what occurs in the inertial range, dissipative intermittency leaves fingerprints on viscous-range Fourier components, whose statistics develop widening of tails at smaller scales~\cite{brun2001statistics}.
	Such behavior is also reproduced by the logarithmic model.
	Fig.~\ref{FIG:PDF}(b) shows the statistics at viscous-range wave vectors $\mathbf{k}_n$, for $n = 18,\dots,22$, where we observe an increasing deviation from Gaussian distribution as we move towards finer scales of the flow.
	
	As presented above, there is a strong similarity between statistics of lattice variables from the logarithmic model and Fourier components of the full Navier-Stokes equations; for instance, compare Figs.~\ref{FIG:PDF}(a,b) of the present paper with Figs.~1(f,b) from the DNS results in~\cite{brun2001statistics}.
	However, it is quite intriguing that the Gaussian behavior in our model is in sharp contrast with statistics of other simplified models, which usually present some degree of inertial-range intermittency.
	We turn now to a brief discussion about their statistical behavior.
	Shell models of turbulence exhibit chaotic intermittent dynamics in the inertial interval with statistical properties close to the Navier-Stokes developed turbulence~\cite{gledzer1973system,ohkitani1989temporal,l1998improved}.
	On the other hand, the reduced wave vector set approximation (REWA) model displays only weak intermittency~\cite{eggers1991does,grossmann1996developed}. 
	A possible explanation for this feature was given in~\cite{brun2001statistics}, where it is argued that REWA model can be written in a spherical model framework~\cite{mou1993spherical} consisting of $N$ interacting subsystems each one describing the evolution of a velocity component in a certain direction.
	In this framework, modes should have Gaussian statistics~\cite{eyink1994large} and anomalous fluctuations would be destroyed in the limit $N \to \infty$~\cite{pierotti1997intermittency}.
	A tendency towards less intermittent regime when increasing the couplings is also observed in the tree models of turbulence~\cite{benzi19971}.
	In view of these results, is reasonable to relate the non-intermittent Fourier modes in our model to its rich triad couplings, although no rigorous conclusions can be made.
	How much Fourier decimation decreases intermittency in physical space is also not clear.
	This was observed for the Burgers equation with random decimation~\cite{buzzicotti2016intermittency} and for the Navier-Stokes equations, decimating from full to REWA model~\cite{grossmann1996developed}, but not for Sabra model~\cite{mailybaev2015continuous}, which retains turbulent intermittent dynamics in physical space.
	
	We repeat that the absence of anomalous fluctuations in individual Fourier modes does not mean lack of intermittency in the flow, since this is exactly the scenario for developed turbulence, and because the intermittency is seen in the same way within the dissipative range.
	To determine whether our model mimics physical-space intermittency or not, it would be necessary to probe it properly.
	The challenging question is precisely how to capture intermittency fingerprints on Fourier variables~\cite{brun2001statistics}, the only available quantities for our model. We also remark that a proper definition of integral quantities, like structure functions, should take into account lattice volumes, because these volumes vary considerably among different cells. For example, a straightforward way to introduce structure functions would be through powers of the energy flux, where all terms are properly balanced: $S_p(k) = \langle |\Pi_k|^{p/3} \rangle$. Using these definitions, our numerical simulations confirm the exact Kolmogorov scaling of structure functions in the inertial range, while the location of the dissipative range depends on $p$ in agreement with the dissipative intermittency of Fig.~\ref{FIG:PDF}(b). One faces the same subtlety when introducing a proper analogue for the Kolmogorov energy spectrum. We leave the more detailed analysis of these interesting but non-trivial questions to future work.
	
	\section{Conclusions}
	
	We propose a new strategy for constructing simplified models of fluid dynamics, which restricts the governing equations in their original form to a multi-dimensional logarithmic lattice in Fourier space.
	This domain receives a specially designed operational structure, which retains most of the usual calculus and algebraic properties.
	As a consequence, the resulting models preserve all symmetries (some in discrete form, namely scaling invariance and isotropy), inviscid invariants (energy and helicity, for 3D flow; energy and enstrophy, for 2D flow), and also reproduces some fine properties of Euler flow, like incompressibility and Kelvin's Circulation Theorem.
	The classification of all possible lattices supporting this construction allows us to obtain different dynamical models sharing all the above properties, and so to test the robustness and universality of the results they provide.
	Because of the strongly decimated domain, the logarithmic models can be easily simulated with great accuracy and covering a large spatial range.
	Furthermore, the solutions correlate with existing DNS at the correspondent scales~\cite{campolina2018chaotic}.
	
	After showing rigorously that the properties of plausible finite-time singularities (blowup) for the incompressible 3D Euler equations have similar form on the logarithmic lattice, we presented the numerical evidence of blowup, characterized as a chaotic wave in a renormalized system.
	Surprisingly similar asymptotic behavior of solutions was observed for two very different lattice models, probing the robustness of our conclusions, also drawn earlier in~\cite{campolina2018chaotic}.
	The multi-scale character of the attractor (ranging six decades in Fourier space) reveals the great complexity of the blowup and explains why there is a controversy around the available numerical studies, since actual computational techniques may be insufficient by far for the required resolution.
	Still, one may think of accessing the blowup through experimental measurements~\cite{saw2016experimental,kuzzay2017new,debue2018dissipation}.
	
	The viscous incompressible model on a logarithmic lattice exhibits anomalous dissipation in the limit of large Reynolds numbers, similarly to hydrodynamics turbulence~\cite{frisch1999turbulence}.
	This was demonstrated by measuring the mean energy flux in the inertial range for a sequence of decreasing viscosities.
	Moreover, statistics of lattice variables behave like Fourier components in the full Navier-Stokes turbulence, whose distributions are Gaussian in the inertial interval and intermittent at viscous scales.
	Such behavior contrasts with other simplified models, which usually display some degree of inertial-range intermittency.
	Though the question whether our logarithmic model reproduces a kind of physical-space intermittency was left open.
	We believe that future analysis of this model may help in better understanding the relation between physical and Fourier space representations in developed turbulence~\cite{brun2001statistics}.
	
	The systematic technique we presented is applicable to any partial differential equation with quadratic nonlinearities.
	In this framework, symmetries and quadratic invariants are expected to be automatically preserved due to the designed functional structure on the lattice.
	This turns the logarithmic models into a general methodology for the study of singularities and regularity in many nonlinear systems.
	We also developed the library \textsc{LogLatt}~\cite{campolina2020loglattmatlab}, an efficient \textsc{Matlab}\textsuperscript{\circledR} computational tool for the numerical calculus on logarithmic lattices.
	The proposed technique is ready-to-use in other fluid models, like natural convection~\cite{majda2002vorticity}, geostrophic motion~\cite{pedlosky2013geophysical,constantin1994formation}, porous media~\cite{cordoba2007analytical} and magnetohydrodynamics~\cite{biskamp1997nonlinear}.
	The lower computational cost of the logarithmic models compared to full DNS may be of great use for problems in higher dimensions, like high-dimensional turbulence~\cite{gotoh2007statistical,miyazaki2010classical,suzuki2005energy,yamamoto2012local}. With further extensions, there is a hope to apply this technique to compressible turbulence~\cite{augier2019shallow,falkovich2017vortices} and superfluids~\cite{biferale2019superfluid,biferale2019strong}; see the example in Appendix~\hyperref[app:C]{C} of a possible way to model isentropic compressible flow on logarithmic lattices.
	
	\section*{Aknowledgments}
	The authors thank L. Biferale, B. Dubrulle, U. Frisch, and S. Thalabard for useful discussions and suggestions. The work was supported by  CNPq (grants 303047/2018-6, 406431/2018-3).
	
	\section*{Appendix A: Burgers' representation for shell models}\label{app:A}
	
	In this Appendix, we show that some well-known shell models of turbulence are equivalent to the  Burgers equation on a logarithmic lattice.
	This, in particular, reinforces the idea that self-similar blowup and non-oscillatory Kolmogorov regime in shell models follow a scenario closer to Burgers' dynamics~\cite{mailybaev2012renormalization, mailybaev2015continuous} than to Euler's. 
	
	The Burgers equation~\cite{burgers1948mathematical} on the one-dimensional logarithmic lattice of spacing $\lambda$ is given by
	\begin{equation}
	\partial_t u + u \ast \partial_x u = \nu \partial_x^2 u,
	\label{EQ:Burgers}
	\end{equation}
	where $\nu \geq 0$ is the viscosity.
	First, let us take $\lambda = 2$ and consider the corresponding product~\eqref{EQ:product_1D_1} with a prefactor of $2$. The Burgers equation~\eqref{EQ:Burgers} takes the form
	\begin{equation}
	\partial_t u (k) = -ik\left[ 2 u(2 k)\overline{u(k)} + u^2 \left( \frac{k}{2} \right) \right] - \nu k^2 u(k).
	\label{EQ:DN_abstract}
	\end{equation}
	Define the geometric progression $k_n = \lambda^n, \ n \in \mathbb{Z}$ and consider purely imaginary solutions of type $u(\pm k_n) = \pm i u_n$ for $u_n \in \mathbb{R}$.
	Note that this is a property of the Fourier transform for any odd function in physical space. Then,  equation~\eqref{EQ:DN_abstract} taken at $k = k_n$ reduces to the form
	\begin{equation}
	\partial_t u_n = k_n u_{n-1}^2 - k_{n+1}u_{n+1} u_n - \nu k_n^2 u_n.
	\label{EQ:DN}
	\end{equation}
	This system is known as the Desnyansky-Novikov shell model~\cite{desnyansky1974evolution}, also called dyadic model.
	
	For our second example, we take $\lambda = \varphi$, the golden mean, and consider the  product \eqref{EQ:product_1D_1} with prefactor $-\varphi$.
	By setting $u(k_n) = u_n$ and $u(-k_n) = \overline{u_n}$ with $k_n = \varphi^n$, the Burgers equation~\eqref{EQ:Burgers} is reduced to the form
	\begin{equation}
	\partial_t u_n = i[k_{n+1}u_{n+2}\overline{u_{n+1}} - (1+c)k_n u_{n+1}\overline{u_{n-1}} - c k_{n-1}u_{n-1}u_{n-2}] - \nu k_n^2 u_n,
	\label{EQ:Sabra}
	\end{equation}
	with $c  = -\varphi^2$. System~\eqref{EQ:Sabra} is the Sabra shell model~\cite{l1998improved}.
	
	A third possibility is to consider $\lambda = \sigma$, the plastic number~\eqref{plastic}, which reduces Eq.~\eqref{EQ:Burgers} to a new shell model with improved number of triad interactions. In this spirit, extended triads were considered in the context of helical shell models~\cite{de2015inverse}.
	
	Model~\eqref{EQ:Burgers} on the logarithmic lattice retains several properties of the continuous Burgers equation, like the symmetries of time translation $t \mapsto t + t_0$ by any $t_0 \in \mathbb{R}$, physical-space translation $u(k) \mapsto e^{-ik\xi}u(k)$ by a number $\xi \in \mathbb{R}$ and, in the case of a lattice with origin, Galilean invariance $u(k,t) \mapsto e^{-ikvt}u(k,t) - \widehat{v}(k)$ for any $v \in \mathbb{R}$, where $\widehat{v}(0) = v$ and zero for $k \neq 0$.
	Inviscid ($\nu = 0$) regular solutions also conserve the momentum $M(t) = u(k = 0)$, energy $E(t) = \frac{1}{2}(u,u)$ and the thrid-order moment $M_3(t) = (u \ast u, u) = (u,u \ast u)$, which is well-defined because of associativity in average of the product -- see property~\ref{DEF:PROD_associativity_avg} in Definition~\ref{DEF:PROD}.
	All these conservation laws can be proved using only the operations on logarithmic lattices; see~\cite{campolina2019fluid}. 
	Conservation of energy is a well-known property of shell models while the conservation of a third-order moment was revealed in the study of Hamiltonian structure in Sabra model~\cite{lvov1999hamiltonian}. 
	Unlike the continuous Burgers equation, higher-order moments are not conserved for the logarithmic models because of non-associativity on the logarithmic lattice -- see Corollary~\ref{THE:product_associativity} -- which turns higher powers not even well-defined.
	The non-existence of invariants of order greater than $3$ was proved in~\cite{ditlevsen2000symmetries} for the Sabra model.
	Sabra model has one more inviscid quadratic invariant of the form $I = \sum_{n \in \mathbb{Z}} c^{-n} |u_n|^2$, but this invariant do not seem to have an analogue in the Burgers equation.
	In studies of hydrodynamic turbulence, it was interpreted as the enstrophy for $c>0$ (sign definite invariant) and as helicity for $c<0$ (not sign-definite invariant).
	
	Our methodology not only reproduces shell models but also leads to new insights about them.
	In the spirit of Theorem~\ref{THM:Kelvin}, consider a scalar field $\rho$ evolving as
	\begin{equation}
	\partial_t \rho + \partial_x (\rho \ast u) = 0.
	\end{equation}
	This equation mimics a passive scalar advected by the flow, e.g. density. Then, the cross-correlation
	\begin{equation}
	\Gamma(t) = (\rho,u)
	\end{equation}
	which can be seen as total momentum of the flow, is conserved in time; the proof follows similar lines as those already presented and may be found in~\cite{campolina2019fluid}.
	Since this conservation holds for all solutions $\rho(t)$, this provides infinitely many inviscid invariants for model~\eqref{EQ:Burgers}, analogous to circulation in Kelvin's Theorem as described in Section~\ref{SEC:ideal_flow}.
	Up to our knowledge, this has not been shown earlier.
	
	\section*{Appendix B: Functional inequalities on logarithmic lattices}\label{app:B}
	
	Here we prove some functional inequalities and operator properties used in Section~\ref{SEC:local_theory}.
	
	\begin{lemma}
		Let $\mathbf{u} \in h^m$ and $\mathbf{v} \in h^{m-1}$, for $m \geq 1$. Then, $\mathbf{u} \ast \mathbf{v} = \sum_{i = 1}^{d}u_i \ast v_i \in h^m$ with
		\begin{equation}
		||\mathbf{u} \ast \mathbf{v}||_{h^m} \leq C(||\mathbf{u}||_{h^m} ||\mathbf{v}||_{\ell^\infty} + ||D\mathbf{u}||_{\ell^\infty} ||\mathbf{v}||_{h^{m-1}}),
		\label{EQ:main_inequality}
		\end{equation}
		where $C$ is a constant which does not depend on $\mathbf{u}$ and $\mathbf{v}$.
	\end{lemma}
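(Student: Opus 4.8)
The plan is to reduce the statement to two elementary convolution estimates, exploiting the two structural features of the logarithmic product~\eqref{EQ:Product}: its \emph{locality} (each node $\mathbf{k}$ receives contributions from only a fixed, finite list of pairs $\mathbf{p}+\mathbf{q}=\mathbf{k}$, whose legs are comparable in modulus to $\mathbf{k}$) and the fact that each node belongs to only a bounded number of triads. Writing $\mathbf{u}\ast\mathbf{v}=\sum_{i=1}^{d}u_i\ast v_i$ and applying $D^m$, I would start from the pointwise identity
\begin{equation}
D^m(\mathbf{u}\ast\mathbf{v})(\mathbf{k})=|\mathbf{k}|^m\sum_{i=1}^{d}\sum_{\mathbf{p}+\mathbf{q}=\mathbf{k}}u_i(\mathbf{p})\,v_i(\mathbf{q}),
\end{equation}
and then control the weight $|\mathbf{k}|^m$ in terms of $|\mathbf{p}|$ and $|\mathbf{q}|$ before summing and taking the $\ell^2$ norm.

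First I would use the triangle inequality $|\mathbf{k}|\le|\mathbf{p}|+|\mathbf{q}|$ to obtain $|\mathbf{k}|^m\le 2^{m}(|\mathbf{p}|^m+|\mathbf{q}|^m)$, splitting the sum into a piece carrying $|\mathbf{p}|^m$ and a piece carrying $|\mathbf{q}|^m$. The first piece is exactly $\sum_i(|D^m u_i|\ast|v_i|)(\mathbf{k})$ and is already in the desired form. For the second piece I would invoke locality: since the triads at unity form the finite, scale-invariant list of Tables~\ref{TAB:triads_unity}, there is a lattice-dependent constant for which every leg satisfies $|\mathbf{q}|\le C|\mathbf{p}|$ (componentwise each ratio of triad factors is bounded, and no leg vanishes on the lattices of Theorem~\ref{THE:all_triads}). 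Hence $|\mathbf{q}|^m=|\mathbf{q}|\,|\mathbf{q}|^{m-1}\le C|\mathbf{p}|\,|\mathbf{q}|^{m-1}$, which converts the second piece into $\sum_i(|Du_i|\ast|D^{m-1}v_i|)(\mathbf{k})$. This places a single derivative on $\mathbf{u}$ and the remaining $m-1$ on $\mathbf{v}$ — precisely the distribution demanded by the right-hand side of~\eqref{EQ:main_inequality}, and the reason only $\mathbf{v}\in h^{m-1}$ is required.

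It then remains to estimate each of these convolutions in $\ell^2$ with the low-derivative factor measured in $\ell^\infty$. Here I would establish the auxiliary bound that, for a nonnegative sequence $a\in\ell^2$, the operator $(Sa)(\mathbf{k})=\sum_{\mathbf{p}+\mathbf{q}=\mathbf{k}}a(\mathbf{p})$ satisfies $||Sa||_{\ell^2}\le C||a||_{\ell^2}$. Indeed, Cauchy--Schwarz controls $(Sa)(\mathbf{k})^2$ by the at most $T$ terms defining it, and upon summing over $\mathbf{k}$ each value $a(\mathbf{p})^2$ is counted a bounded number of times, because the relation $\mathbf{p}=p_j\mathbf{k}$ with $p_j$ ranging over the finite triad list determines $\mathbf{k}$ up to finitely many choices. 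Pulling $||\mathbf{v}||_{\ell^\infty}$ out of the first piece and $||D\mathbf{u}||_{\ell^\infty}$ out of the second, and then applying $S$ with $a=|D^m\mathbf{u}|$ and $a=|D^{m-1}\mathbf{v}|$ respectively, yields the two terms $C||\mathbf{u}||_{h^m}||\mathbf{v}||_{\ell^\infty}$ and $C||D\mathbf{u}||_{\ell^\infty}||\mathbf{v}||_{h^{m-1}}$, which is~\eqref{EQ:main_inequality}.

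The step I expect to be the crux is the rebalancing $|\mathbf{q}|^m\le C|\mathbf{p}|\,|\mathbf{q}|^{m-1}$: it is the discrete counterpart of the Moser/commutator estimate, which in the continuous setting requires a Littlewood--Paley or paraproduct decomposition, but which here is elementary precisely because triad locality forces all three legs of an interaction to live at comparable scales. The only bookkeeping point to verify is that degenerate contributions are harmless — the origin $\mathbf{k}=\mathbf{0}$ carries zero weight in the $h^m$ norm for $m\ge 1$, and any term with $\mathbf{q}=\mathbf{0}$ is annihilated by the factor $|\mathbf{q}|^m$ — so the estimate is unaffected by whether or not the zero node is included in the lattice.
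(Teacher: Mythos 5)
Your argument is correct and follows essentially the same route as the paper's Appendix~B proof: split the weight $|\mathbf{k}|^m$ via the triangle inequality into a $|\mathbf{p}|^m$ piece and a $|\mathbf{q}|^m$ piece, rebalance one power from $\mathbf{q}$ onto $\mathbf{p}$ using the boundedness of the ratios $|q_j|/|p_j|$ over the finite triad list, pull out the $\ell^\infty$ factors, and sum using the bounded number of triads per node (your operator $S$ is just the paper's Cauchy--Schwarz-plus-reindexing step packaged as a lemma). The only cosmetic difference is that the paper squares everything at the outset rather than working with the unsquared sums.
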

	\begin{proof}
		Let us prove the inequality in the one-dimensional case.
		Using elementary algebraic relations, we obtain
		\begin{align*}
		||u \ast v||_{h^m}^2 &= ||D^m(u \ast v)||_{\ell^2}^2
		= \sum_{k \in \mathbb{\Lambda}} |k|^{2m}|(u \ast v)(k)|^2 \\
		&\leq N\sum_{k \in \mathbb{\Lambda}} \sum_{j = 1}^{N}|k|^{2m}|u(p_jk)v(q_jk)|^2 \\
		&= N\sum_{k \in \mathbb{\Lambda}} \sum_{j = 1}^{N}|p_jk + q_jk|^{2m}|u(p_jk)v(q_jk)|^2 \\
		&\leq 2^{2m-1}N\sum_{k \in \mathbb{\Lambda}} \sum_{j = 1}^{N}(|p_jk|^{2m} + |q_jk|^{2m})|u(p_jk)v(q_jk)|^2 \\
		&= 2^{2m-1}N\sum_{k \in \mathbb{\Lambda}} \sum_{j = 1}^{N}|p_jk|^{2m}|u(p_jk)v(q_jk)|^2 +2^{2m-1}N \sum_{k \in \mathbb{\Lambda}} \sum_{j = 1}^{N}|q_jk|^{2m}|u(p_jk)v(q_jk)|^2.
		\end{align*}
		In the first term, we estimate
		\begin{equation*}
		\sum_{k \in \mathbb{\Lambda}} \sum_{j = 1}^{N}|p_jk|^{2m}|u(p_jk)v(q_jk)|^2 \leq ||v||_{\ell^\infty}^2 \sum_{j = 1}^{N}\sum_{k \in \mathbb{\Lambda}}|p_jk|^{2m}|u(p_jk)|^2
		\leq N||u||_{h^m}^2||v||_{\ell^\infty}^2,
		\end{equation*}
		while the sums of the second term are bounded by
		\begin{align*}
		\sum_{k \in \mathbb{\Lambda}} \sum_{j = 1}^{N}|q_jk|^{2m}|u(p_jk)v(q_jk)|^2 &= \sum_{k \in \mathbb{\Lambda}} \sum_{j = 1}^{N}|q_jk|^{2m-2}|q_jk|^{2}|u(p_jk)v(q_jk)|^2 \\
		&\leq M\sum_{k \in \mathbb{\Lambda}} \sum_{j = 1}^{N}|q_jk|^{2m-2}|p_jk|^{2}|u(p_jk)v(q_jk)|^2 \\
		&\leq M ||Du||_{\ell^\infty}^2 \sum_{j = 1}^{N} \sum_{k \in \mathbb{\Lambda}} |q_jk|^{2m-2}|v(q_jk)|^2 \\
		&\leq MN ||Du||_{\ell^\infty}^2 ||v||_{h^{m-1}}^2,
		\end{align*}
		where $M = \max_{j = 1,\dots,N} |q_j|^2/|p_j|^2$.
		In view of the estimates for the two terms, we reach to the result $|| u \ast v ||_{h^m} \le C\left(||u||_{h^m}||v||_{\ell^\infty} + ||Du||_{\ell^\infty}||v||_{h^{m-1}}\right)$ with the choice of $C = 2^{m-1/2}N\max(M,1)^{1/2}$. The proof extends naturally to higher dimensions, by considering multiple components.
	\end{proof}
	\begin{lemma}
		Define the bilinear operator
		\begin{equation}
		B(\mathbf{u},\mathbf{v}) = \mathbf{u} \ast \nabla \mathbf{v},
		\end{equation}
		where $(\mathbf{u} \ast \nabla \mathbf{v})_i = \mathbf{u} \ast \nabla u_i = \sum_{j = 1}^{d} u_j \ast \partial_j v_i$.
		Then, $B: h^m \times h^m \to h^m$ is a bounded bilinear operator, i.e., there exists a constant $C>0$ such that
		\begin{equation}
		||B(\mathbf{u},\mathbf{v}) ||_{h^m} \leq C ||\mathbf{u}||_{h^m} ||\mathbf{v}||_{h^m},
		\end{equation}
		for every $\mathbf{u},\mathbf{v} \in h^m$.
	\end{lemma}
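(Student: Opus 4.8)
The plan is to derive the boundedness of $B$ directly from the calculus inequality~\eqref{EQ:main_inequality} established in the preceding Lemma, by treating the convective term componentwise and then absorbing all the lower-order norms into the $h^m$ norm. For each fixed index $i$, the $i$-th component $B(\mathbf{u},\mathbf{v})_i = \mathbf{u} \ast \nabla v_i = \sum_{j=1}^d u_j \ast \partial_j v_i$ is precisely of the form $\mathbf{u} \ast \mathbf{w}$ covered by~\eqref{EQ:main_inequality}, with the choice $\mathbf{w} = \nabla v_i \in h^{m-1}$. Applying the inequality gives
\begin{equation}
||B(\mathbf{u},\mathbf{v})_i||_{h^m} \leq C\left( ||\mathbf{u}||_{h^m}\,||\nabla v_i||_{\ell^\infty} + ||D\mathbf{u}||_{\ell^\infty}\,||\nabla v_i||_{h^{m-1}} \right),
\end{equation}
so it remains only to bound each of the four factors on the right by $||\mathbf{u}||_{h^m}$ or $||\mathbf{v}||_{h^m}$.

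The key observation is that on the lattice~\eqref{EQ:Euler_lattice}, together with the standing assumption $\mathbf{u}(\mathbf{0}) = \mathbf{0}$, every admissible wave vector satisfies $|\mathbf{k}| \geq 1$. Consequently the homogeneous lattice Sobolev norms are nested, $||\mathbf{w}||_{h^s} \leq ||\mathbf{w}||_{h^m}$ whenever $s \leq m$, since $|\mathbf{k}|^{2s} \leq |\mathbf{k}|^{2m}$ termwise, and one also has the trivial embedding $||\mathbf{w}||_{\ell^\infty} \leq ||\mathbf{w}||_{\ell^2} = ||\mathbf{w}||_{h^0}$. Using $|\partial_j v_i(\mathbf{k})| = |k_j|\,|v_i(\mathbf{k})| \leq |\mathbf{k}|\,|v_i(\mathbf{k})|$, a direct computation gives $||\nabla v_i||_{h^{m-1}} = ||v_i||_{h^m}$ and $||\nabla v_i||_{\ell^\infty} = ||D v_i||_{\ell^\infty} \leq ||v_i||_{h^1} \leq ||v_i||_{h^m}$, and likewise $||D\mathbf{u}||_{\ell^\infty} \leq ||\mathbf{u}||_{h^1} \leq ||\mathbf{u}||_{h^m}$ for $m \geq 1$.

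Substituting these bounds yields $||B(\mathbf{u},\mathbf{v})_i||_{h^m} \leq 2C\,||\mathbf{u}||_{h^m}\,||v_i||_{h^m}$, and summing the squares over $i = 1,\dots,d$ together with $||\mathbf{v}||_{h^m}^2 = \sum_i ||v_i||_{h^m}^2$ produces the claim with constant $2C$. The only non-routine step is the use of the infrared cutoff $|\mathbf{k}| \geq 1$ to nest the homogeneous norms: this is exactly what fails for continuous homogeneous Sobolev spaces, and is what renders the convective operator genuinely bounded on the lattice rather than a loss-of-one-derivative operator. Everything else is bookkeeping in the componentwise estimates.
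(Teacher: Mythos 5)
Your proof is correct and follows essentially the same route as the paper: apply the calculus inequality \eqref{EQ:main_inequality} componentwise to $\mathbf{u}$ and $\nabla v_i$, then absorb the $\ell^\infty$ and lower-order Sobolev norms into $h^m$ via $\|\cdot\|_{\ell^\infty}\le\|\cdot\|_{\ell^2}$ and the nesting $\|\cdot\|_{h^1}\le\|\cdot\|_{h^m}$, which holds because $|\mathbf{k}|\ge 1$ on the lattice \eqref{EQ:Euler_lattice}. The only (cosmetic) difference is that you sum the squares of the componentwise bounds to get a dimension-free constant $2C$, whereas the paper sums the norms directly and obtains $2dC$.
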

	\begin{proof}
		Using inequality \eqref{EQ:main_inequality} for $\mathbf{u}$ and $\nabla v_i$, we obtain
		\begin{equation*}
		||B(\mathbf{u},\mathbf{v})||_{h^m} 
		\le \sum_{i = 1}^d ||\mathbf{u} \ast \nabla v_i||_{h^m} 
		\le C\sum_{i = 1}^d\left( ||\mathbf{u}||_{h^m}||\nabla v_i||_{\ell^\infty} + ||D\mathbf{u}||_{\ell^\infty}
		||\nabla v_i||_{h^{m-1}} \right).
		\end{equation*}
		We now use the inequalities
		\begin{equation*}
		||\nabla v_i||_{\ell^{\infty}} 
		\le ||D\mathbf{v}||_{\ell^\infty} 
		\le ||\mathbf{v}||_{h^1}, \quad 
		||D\mathbf{u}||_{\ell^\infty} \leq ||\mathbf{u}||_{h^1}, \quad 
		||\nabla v_i||_{h^{m-1}} \le ||\mathbf{v}||_{h^m}
		\end{equation*}
		and the general relation
		\begin{equation*}
		||\mathbf{u}||_{h^1} \le ||\mathbf{u}||_{h^m},
		\end{equation*}
		which are simple estimates from the definition of the norms on the lattice (\ref{EQ:Euler_lattice}).
		This yields 
		\begin{equation*}
		||B(\mathbf{u},\mathbf{v})||_{h^m} \le 2dC||\mathbf{u}||_{h^m} ||\mathbf{v}||_{h^m},
		\end{equation*}
		which shows that $B(\mathbf{u},\mathbf{v}) \in h^m$ and the boundness of operator $B$.
	\end{proof}
	
	\section*{Appendix C: Isentropic compressible flow}\label{app:C}
	
	The logarithmic models presented in this paper do not extend naturally to isentropic (or general) compressible flow due to the appearance of cubic terms in the governing equations and inviscid invariants.
	Nevertheless, we present below one possible way to overcome this issue.
	The idea consists of introducing additional variables properly constrained, so the original cubic terms become quadratic with respect to the extended set of variables.
	In this formulation, the symmetries and conserved quantities are exactly those from the continuous model. Unfortunately, preliminary numerical simulations do not show good correspondence to dynamical features of realistic compressible flows, such as formation of shock waves. 
	For this reason we restrict ourselves to the model description and its conserved quantities, leaving the numerical implementation for future analysis.
	
	\textsc{Model.} We introduce the scalar density $\rho(\mathbf{k},t)$, the velocity field $\mathbf{u}(\mathbf{k},t)$ and the momentum field $\mathbf{q}(\mathbf{k},t)$, defined on the lattice $\mathbf{k} \in \mathbb{\Lambda}^d$.
	The model for ideal compressible flow consists of the continuity equation and the balance of momentum together with an algebraic constraint relating all variables, respectively given by the system (cf.~\cite[Sec.~15]{landau1987fluid})
	\begin{equation}
	\partial_t \rho + \nabla \cdot \mathbf{q} = 0,\quad
	\partial_t \mathbf{q} = - \nabla \cdot \Pi,\quad
	\mathbf{q} = \rho \ast \mathbf{u},
	\label{EQ:isentropic}
	\end{equation}
	where the \textit{momentum flux density} tensor $\Pi$ has its classical form
	\begin{equation}
	\Pi_{ij} = p \delta_{ij} + u_i \ast q_j.
	\end{equation}
	In an isentropic flow, the pressure $p$ is a function of the density.
	For our logarithmic model, we consider the quadratic relation
	\begin{equation}
	p = A \rho \ast \rho,
	\label{EQ:polytropic}
	\end{equation}
	which mimics a polytropic gas $p = A \rho^\gamma$, with $\gamma = 2$.
	
	To evolve model~\eqref{EQ:isentropic}, one needs to solve the last algebraic constraint for the velocities $\mathbf{u}$, i.e., express it in terms of the momentum and density.
	This is possible when the mean density $\rho(\mathbf{0})>0$ at $\mathbf{k} = \mathbf{0}$ is sufficiently larger than the sum of all other components $\sum_{\mathbf{k}\neq \mathbf{0}}|\rho(\mathbf{k})|$.
	Under this condition, the density field may be interpreted as small-amplitude oscillations around a positive mean value.
	Solvability of velocities under this condition can be rigorously proved in proper functional spaces using Operator Theory.
	
	\textsc{Conserved quantities.} The \textit{total momentum} of the flow is naturally defined as
	\begin{equation}
	M(t) = \mathbf{q}(\mathbf{0},t),
	\label{EQ:isentropic_momentum}
	\end{equation}
	at $\mathbf{k} = \mathbf{0}$.
	The \textit{total energy} $E$ decomposes into two contributions
	\begin{equation}
	E = K + U,
	\label{EQ:isentropic_total_energy}
	\end{equation}
	where
	\begin{equation}
	K = \frac{1}{2}(\mathbf{q},\mathbf{u})
	\end{equation}
	is the \textit{kinetic energy} and
	\begin{equation}
	U = (\rho,e)
	\end{equation}
	is the \textit{internal energy}.
	The \textit{internal energy per unit mass} $e$ is defined as
	\begin{equation}
	e = A\rho.
	\label{EQ:internal_energy_mass}
	\end{equation}
	Formula~\eqref{EQ:internal_energy_mass} is obtained from the pressure through the well-known (isentropic) thermodynamical relation $de = pd\rho/\rho^2$.
	
	System~\eqref{EQ:isentropic} conserves total momentum~\eqref{EQ:isentropic_momentum} and total energy~\eqref{EQ:isentropic_total_energy} in time.
	Kinetic and internal energies are transferred from one another through pressure as
	\begin{equation}
	\frac{dK}{dt} = -\frac{dU}{dt} = -(\nabla p, \mathbf{u}).
	\end{equation}
	
	\textsc{Viscous effects.} Following classical derivations of fluid mechanics, viscosity is introduced in the momentum flux density tensor as
	\begin{equation}
	\Pi_{ij} = p \delta_{ij} + u_i \ast q_j - \sigma_{ij},
	\end{equation}
	with the \textit{viscous tensor} $\sigma$ given by
	\begin{equation}
	\sigma_{ij} = \eta \left( \partial_i u_j + \partial_j u_i - \frac{2}{3}\nabla \cdot \mathbf{u}\delta_{ij} \right) + \zeta \nabla \cdot \mathbf{u}\delta_{ij}.
	\end{equation}
	The constants $\eta, \zeta \geq0$ are the \textit{viscosity coefficients}.
	Non-equilibrium solutions dissipate energy through the work of viscosity forces in the form
	\begin{equation}
	\frac{dE}{dt} = (\nabla \cdot \sigma, \mathbf{u}).
	\end{equation}
	In this way, system (\ref{EQ:isentropic}) yields equations for the viscous flow.
	
	\bibliographystyle{plain}
	\bibliography{refs}

\begin{thebibliography}{10}

\bibitem{aarts2001morphic}
J.~Aarts, R.~Fokkink, and G.~Kruijtzer.
\newblock Morphic numbers.
\newblock {\em Nieuw Arch. Wiskd.}, 2:56--58, 2001.

\bibitem{agafontsev2017asymptotic}
D.~S. Agafontsev, E.~A. Kuznetsov, and A.~A. Mailybaev.
\newblock {Asymptotic solution for high-vorticity regions in incompressible
  three-dimensional Euler equations}.
\newblock {\em J. Fluid Mech.}, 813:R1, 2017.

\bibitem{augier2019shallow}
P.~Augier, A.~V. Mohanan, and E.~Lindborg.
\newblock Shallow water wave turbulence.
\newblock {\em J. Fluid Mech.}, 874:1169--1196, 2019.

\bibitem{beale1984remarks}
J.~T. Beale, T.~Kato, and A.~Majda.
\newblock {Remarks on the breakdown of smooth solutions for the 3-D Euler
  equations}.
\newblock {\em Comm. Math. Phys.}, 94(1):61--66, 1984.

\bibitem{benzi1993random}
R.~Benzi, L.~Biferale, A.~Crisanti, G.~Paladin, M.~Vergassola, and A.~Vulpiani.
\newblock A random process for the construction of multiaffine fields.
\newblock {\em Physica D}, 65(4):352--358, 1993.

\bibitem{benzi1999intermittency}
R.~Benzi, L.~Biferale, S.~Succi, and F.~Toschi.
\newblock Intermittency and eddy viscosities in dynamical models of turbulence.
\newblock {\em Phys. Fluids}, 11(5):1221--1228, 1999.

\bibitem{benzi19971}
R.~Benzi, L.~Biferale, R.~Tripiccione, and E.~Trovatore.
\newblock {(1+ 1)-dimensional turbulence}.
\newblock {\em Phys. Fluids}, 9(8):2355--2363, 1997.

\bibitem{biferale2003shell}
L.~Biferale.
\newblock Shell models of energy cascade in turbulence.
\newblock {\em Ann. Rev. Fluid Mech.}, 35:441--468, 2003.

\bibitem{biferale2019superfluid}
L.~Biferale, D.~Khomenko, V.~L'vov, A.~Pomyalov, I.~Procaccia, and G.~Sahoo.
\newblock Superfluid helium in three-dimensional counterflow differs strongly
  from classical flows: Anisotropy on small scales.
\newblock {\em Phys. Rev. Lett.}, 122(14):144501, 2019.

\bibitem{biferale2019strong}
L.~Biferale, D.~Khomenko, V.~S. L'vov, A.~Pomyalov, I.~Procaccia, and G.~Sahoo.
\newblock {Strong anisotropy of superfluid He 4 counterflow turbulence}.
\newblock {\em Phys. Rev. B}, 100(13):134515, 2019.

\bibitem{biferale2005anisotropy}
L.~Biferale and I.~Procaccia.
\newblock Anisotropy in turbulent flows and in turbulent transport.
\newblock {\em Phys. Rep.}, 414(2-3):43--164, 2005.

\bibitem{biskamp1997nonlinear}
D.~Biskamp.
\newblock {\em {Nonlinear Magnetohydrodynamics}}.
\newblock Cambridge University Press, 1997.

\bibitem{brun2001statistics}
C.~Brun and A.~Pumir.
\newblock {Statistics of Fourier modes in a turbulent flow}.
\newblock {\em Phys. Rev. E}, 63(5):056313, 2001.

\bibitem{burgers1948mathematical}
J.~M. Burgers.
\newblock A mathematical model illustrating the theory of turbulence.
\newblock {\em Adv. in Appl. Mech.}, 1:171--199, 1948.

\bibitem{buzzicotti2016lagrangian}
M.~Buzzicotti, A.~Bhatnagar, L.~Biferale, A.~S. Lanotte, and S.~S. Ray.
\newblock {Lagrangian statistics for Navier--Stokes turbulence under
  Fourier-mode reduction: fractal and homogeneous decimations}.
\newblock {\em New J. Phys.}, 18(11):113047, 2016.

\bibitem{buzzicotti2016intermittency}
M.~Buzzicotti, L.~Biferale, U.~Frisch, and S.~S. Ray.
\newblock {Intermittency in fractal Fourier hydrodynamics: Lessons from the
  Burgers equation}.
\newblock {\em Phys. Rev. E}, 93(3):033109, 2016.

\bibitem{campolina2019fluid}
C.~S. Campolina.
\newblock {Fluid Dynamics on Logarithmic Lattices and Singularities of Euler
  Flow}.
\newblock {Master's Thesis}, Instituto de Matem\'{a}tica Pura e Aplicada, 2019.

\bibitem{campolina2020loglattmatlab}
C.~S. Campolina.
\newblock {LogLatt: A computational library for the calculus on logarithmic
  lattices}, 2020.
\newblock Freely available for noncommercial use from MATLAB Central File
  Exchange
  (\url{https://www.mathworks.com/matlabcentral/fileexchange/76295-loglatt}).

\bibitem{campolina2018chaotic}
C.~S. Campolina and A.~A. Mailybaev.
\newblock {Chaotic blowup in the 3D incompressible Euler equations on a
  logarithmic lattice}.
\newblock {\em Phys. Rev. Lett.}, 121:064501, 2018.

\bibitem{cartan1967calcul}
H.~Cartan.
\newblock {\em {Calcul Diff{\'e}rentiel}}.
\newblock Hermann, 1967.

\bibitem{chae2007nonexistence}
D.~Chae.
\newblock {Nonexistence of self-similar singularities for the 3D incompressible
  Euler equations}.
\newblock {\em Commun. Math. Phys.}, 273(1):203--215, 2007.

\bibitem{chae2008incompressible}
D.~Chae.
\newblock {Incompressible Euler Equations: the blow-up problem and related
  results}.
\newblock In {\em {Handbook of Differential Equations: Evolutionary
  Equations}}, volume~4, pages 1--55. Elsevier, 2008.

\bibitem{chae2013formation}
D.~Chae and R.~Shvydkoy.
\newblock {On formation of a locally self-similar collapse in the
  incompressible Euler equations}.
\newblock {\em Arch. Ration. Mech. Anal.}, 209(3):999--1017, 2013.

\bibitem{chen2019finite}
J.~Chen and T.~Y. Hou.
\newblock {Finite time blowup of 2D Boussinesq and 3D Euler equations with
  $C^{1,\alpha}$ velocity and boundary}.
\newblock {\em arXiv preprint arXiv:1910.00173}, 2019.

\bibitem{chen1993far}
S.~Chen, G.~Doolen, J.~R. Herring, R.~H. Kraichnan, S.~A. Orszag, and Z.~S.
  She.
\newblock Far-dissipation range of turbulence.
\newblock {\em Phys. Rev. Lett.}, 70(20):3051, 1993.

\bibitem{cheskidov2008blow}
A.~Cheskidov.
\newblock {Blow-up in finite time for the dyadic model of the Navier-Stokes
  equations}.
\newblock {\em Trans. Amer. Math. Soc.}, 360(10):5101--5120, 2008.

\bibitem{chevillard2005statistics}
L.~Chevillard, N.~Mazellier, C.~Poulain, Y.~Gagne, and C.~Baudet.
\newblock {Statistics of Fourier modes of velocity and vorticity in turbulent
  flows: Intermittency and long-range correlations}.
\newblock {\em Phys. Rev. Lett.}, 95(20):200203, 2005.

\bibitem{choi2017finite}
K.~Choi, T.~Y. Hou, A.~Kiselev, G.~Luo, V.~Sverak, and Y.~Yao.
\newblock {On the finite-time blowup of a one-dimensional model for the
  three-dimensional axisymmetric Euler equations}.
\newblock {\em Comm. Pure Appl. Math.}, 70(11):2218--2243, 2017.

\bibitem{chorin1990mathematical}
A.~J. Chorin and J.~E. Marsden.
\newblock {\em {A Mathematical Introduction to Fluid Mechanics}}.
\newblock Springer, 2nd edition, 1990.

\bibitem{cohen2003computer}
J.~S. Cohen.
\newblock {\em Computer Algebra and Symbolic Computation: Mathematical
  Methods}.
\newblock AK Peters/CRC Press, 2003.

\bibitem{constantin1985simple}
P.~Constantin, P.~D. Lax, and A.~Majda.
\newblock A simple one-dimensional model for the three-dimensional vorticity
  equation.
\newblock {\em Comm. Pure Appl. Math.}, 38(6):715--724, 1985.

\bibitem{constantin2007regularity}
P.~Constantin, B.~Levant, and E.~Titi.
\newblock Regularity of inviscid shell models of turbulence.
\newblock {\em Phys. Rev. E}, 75:016304, 2007.

\bibitem{constantin1994formation}
P.~Constantin, A.~J. Majda, and E.~Tabak.
\newblock {Formation of strong fronts in the 2-D quasigeostrophic thermal
  active scalar}.
\newblock {\em Nonlinearity}, 7(6):1495--1533, 1994.

\bibitem{cordoba2007analytical}
D.~C{\'o}rdoba, F.~Gancedo, and R.~Orive.
\newblock Analytical behavior of two-dimensional incompressible flow in porous
  media.
\newblock {\em J. Math. Phys.}, 48(6):065206, 2007.

\bibitem{cox2002exponential}
S.~M. Cox and P.~C. Matthews.
\newblock Exponential time differencing for stiff systems.
\newblock {\em J. Comput. Phys.}, 176(2):430--455, 2002.

\bibitem{de2015inverse}
M.~De~Pietro, L.~Biferale, and A.~A. Mailybaev.
\newblock Inverse energy cascade in nonlocal helical shell models of
  turbulence.
\newblock {\em Phys. Rev. E}, 92(4):043021, 2015.

\bibitem{de2017chaotic}
M.~De~Pietro, A.~A. Mailybaev, and L.~Biferale.
\newblock Chaotic and regular instantons in helical shell models of turbulence.
\newblock {\em Phys. Rev. Fluids}, 2(3):034606, 2017.

\bibitem{debue2018dissipation}
P.~Debue, V.~Shukla, D.~Kuzzay, D.~Faranda, E.-W. Saw, F.~Daviaud, and
  B.~Dubrulle.
\newblock Dissipation, intermittency, and singularities in incompressible
  turbulent flows.
\newblock {\em Phys. Rev. E}, 97(5):053101, 2018.

\bibitem{laan1960nombre}
H.~Van der Laan.
\newblock {\em {Le Nombre Plastique: Quinze Le\c{c}ons sur l'Ordonnance
  Architectonique}}.
\newblock Brill, Leiden, 1960.

\bibitem{desnyansky1974evolution}
V.~N. Desnyansky and E.~A. Novikov.
\newblock The evolution of turbulence spectra to the similarity regime.
\newblock {\em Izv. Akad. Nauk SSSR, Fiz. Atmos. Okeana}, 10:127--136, 1974.

\bibitem{ditlevsen2000symmetries}
P.~D. Ditlevsen.
\newblock Symmetries, invariants, and cascades in a shell model of turbulence.
\newblock {\em Phys. Rev. E}, 62:484--489, 2000.

\bibitem{dombre1998intermittency}
T.~Dombre and J.~L. Gilson.
\newblock {Intermittency, chaos and singular fluctuations in the mixed
  Obukhov--Novikov shell model of turbulence}.
\newblock {\em Physica D}, 111(1--4):265--287, 1998.

\bibitem{eggers1991does}
J.~Eggers and S.~Grossmann.
\newblock Does deterministic chaos imply intermittency in fully developed
  turbulence?
\newblock {\em Phys. Fluids A}, 3(8):1958--1968, 1991.

\bibitem{elgindi2019finite}
T.~M. Elgindi and I.-J. Jeong.
\newblock {Finite-time singularity formation for strong solutions to the
  axi-symmetric 3D Euler equations}.
\newblock {\em Ann. PDE}, 5(2):16, 2019.

\bibitem{eyink1994large}
G.~L. Eyink.
\newblock {Large-N limit of the ‘‘spherical model’’ of turbulence}.
\newblock {\em Phys. Rev. E}, 49(5):3990, 1994.

\bibitem{eyink2006onsager}
G.~L. Eyink and K.~R. Sreenivasan.
\newblock {Onsager and the theory of hydrodynamic turbulence}.
\newblock {\em Rev. Modern Phys.}, 78(1):87--135, 2006.

\bibitem{falkovich2017vortices}
G.~Falkovich and A.~G. Kritsuk.
\newblock How vortices and shocks provide for a flux loop in two-dimensional
  compressible turbulence.
\newblock {\em Phys. Rev. Fluids}, 2(9):092603, 2017.

\bibitem{fefferman2006existence}
C.~L. Fefferman.
\newblock {Existence and smoothness of the Navier-Stokes equation}.
\newblock In {\em The millennium prize problems}, pages 57--67. AMS, 2006.

\bibitem{fehlberg1969low}
E.~Fehlberg.
\newblock {Klassische Runge-Kutta-Formeln vierter und niedrigerer Ordnung mit
  Schrittweiten-Kontrolle und ihre Anwendung auf W\"{a}rmeleitungsprobleme}.
\newblock {\em Computing}, 6:61--71, 1970.

\bibitem{frisch1999turbulence}
U.~Frisch.
\newblock {\em {Turbulence: the Legacy of A.N.~Kolmogorov}}.
\newblock Cambridge University Press, 1995.

\bibitem{frisch2012turbulence}
U.~Frisch, A.~Pomyalov, I.~Procaccia, and S.S. Ray.
\newblock Turbulence in noninteger dimensions by fractal fourier decimation.
\newblock {\em Phys. Rev. Lett.}, 108:074501, 2012.

\bibitem{gibbon2008three}
J.~D. Gibbon.
\newblock {The three-dimensional Euler equations: Where do we stand?}
\newblock {\em Physica D}, 237(14-17):1894--1904, 2008.

\bibitem{gibbon2008three2}
J.~D. Gibbon, M.~Bustamante, and R.~M. Kerr.
\newblock {The three-dimensional Euler equations: singular or non-singular?}
\newblock {\em Nonlinearity}, 21:T123, 2008.

\bibitem{gledzer1973system}
E.~B. Gledzer.
\newblock System of hydrodynamic type admitting two quadratic integrals of
  motion.
\newblock {\em Sov. Phys. Doklady}, 18:216, 1973.

\bibitem{gotoh2007statistical}
T.~Gotoh, Y.~Watanabe, Y.~Shiga, T.~Nakano, and E.~Suzuki.
\newblock Statistical properties of four-dimensional turbulence.
\newblock {\em Phys. Rev. E}, 75(1):016310, 2007.

\bibitem{grafke2008numerical}
T.~Grafke, H.~Homann, J.~Dreher, and R.~Grauer.
\newblock {Numerical simulations of possible finite time singularities in the
  incompressible Euler equations: comparison of numerical methods}.
\newblock {\em Physica D}, 237(14):1932--1936, 2008.

\bibitem{grossmann1996developed}
S.~Grossmann, D.~Lohse, and A.~Reeh.
\newblock {Developed turbulence: From full simulations to full mode
  reductions}.
\newblock {\em Phys. Rev. Lett.}, 77(27):5369, 1996.

\bibitem{gurcan2017nested}
\"O.~D. G\"urcan.
\newblock Nested polyhedra model of turbulence.
\newblock {\em Phys. Rev. E}, 95:063102, 2017.

\bibitem{gurcan2019spiral}
\"O.~D. G\"urcan, S.~Xu, and P.~Morel.
\newblock Spiral chain models of two-dimensional turbulence.
\newblock {\em Phys. Rev. E}, 100:043113, 2019.

\bibitem{hou2009blow}
T.~Y. Hou.
\newblock {Blow-up or no blow-up? A unified computational and analytic approach
  to 3D incompressible Euler and Navier--Stokes equations}.
\newblock {\em Acta Numer.}, 18:277--346, 2009.

\bibitem{hou2006dynamic}
T.~Y. Hou and R.~Li.
\newblock {Dynamic depletion of vortex stretching and non-blowup of the 3-D
  incompressible Euler equations}.
\newblock {\em J. Nonlinear Sci.}, 16(6):639--664, 2006.

\bibitem{kaneda2003energy}
Y.~Kaneda, T.~Ishihara, M.~Yokokawa, K.~Itakura, and A.~Uno.
\newblock Energy dissipation rate and energy spectrum in high resolution direct
  numerical simulations of turbulence in a periodic box.
\newblock {\em Phys. Fluids}, 15(2):L21--L24, 2003.

\bibitem{katz2005finite}
N.~H. Katz and N.~Pavlovic.
\newblock {Finite time blow-up for a dyadic model of the Euler equations}.
\newblock {\em Trans. Amer. Math. Soc.}, 357(2):695--708, 2005.

\bibitem{kerr1993evidence}
R.~M. Kerr.
\newblock {Evidence for a singularity of the three-dimensional, incompressible
  Euler equations}.
\newblock {\em Phys. Fluids A}, 5(7):1725--1746, 1993.

\bibitem{kuzzay2017new}
D.~Kuzzay, E.-W. Saw, F.~Martins, D.~Faranda, J.-M. Foucaut, F.~Daviaud, and
  B.~Dubrulle.
\newblock New method for detecting singularities in experimental incompressible
  flows.
\newblock {\em Nonlinearity}, 30(6):2381, 2017.

\bibitem{landau2013quantum}
L.~D. Landau and E.~M. Lifshitz.
\newblock {\em {Quantum Mechanics: Non-Relativistic Theory}}, volume~3.
\newblock Butterworth-Heinemann, 1981.

\bibitem{landau1987fluid}
L.~D. Landau and E.~M. Lifshitz.
\newblock {\em {Fluid Mechanics}}, volume~6.
\newblock Pergamon, 1987.

\bibitem{lumley1972statistical}
J.~Lumley.
\newblock {Application of central limit theorems to turbulence problems}.
\newblock In M.~Rosenblatt and C.~Van~Atta, editors, {\em {Statistical Models
  and Turbulence}}, pages 1--26. Springer-Verlag, 1972.

\bibitem{luo2013potentially}
G.~Luo and T.~Y. Hou.
\newblock {Potentially singular solutions of the 3D incompressible Euler
  equations}.
\newblock {\em PNAS}, 111:12968--1297349, 2014.

\bibitem{l1998improved}
V.~S. L'vov, E.~Podivilov, A.~Pomyalov, I.~Procaccia, and D.~Vandembroucq.
\newblock Improved shell model of turbulence.
\newblock {\em Phys. Rev. E}, 58(2):1811, 1998.

\bibitem{lvov1999hamiltonian}
V.~S. L'vov, E.~Podivilov, and I.~Procaccia.
\newblock {Hamiltonian structure of the Sabra shell model of turbulence: Exact
  calculation of an anomalous scaling exponent}.
\newblock {\em {EPL}}, 46(5):609--612, 1999.

\bibitem{lvov1998universal}
V.~S. L'vov, I.~Procaccia, and D.~Vandembroucq.
\newblock Universal scaling exponents in shell models of turbulence: viscous
  effects are finite-sized corrections to scaling.
\newblock {\em Phys. Rev. Lett.}, 81(4):802, 1998.

\bibitem{mailybaev2012renormalization}
A.~A. Mailybaev.
\newblock {Renormalization and universality of blowup in hydrodynamic flows}.
\newblock {\em Phys. Rev. E}, 85(6):066317, 2012.

\bibitem{mailybaev2012c}
A.~A. Mailybaev.
\newblock {Bifurcations of blowup in inviscid shell models of convective
  turbulence}.
\newblock {\em Nonlinearity}, 26:1105--1124, 2013.

\bibitem{mailybaev2015continuous}
A.~A. Mailybaev.
\newblock Continuous representation for shell models of turbulence.
\newblock {\em Nonlinearity}, 28(7):2497--2514, 2015.

\bibitem{majda2002vorticity}
A.~J. Majda and A.~L. Bertozzi.
\newblock {\em {Vorticity and Incompressible Flow}}.
\newblock Cambridge University Press, 2002.

\bibitem{miyazaki2010classical}
T.~Miyazaki, W.~Kubo, Y.~Shiga, T.~Nakano, and T.~Gotoh.
\newblock Classical and quantum turbulence.
\newblock {\em Physica D}, 239(14):1359--1366, 2010.

\bibitem{moore2009introduction}
R.~E. Moore, R.~B. Kearfott, and M.~J. Cloud.
\newblock {\em {Introduction to Interval Analysis}}, volume 110.
\newblock Siam, 2009.

\bibitem{mou1993spherical}
C.-Y. Mou and P.~B. Weichman.
\newblock Spherical model for turbulence.
\newblock {\em Phys. Rev. Lett.}, 70(8):1101, 1993.

\bibitem{mouri2002probability}
H.~Mouri, M.~Takaoka, A.~Hori, and Y.~Kawashima.
\newblock Probability density function of turbulent velocity fluctuations.
\newblock {\em Phys. Rev. E}, 65(5):056304, 2002.

\bibitem{ohkitani1989temporal}
K.~Ohkitani and M.~Yamada.
\newblock {Temporal intermittency in the energy cascade process and local
  Lyapunov analysis in fully developed model of turbulence}.
\newblock {\em Prog. Theor. Phys.}, 81(2):329--341, 1989.

\bibitem{okamoto2008generalization}
H.~Okamoto, T.~Sakajo, and M.~Wunsch.
\newblock {On a generalization of the Constantin--Lax--Majda equation}.
\newblock {\em Nonlinearity}, 21(10):2447, 2008.

\bibitem{onsager1949statistical}
L.~Onsager.
\newblock Statistical hydrodynamics.
\newblock {\em Il Nuovo Cimento}, 6:279--287, 1949.

\bibitem{pearson2002measurements}
B.~R. Pearson, P.-\AA. Krogstad, and W.~van~de Water.
\newblock Measurements of the turbulent energy dissipation rate.
\newblock {\em Phys. Fluids}, 14(3):1288--1290, 2002.

\bibitem{pedlosky2013geophysical}
J.~Pedlosky.
\newblock {\em Geophysical Fluid Dynamics}.
\newblock Springer, 2013.

\bibitem{petrossian1997sound}
A.~Petrossian and J.-F. Pinton.
\newblock Sound scattering on a turbulent, weakly heated jet.
\newblock {\em J. Phys. II}, 7(5):801--812, 1997.

\bibitem{pierotti1997intermittency}
D.~Pierotti.
\newblock {Intermittency in the large-N limit of a spherical shell model for
  turbulence}.
\newblock {\em Europhys. Lett.}, 37(5):323, 1997.

\bibitem{pumir1990collapsing}
A.~Pumir and E.~Siggia.
\newblock {Collapsing solutions to the 3-D Euler equations}.
\newblock {\em Phys. Fluids A}, 2:220--241, 1990.

\bibitem{robinson2016three}
J.~Robinson, J.~Rodrigo, and W.~Sadowski.
\newblock {\em {The Three-Dimensional Navier--Stokes Equations: Classical
  Theory}}, volume 157.
\newblock Cambridge University Press, 2016.

\bibitem{ruelle1979microscopic}
D.~Ruelle.
\newblock Microscopic fluctuations and turbulence.
\newblock {\em Phys. Lett. A}, 72(2):81--82, 1979.

\bibitem{saw2016experimental}
E.-W. Saw, D.~Kuzzay, D.~Faranda, A.~Guittonneau, F.~Daviaud,
  C.~Wiertel-Gasquet, V.~Padilla, and B.~Dubrulle.
\newblock Experimental characterization of extreme events of inertial
  dissipation in a turbulent swirling flow.
\newblock {\em {Nat. Commun.}}, 7:12466, 2016.

\bibitem{schechter2004introduction}
M.~Schechter.
\newblock {\em {An Introduction to Nonlinear Analysis}}.
\newblock Cambridge University Press, 2004.

\bibitem{suzuki2005energy}
E.~Suzuki, T.~Nakano, N.~Takahashi, and T.~Gotoh.
\newblock Energy transfer and intermittency in four-dimensional turbulence.
\newblock {\em Phys. Fluids}, 17(8):081702, 2005.

\bibitem{tao2016finite}
T.~Tao.
\newblock {Finite time blowup for Lagrangian modifications of the
  three-dimensional Euler equation}.
\newblock {\em Ann. PDE}, 2:9, 2016.

\bibitem{uhlig1997singularities}
C.~Uhlig and J.~Eggers.
\newblock {Singularities in cascade models of the Euler equation}.
\newblock {\em Z. Phys. B Con. Mat.}, 103(1):69--78, 1997.

\bibitem{vasseur2019blow}
A.~Vasseur and M.~Vishik.
\newblock {Blow-up solutions to 3D Euler are hydrodynamically unstable}.
\newblock {\em arXiv preprint arXiv:1908.05766}, 2019.

\bibitem{yamamoto2012local}
T.~Yamamoto, H.~Shimizu, T.~Inoshita, T.~Nakano, and T.~Gotoh.
\newblock Local flow structure of turbulence in three, four, and five
  dimensions.
\newblock {\em Phys. Rev. E}, 86(4):046320, 2012.

\bibitem{zakharov1997hamiltonian}
V.~E. Zakharov and E.~A. Kuznetsov.
\newblock Hamiltonian formalism for nonlinear waves.
\newblock {\em Phys.-Uspekhi}, 40(11):1087--1116, 1997.

\end{thebibliography}
\end{document}